\title{Implementing Bogoliubov Transformations Beyond the Shale--Stinespring Condition}
\author{
Sascha Lill\footnote{Universit\`a degli Studi di Milano, Dipartimento di Matematica, Via Cesare Saldini 50, 20133 Milano, Italy}\ \footnote{E-Mail: sascha.lill@unimi.it}~
} 
\date{\today}
\newcommand*\circled[1]{\tikz[baseline=(char.base)]{
            \node[shape=circle,draw,inner sep=0, outer sep = 0] (char) {#1};}}	
\newcommand{\be}{\boldsymbol{e}}
\newcommand{\bof}{\boldsymbol{f}}
\newcommand{\bg}{\boldsymbol{g}}
\newcommand{\bh}{\boldsymbol{h}}
\newcommand{\bp}{\boldsymbol{p}}
\newcommand{\br}{\boldsymbol{r}}
\newcommand{\bs}{\boldsymbol{s}}
\newcommand{\bx}{\boldsymbol{x}}
\newcommand{\boF}{\boldsymbol{F}}
\newcommand{\bG}{\boldsymbol{G}}
\newcommand{\boeta}{\boldsymbol{\eta}}
\newcommand{\bphi}{\boldsymbol{\phi}}
\newcommand{\cA}{\mathcal{A}}
\newcommand{\cD}{\mathcal{D}}
\newcommand{\cE}{\mathcal{E}}
\newcommand{\cI}{\mathcal{I}}
\newcommand{\cM}{\mathcal{M}}
\newcommand{\cQ}{\mathcal{Q}}
\newcommand{\cS}{\mathcal{S}}
\newcommand{\cV}{\mathcal{V}}
\newcommand{\fc}{\mathfrak{c}}
\newcommand{\fh}{\mathfrak{h}}
\newcommand{\fr}{\mathfrak{r}}
\newcommand{\sF}{\mathscr{F}}
\newcommand{\sH}{\mathscr{H}}
\newcommand{\CCC}{\mathbb{C}}
\newcommand{\NNN}{\mathbb{N}}
\newcommand{\RRR}{\mathbb{R}}
\newcommand{\UUU}{\mathbb{U}}
\newcommand{\ZZZ}{\mathbb{Z}}
\newcommand{\crit}{{\rm crit}}
\newcommand{\Cseq}{{\rm Cseq}}
\newcommand{\di}{{\rm d}}
\newcommand{\dom}{{\rm dom}}
\newcommand{\eRen}{\mathrm{eRen}}
\newcommand{\ex}{{\rm ex}}
\newcommand{\Pol}{\mathrm{Pol}}
\newcommand{\reg}{{\rm reg}}
\newcommand{\sgn}{\mathrm{sgn}}
\newcommand{\Ren}{\mathrm{Ren}}
\newcommand{\vecspan}{\mathrm{span}}
\newcommand{\tr}{\mathrm{tr}}
\newcommand{\uni}{{\rm uni}}
\newcommand{\FB}{\overline{\mathscr{F}}}
\newcommand{\cAB}{\overline{\mathcal{A}}}
\newcommand{\Sdot}{\dot{\mathscr{S}}^\infty}
\newcommand{\Qdot}{\dot{\mathcal{Q}}}
\newtheorem{theorem}{Theorem}[section]
\newtheorem{lemma}[theorem]{Lemma}
\newtheorem{proposition}[theorem]{Proposition}
\theoremstyle{definition}
\newtheorem{remark}[theorem]{Remark}
\newtheorem{definition}[theorem]{Definition}
\begin{document}
\maketitle
\begin{abstract}
We define infinite tensor product spaces that extend Fock space, and allow for implementing Bogoliubov transformations which violate the Shale or Shale--Stinespring condition. So an implementation on the usual Fock space would not be possible. Both the bosonic and fermionic case are covered. Conditions for implementability in an extended sense are stated and proved. From these, we derive conditions for a quadratic Hamiltonian to be diagonalizable by a Bogoliubov transformation that is implementable in the extended sense.
We apply our results to Bogoliubov transformations from quadratic bosonic interactions and BCS models, where the Shale or Shale--Stinespring condition is violated, but an extended implementation nevertheless works.\\

\medskip

\noindent Key words: Non-perturbative renormalization; dressing transformations; infinite tensor product spaces; Fock space extensions; Bogoliubov transformations; quadratic Hamiltonians.
\end{abstract}

\newpage
\tableofcontents
\newpage

\section{Introduction}	
\label{sec:intro}

In quantum many-body systems and non-perturbative Quantum Field Theory (QFT), one often encounters situations in which a formal expression $ H $ for a Hamiltonian is given in the physics literature, but can a priori not be interpreted as a self-adjoint operator that generates dynamics on a Hilbert space. Over the past decades, a plethora of mathematical tools has been conceived to overcome this issue \cite{Dedushenko2022, Summers2012}.\\
An established but little investigated tool in this area are Fock space extensions, such as the \textbf{infinite tensor product} (ITP) framework, introduced by von Neumann \cite{vonNeumann1939}. There have been attempts to apply this framework to Quantum Electrodynamics (QED) scattering theory \cite{Chung1965, Kibble1986, KulishFaddeev1970}. Rigorous results exist on the implementation of Weyl transformations (i.e., linear exponents) that are not implementable on the usual Fock space \cite{Blanchard1969, Froehlich1973, KoenenbergMatte2014}.\\

The present paper takes the step from linear to quadratic exponents, i.e., from Weyl to Bogoliubov transformations: We derive conditions which ensure that a Bogoliubov transformation, which is not implementable on the usual Fock space, can nevertheless be implemented on a suitable ITP space.\\
Non-implementable Weyl- and Bogoliubov transformations are both a well-researched topic in the $ C^*$-algebraic formulation of quantum dynamics \cite{DerezinskiGerard2013}. Therefore, they serve as an ideal test area for mathematical tools that may ultimately turn out advantageous for more sophisticated operator transformations.\\

Let us explain a bit more precisely, how the implementation of Bogoliubov transformations via ITPs works. Roughly speaking, a Bogoliubov transformation $ \cV = \left( \begin{smallmatrix} u & v \\ \overline{v} & \overline{u} \end{smallmatrix} \right) $ replaces annihilation operators $ a^\dagger(f), a(f) $ by
\begin{equation*}
	 b^\dagger(f) = a^\dagger(u f) + a(v \overline{f}) \;, \qquad
	 b(f) = a^\dagger(v \overline{f}) + a(u f) \;,
\end{equation*}
with $ f $ being an element of the one-particle Hilbert space $ \fh $, where $ u,v $ are linear operators on $ \fh $ and where $ \overline{f} $ is the complex conjugate of $ f $. This replacement may diagonalize quadratic Hamltonians $ H $ \cite{Araki1968, BachBru2016, NamNapiorkowskiSolovej2016, MatsuzawaSasakiUsami2021}. Related transformations allow for eliminating inconvenient terms of higher order in more sophisticated $ H $ \cite{Schlein2019, Giacomelli2022short, Giacomelli2022, FalconiGiacomelliHainzlPorta2021}.\\
It is desirable to find a unitary operator $ \UUU_\cV $ on Fock space $ \sF $, such that $ \UUU_\cV $ establishes the replacement $ a^\sharp \mapsto b^\sharp $ via
\begin{equation}
	\UUU_\cV a^\dagger(f) \UUU_\cV^* = b^\dagger(f) \;, \qquad
	\UUU_\cV a(f) \UUU_\cV^* = b(f) \;.
\end{equation}
In that case we say that $ \UUU_\cV $ implements the transformation $ \cV $ and we call $ \cV $ ``implementable'' (in the regular sense). It is well-known \cite{Shale1962, Ruijsenaars1977} that $ \cV $ is implementable, if and only if the {\bf Shale condition} (bosonic case) or the {\bf Shale--Stinespring condition} (fermionic case) holds, which asserts that $ \tr(v^*v) < \infty $.
Situations with non-implementable Bogoliubov transformations occur, for instance, in relativistic models \cite{TorreVaradarajan1999, Marecki2003}, and within many-body systems of infinite size \cite{BardeenCooperSchrieffer1957, Haag1962, Derezinski2017}.\\

We prove that under certain conditions, $ \cV $ is nevertheless implementable on certain ITP spaces $ \widehat{\sH} = \prod_{k \in \NNN}^\otimes \sH_k $, which extend all of $ \sF $, while being a sum of typically uncountably many spaces that are naturally isomorphic to $ \sF $. We further elucidate the structure of $ \widehat{\sH} $ in the end of Section \ref{subsec:infinitetensorprod}, as well as in Appendix~\ref{sec:resultsITP}.\\

Also, the way how we diagonalize formal Hamiltonians $ H $ by implementers $ \UUU_\cV $ differs from the usual procedure on Fock space: For some formal $ H $ consisting of a product of $ a^\dagger $- and $ a $-operators, we aim at defining (see Figure \ref{fig:Dressing_Fockspace})
\begin{equation}
	\widetilde{H} = \UUU_\cV^{-1} (H + c) \UUU_\cV
\label{eq:renormalizedhamiltonian}
\end{equation}
on a dense subspace $ \cD_\sF \subseteq \sF $, where $ \widetilde{H} $ is the version of $ H $ with $ a^\sharp $ replaced by $ b^\sharp $ and with normal ordering applied. The ``renormalization constant'' $ c $ stems from normal ordering and can be infinite. While \eqref{eq:renormalizedhamiltonian} may not always be achieved, we can still split $ H = \sum_{n \in \NNN} H^{(n)} $ and $ c = \sum_{n \in \NNN} c^{(n)} $, such that
\begin{equation}
	\widetilde{H} = \sum_{n \in \NNN} \UUU_\cV^{-1} (H^{(n)} + c^{(n)}) \UUU_\cV \;.
\label{eq:renormalizedhamiltonian2}
\end{equation}
That is, we define an operator $ \UUU_\cV: \cD_\sF \to \widehat{\sH} $ such that $ (H^{(n)} + c^{(n)}) $ maps the space $ \UUU_\cV[\cD_\sF] \subset \widehat{\sH} $ into itself. Further, $ \widetilde{H} $ allows for a self-adjoint extension and can thus be seen as the well-defined renormalized version of $ H $.
\begin{figure}
	\centering
	\scalebox{1.0}{\begin{tikzpicture}
 \filldraw[thick,fill = blue!10!white] (4,-0.2) rectangle ++ (2,1.4);
 \filldraw[thick,fill = gray!30!white] (4.9,0.5) ellipse (0.8 and 0.4) node {$ \mathcal{D}_\mathscr{F} $};
 \draw[blue] (4.5,1.2) -- ++(0.5,0.3) node[anchor = south] {Fock space $ \mathscr{F} $};

 \filldraw[fill = blue!05!white] (8.7,-0.3) rectangle ++(4.5,2.3);
 \filldraw[thick,fill = blue!10!white] (9,-0.2) rectangle ++ (2,1.4);
 \filldraw[thick,dashed,fill = gray!30!white] (12.2,1.2) ellipse (0.8 and 0.4) node {$ \mathbb{U}_{\mathcal{V}}[\mathcal{D}_\mathscr{F}] $};
 \filldraw[thick,fill = gray!30!white] (9.9,0.5) ellipse (0.8 and 0.4) node {$ \mathcal{D}_\mathscr{F} $};
 \draw[line width = 2,->] (11.5,1.4) .. controls (11.2,1.5) and (10.8,1.4) .. (10.5,0.8);
 \node at (10.8,1.6) {$ \mathbb{U}_{\mathcal{V}}^{-1} $};
 \draw[line width = 2,<-] (11.5,1.2) .. controls (11.2,1.3) and (10.8,1.1) .. (10.6,0.6);
 \node at (11.3,0.8) {$ \mathbb{U}_{\mathcal{V}} $};
 \draw[line width = 2,->] (12.7,0.9) .. controls (13.8,0.5) and (13.8,1.6) .. (12.8,1.4);
 \node at (13.6,1.7) {\scriptsize $ (H^{(n)} \! + c^{(n)}) $};
 \draw[blue] (13.2,0) -- ++(0.4,0.3) node[anchor = west] {$ \widehat{\mathscr{H}} $};

\end{tikzpicture}}
	\caption{Renormalization of $ H $ using ITPs.}
	\label{fig:Dressing_Fockspace}
\end{figure}

Our {\bf main result} is that in the extended sense, specified in Definition \ref{def:implementation}, $ \cV $ can indeed be implemented on $ \widehat{\sH} $ in the bosonic (Theorem \ref{thm:bosoniccountable}) and the fermionic case (Theorem \ref{thm:fermioniccountable}) if the spectrum of the operator $ v^*v $ is countable. We may then naturally extend the implementer to a unitary operator $ \UUU_\cV: \widehat{\sH} \to \widehat{\sH} $, see Remark \ref{rem:ITPunitary}.\\

It is worth mentioning that within the \textbf{extended state space} (ESS) framework~\cite{lill}, a similar result was recently proven even for arbitrary $ v^*v $ in the bosonic case~\cite{lillproc}. In Appendix~\ref{app:ESS}, we briefly discuss a much simpler ESS construction for $ v^*v $ having discrete spectrum, which allows for an extended implementation in both the bosonic and fermionic case (Propositions~\ref{prop:bosoniccountableESS} and~\ref{prop:fermioniccountableESS}).\\
With respect to ITPs, the fermionic ESS implementation requires the additional restriction to a finite number of particle--hole transformed modes. Nevertheless, we expect a fermionic ESS construction, similar to~\cite{lillproc}, to also be successful for generic $ v^*v $.\\
By contrast, the ITP construction cannot be expected to achieve an implementation for generic $ v^*v $: Still, the ITP space would be well-defined as $ \prod_{x \in X}^\otimes \sH_x $ with $ X $ being a possibly uncountable set related to $ \sigma(v^*v) $. However, $ a^*(f) = \sum_x f(x) a^*_x $ would only be defined for countable sums in $ x $. So $ a^*(f) $ would be ill-defined unless $ f(x) $ is everywhere 0 apart from countably many $ x $.\\

The ultimate goal would be to implement more general operator transformations $ W $ such that
\begin{equation}
	\widetilde{H} = W^{-1} (H + c) W \;,
\end{equation}
is well-defined on $ \cD_\sF \subseteq \sF $ and allows for a self-adjoint extension. Here, $ c $ is a general counterterm and not necessarily just a constant. Transformations $ W $ as above arise from non-perturbative cutoff renormalization \cite{Nelson1964, Glimm1968, Derezinski2003, GlimmJaffeI, GlimmJaffeY2I, Gross1973}, when formally removing the IR- or UV-cutoff from the employed dressing transformations. In contrast to cutoff renormalization, the direct renormalization by Fock space extensions does not involve limiting processes or cutoffs that break Lorentz invariance. A similar cutoff-free non-perturbative renormalization technique, also known as ``interior--boundary conditions'' (IBC), has recently been proposed and investigated \cite{I00b, I02, I03, I04, I06, I05, I11, I10, I07}. However, IBC renormalization is limited to cases where the free and interacting Hamiltonian can be defined in the same CCR/CAR representation. This can be a severe restriction, for instance, in relativistic models\footnote{In relativistic QFT, Haag's theorem forbids a common representation for free and interacting dynamics \cite[Sect.~II.1]{Haag1996}. But such a common representation may also fail to exist in more general situations, like in polaron models \cite{Gross1973} or when taking the thermodynamic limit of many-body systems \cite[Ch.~17]{DerezinskiGerard2013}.}, and renormalization by Fock space extensions is designed to overcome this issue.\\

The rest of this paper is structured as follows: In Section \ref{sec:def}, we give the basic definitions of second quantization and the ITP framework.
Section \ref{sec:bogoliubovtrafo} recaps known properties on Bogoliubov transformations that are needed for the extended implementation.
In Section \ref{sec:extensionbogoliubov}, we define the $ \cV $-dependent ITP spaces $ \widehat{\sH} $ and prove that creation and annihilation operators are well-defined on them (Lemma \ref{lem:aadaggerexist}).
On these ITP spaces, we define implementability in Section \ref{sec:implementation}, and prove that it is satisfied (Theorems \ref{thm:bosoniccountable} and \ref{thm:fermioniccountable}).
Section \ref{sec:diagonalization}, is devoted to the diagonalization of Hamiltonians in the extended sense (Propositions \ref{prop:diagonalizableboson} and \ref{prop:diagonalizablefermion}).
In Section \ref{sec:applications} we examine two examples for a diagonalization in the extended sense.\\


\section{Basic Definitions}
\label{sec:def}

\subsection{Fock Space Notions}
\label{subsec:fockspace}

We consider a measure space $ (X, \mu) $ with $ X \subseteq \RRR^d $, where we focus on $ X = \RRR^d $ and $ X = \NNN $. A configuration of $ N \in \NNN_0 $ particles is given by the tuple $ q = (\bx_1, \ldots, \bx_N) $, which is an element of the ordered configuration space
\begin{equation}
	\cQ(X) := \bigsqcup_{N = 0}^\infty \cQ(X)^{(N)} := \bigsqcup_{N = 0}^\infty X^N \;.
\label{eq:configdef}
\end{equation}
$ \cQ(X) $ allows for a standard topology and a measure $ \mu_N $ on each sector $ \cQ(X)^{(N)} $, hence yielding a topology and a measure $ \mu_\cQ $ on $ \cQ(X) $. The full Fock space is then
\begin{equation}
	\sF(X) := L^2(\cQ(X),\mu_\cQ) \;.
\label{eq:fockdef}
\end{equation}
The corresponding scalar product is $ \langle \Phi, \Psi \rangle := \int_{\cQ(X)} \overline{\Phi(q)} \Psi(q) \; \di q $ with $ \Phi, \Psi \in \sF(X) $ and the overline denoting complex conjugation. For $ \Psi \in C_0(\cQ(X)) $, a unique continuous representative function exists. This includes smooth functions $ \Psi \in C^\infty(\cQ(X)) $ and smooth functions with compact support $ \Psi \in C_c^\infty(\cQ(X)) $.\\
For describing bosonic/fermionic particle exchange symmetries, we introduce the symmetrization operators $ S_+, S_-: \sF(X) \to \sF(X) $ defined by
\begin{equation}
	(S_\pm \Psi)(\bx_1, \ldots, \bx_N) := \frac{1}{N!} \sum_{\sigma \in S_N} (\pm 1)^{(1-\sgn(\sigma))/2} \Psi(\bx_{\sigma(1)}, \ldots, \bx_{\sigma(N)}) \;,
\label{eq:symmetricdef}
\end{equation}
with permutation group $ S_N $. Here, $ (1-\sgn(\sigma))/2 $ is 0 if the permutation is even, and 1 if it is odd. The bosonic ($ + $) and fermionic ($ - $) Fock space is given by
\begin{equation}
	\sF_\pm(X) := S_\pm [\sF(X)] \;.
\label{eq:symmetricfockdef}
\end{equation}
The $ (N) $-sectors of Fock space are $ \sF(X)^{(N)}:= L^2(\cQ(X)^{(N)}, \CCC) $, with symmetrized and antisymmetrized analogues $ \sF_\pm^{(N)} $. Note that we may equivalently write
\begin{equation}
	\sF(X) := \bigoplus_{N = 0}^\infty \sF(X)^{(N)} \;, \qquad
	\sF(X)^{(N)} := \underbrace{\fh \otimes \ldots \otimes \fh}_{N \text{ times}} \;,
\end{equation}
with $ \fh := L^2(X) $. A change of basis then allows us to identify $ \fh $ with $ \ell^2 = L^2(\NNN) $ (or a subspace thereof), thus identifying $ \sF(X) $ with $ \sF(\NNN) $. In the following, we drop the $ (X) $ if not explicitly needed.\\

Creation and annihilation operators $ a^\dagger(f), a(f) $ for some $ f \in \fh $ can be defined by using $ q \setminus \bx_j \in X^{N-1} $ for denoting the removal of one particle $ \bx_j $ from a configuration $ q \in X^N $:
\begin{equation}
\begin{aligned}
	(a_\pm^\dagger(f) \Psi)(q) &= \sum_{j = 1}^N \frac{(\pm 1)^j}{\sqrt{N}} f(\bx_j) \Psi(q \setminus \bx_j) \;,\\
	(a_\pm(f) \Psi)(q) &= \sqrt{N+1} \int \overline{f(\bx)} \Psi(q,\bx) \; \di \mu(\bx) \;.
\end{aligned}
\label{eq:aadagger}
\end{equation}
It is well-known that the fermionic operators $ a_-, a^\dagger_- $ are bounded and hence defined on all $ \Psi \in \sF_- $, while the bosonic $ a_+, a^\dagger_+ $ are unbounded, but can still be defined on a dense subspace of $ \sF_+ $. Further, \eqref{eq:aadagger} implies the canonical commutation/anticommutation relations (CCR/CAR):
\begin{equation}
	[a_\pm(f),a_\pm^\dagger(g)]_\pm = \langle f,g \rangle_\fh \;, \qquad
	[a_\pm(f),a_\pm(g)]_\pm = 0 = [a_\pm^\dagger(f),a_\pm^\dagger(g)]_\pm \;,
\label{eq:CARCCR}
\end{equation}
with commutator $ [A,B]_+ = [A,B] = AB-BA $ and anticommutator $ [A,B]_- = \{A,B\} = AB+BA $. In the following, we will drop the indices $ \pm $ if there is no risk of confusion.\\
It is also customary to just consider $ a(f), a^\dagger(f) $ not as operators, but as formal expressions within a $ ^* $-algebra\footnote{Here, for $ \cA_- $, the term ``generated'' comprises taking the closure.}
\begin{equation}
	\cA = \cA_\pm \quad \text{generated by} \quad
	\{a_\pm(f), a_\pm^\dagger(f) \; \mid \; f \in \fh \} \;.
\label{eq:cA}
\end{equation}
The involution is given by $ a(f)^* = a^\dagger(f) $ and the multiplication in $ \cA $ is such that the CCR/CAR hold. In particular, $ \cA_- $ is a $ C^*$-algebra by boundedness of operators.\\

\subsection{Infinite Tensor Products}
\label{subsec:infinitetensorprod}

In this subsection, we give a quick introduction to general ITPs, as introduced by von Neumann \cite{vonNeumann1939}, see also \cite{BerezanskyKondratiev}. Some useful lemmas and remarks concerning this construction are given in Appendix \ref{sec:resultsITP}. For a more thorough discussion, we refer the reader to \cite{vonNeumann1939}.\\

We consider a (possibly uncountable) index set $ I $, and for each $ k \in I $ a Hilbert space $ \sH_k $ with scalar product $ \langle \cdot , \cdot \rangle_k $ and induced norm $ \Vert \cdot \Vert_k $. The aim is to construct a vector space, which is generated formally by ITPs
\begin{equation}
	\Psi = \prod_{k \in I}^\otimes \Psi_k \;,
\end{equation}
or equivalently, by families $ (\Psi) = (\Psi_k)_{k \in I} , \; \Psi_k \in \sH_k $. If $ I $ is countable, then $ (\Psi) = (\Psi_1, \Psi_2, \ldots ) $ defines a sequence. Each family gets assigned the formal expression
\begin{equation}
	\Vert (\Psi) \Vert := \prod_{k \in I} \Vert \Psi_k \Vert_k \;,
\label{eq:infinitenorm}
\end{equation}
which we will later use for defining a norm. In order to answer the question, whether \eqref{eq:infinitenorm} defines a complex number, one introduces the notions of convergence within a (possibly uncountable) sum:
\begin{itemize}
\item For $ z_k \in \CCC, \; k \in I $, we call $ \sum_{k \in I} z_k $ or $ \prod_{k \in I} z_k $ convergent to $ a \in \CCC $, if for all $ \delta > 0 $, there exists some finite set $ I_\delta $, such that for all finite sets $ J \subseteq I $ with $ I_\delta \subseteq J $, we have
\begin{equation}
	\left\vert a -\sum_{k \in J} z_k \right\vert \le \delta \qquad 
	\text{or} \qquad
	\left\vert a -\prod_{k \in J} z_k \right\vert \le \delta \;, \qquad \text{respectively} \;.
\end{equation}
\end{itemize}
A simple consequence of this definition is that $ \sum_{k \in I} z_k $ can only converge if $ z_k \neq 0 $ occurs for only countably many $ k \in I $. So the question of convergence reduces to that of sequence convergence. Further, it is shown in \cite{vonNeumann1939} that $ \prod_{k \in I} z_k < \infty $ if and only if we have $ z_k = 0 $ for at least one $ k \in I $ or if $ \sum_{k \in I} \vert z_k - 1 \vert < \infty $. The heuristic reason is that $ \prod_{k \in I} z_k = \exp{\left( \sum_{k \in I} \ln{z_k} \right)} $ and $ \ln{z_k} $ can be linearly approximated near $ 1 $ as $ \ln{z_k} = 1 - z_k + \mathcal{O}((1-z_k)^2) $.\\
In case $ \prod_{k \in I} |z_k| $ converges to a nonzero number, then $ \prod_{k \in I} z_k $ converges if and only if no infinite phase variation occurs. That is, if $ \mathrm{arg}(z_k) \in (-\pi, \pi] $ is the phase of the complex number $ z_k $, then it is required that
\begin{equation}
	\sum_{k \in I} |\mathrm{arg}(z_k)| < \infty \;.
\label{eq:infinitephysecondition}
\end{equation}
In order to establish a notion of convergence, even when \eqref{eq:infinitephysecondition} is violated, one defines that $ \prod_{k \in I} z_k $ is \emph{quasi-convergent} if and only if $ \prod_{k \in I} |z_k| $ converges.
A family $ (\Psi) = (\Psi_k)_{k \in I} $ is now called a
\begin{itemize}
\item $ C ${\bf -sequence} ($ (\Psi) \in \Cseq $) if and only if $ \prod _{k \in I} \Vert \Psi_k \Vert_k < \infty $ ,
\item $ C_0 ${\bf -sequence} if and only if $ \sum _{k \in I} \left\vert \Vert \Psi_k \Vert_k - 1 \right\vert < \infty \quad \Leftrightarrow \quad \sum _{k \in I} \left\vert \Vert \Psi_k \Vert_k^2 - 1 \right\vert < \infty $ .
\end{itemize}
Each $ C_0 $-sequence is also a $ C $-sequence. For all $ C $-sequences, we have a well-defined value $ \Vert (\Psi) \Vert \in \CCC $ by \eqref{eq:infinitenorm} and each $ C $-sequence, that is not a $ C_0 $-sequence, must automatically satisfy $ \Vert (\Psi) \Vert = 0 $. However, \eqref{eq:infinitenorm} only defines a seminorm, since there exist $ (\Psi) \neq 0 $ with $ \prod_{k \in I} \Vert \Psi_k \Vert_k = 0 $. To make it a norm, one defines
\begin{itemize}
\item $ \prod^{\circled{$\otimes$}}_{k \in I} \sH_k $ as the space of all functionals on $ \Cseq $ which are conjugate-linear in each component.
\end{itemize}
Following \cite{vonNeumann1939}, we can embed $ \iota: \Cseq \to \prod^{\circled{$\otimes$}}_{k \in I} \sH_k $ by identifying $ (\Phi) \in \Cseq $ with the functional
\begin{equation}
	\Phi = \iota((\Phi)): (\Psi) \mapsto \prod_{k \in I} \langle \Phi_k, \Psi_k \rangle_k \;.
\end{equation}
This identification essentially sets up an equivalence relation $ \sim_{\rm C} $ on $ \Cseq $, where $ (\Phi) \sim_{\rm C} (\Phi') $, whenever $ \iota((\Phi)) = \iota((\Phi')) $. In Proposition \ref{prop:cksequence}, we show that equivalence is given if and only if $ (\Phi) $ and $ (\Phi') $ just differ by a family of complex factors $ (c_k)_{k \in I} $ with $ \prod_{k \in I} c_k = 1 $. The functionals in $ \iota[\Cseq] $ are then the equivalence classes and the span of these functionals is denoted by \cite{vonNeumann1939}:
\begin{itemize}
\item $ \widetilde{\prod}^{\otimes}_{k \in I} \sH_k := \vecspan(\iota[\Cseq])$.
\end{itemize}
In the following, we may drop the embedding map $ \iota $ and simply identify $ (\Phi) $ with $ \Phi $, whenever the identification is obvious. An inner product $ \langle \cdot , \cdot \rangle $ can uniquely be defined on $ \widetilde{\prod}^{\otimes}_{k \in I} \sH_k $ via
\begin{equation}
	\langle \Phi , \Psi \rangle = \prod_{k \in I} \langle \Phi_k, \Psi_k \rangle_k \;,
\label{eq:infinitescalarproduct}
\end{equation}
which is a quasi-convergent product that we understand to be 0 whenever it is not convergent. $ \langle \cdot , \cdot \rangle $ makes $ \widetilde{\prod}^{\otimes}_{k \in I} \sH_k $ a pre-Hilbert space and induces a norm $ \Vert \Phi \Vert $ agreeing with \eqref{eq:infinitenorm} under identification $ \Vert \Phi \Vert = \Vert (\Phi) \Vert $. This norm allows completing $ \widetilde{\prod}^{\otimes}_{k \in I} \sH_k $ to a Hilbert space:
\begin{itemize}
\item The \textbf{infinite tensor product space} $ \widehat{\sH} = \prod^{\otimes}_{k \in I} \sH_k $ is defined as the space of all $ \Phi \in \prod^{\circled{$\otimes$}}_{k \in I} \sH_k $, such that there exists a Cauchy sequence $ (\Phi^{(r)})_{r \in \NNN} \subset \widetilde{\prod}^{\otimes}_{k \in I} \sH_k $ with respect to $ \Vert \cdot \Vert $ that converges to $ \Phi $ in the weak-$ ^* $ topology on $ \prod^{\circled{$\otimes$}}_{k \in I} \sH_k $.
\end{itemize}
One may indeed extend $ \langle \cdot, \cdot \rangle $ to $ \widehat{\sH} $, making the latter a Hilbert space \cite{vonNeumann1939}. In order to further analyze its structure, we divide $ \widehat{\sH} $ into subspaces, For which we split the set of $ C_0 $-sequences into equivalence classes via
\begin{itemize}
\item equivalence: $ (\Phi) \sim (\Psi) \quad : \Leftrightarrow \quad \sum_{k \in I} \left\vert \langle \Phi_k, \Psi_k \rangle - 1 \right\vert < \infty $
\item weak equivalence: $ (\Phi) \sim_w (\Psi) \quad : \Leftrightarrow \quad \sum_{k \in I} \big\vert |\langle \Phi_k, \Psi_k \rangle| - 1 \big\vert < \infty $
\end{itemize}
The respective equivalence classes are called $ C $ and $ C_w $, and the corresponding linear spaces of an equivalence class are
\begin{itemize}
\item $ \prod^{\otimes C}_{k \in I} \sH_k := \overline{\vecspan\{ \Psi \; \mid \; \exists (\Psi) \in C: \iota((\Psi)) = \Psi \}}^{\Vert \cdot \Vert} $ for equivalence
\item $ \prod^{\otimes C_w}_{k \in I} \sH_k := \overline{\vecspan\{ \Psi \; \mid \; \exists (\Psi) \in C_w: \iota((\Psi)) = \Psi \}}^{\Vert \cdot \Vert} $ for weak equivalence
\end{itemize}

Now, each $ C_0 $-sequence $ (\Psi) $ in some equivalence class $ [(\Omega)] = C $ (with $ (\Omega) \in \Cseq $ being interpreted as the vacuum vector) can be written in \textbf{coordinates} as follows \cite[Thm. V]{vonNeumann1939}: Choose an orthonormal basis $ (e_{k,n})_{n \in \NNN_0} $ for each $ \sH_k $, such that $ \Omega_k = e_{k,0} $ (we think of $ e_{k,0} $ as mode $ k $ being in the vacuum). Then, $ (\Psi) = (\Psi_k)_{k \in I} $ is uniquely specified by the coordinates $ c_{k,n} := \langle e_{k,n}, \Psi_k \rangle_k \in \CCC $. In this coordinate representation, it is true that
\begin{itemize}
\item $ \prod^{\otimes C}_{k \in I} \sH_k $ is the closure of the space spanned by all normalized $ C_0 $-sequences, where $ c_{k,0} = 1 $ for all but finitely many $ k \in I $.
\end{itemize}
Or heuristically speaking, ``almost all $ \Psi_k $ are in the vacuum''. By \cite[Thm. V]{vonNeumann1939}, also a generic $ \Psi \in \prod^{\otimes C}_{k \in I} \sH_k $ can be written as
\begin{equation}
	\Psi = \sum_{n(\cdot) \in F} a(n(\cdot)) \prod^\otimes_{k \in I} e_{k,n(k)} \;,
\label{eq:Phiincoordinates}
\end{equation}
with $ F $ being the countable set of all functions $ n:I \to \NNN_0 $ with $ n(k) = 0 $ for almost all $ k \in I $, and $ a(n(\cdot)) \in \CCC $ being the coordinates of $ \Psi $ with $ \sum_{n(\cdot) \in F} |a(n(\cdot))|^2 < \infty $.\\

\section{Bogoliubov Transformations}
\label{sec:bogoliubovtrafo}

In this section, we introduce our notation for Bogoliubov transformations and recap some important properties. Standard references on the subject are \cite{Solovej2014} and \cite{BachLiebSolovej1994}.\\

\subsection{Transformation on Operators}
\label{subsec:transformationsonoperators}

Consider the one-operator subspace $ W_1 $ of $ \cA $, which is linearly spanned by $ a_\pm^\dagger(f), a_\pm(g) $, $ f, g \in \fh $. By an algebraic Bogoliubov transformation, we mean any bijective map $ \cV_A: W_1 \to W_1 $, which sends $ a_\pm^\dagger(f), a_\pm(g) $ to a new set of creation and annihilation operators $ b_\pm^\dagger(f), b_\pm(g) $, such that $ b_\pm^\dagger(f) $ is the adjoint of $ b_\pm(f) $ and the CAR/CCR are conserved under $ \cV_\cA $ and its adjoint.\\
To facilitate the presentation, we fix a basis $ (e_j)_{j \in \NNN} \subset \fh $ which identifies $ f \in \fh $ with an equally denoted vector $ \bof = (f_j)_{j \in \NNN} \in \ell^2 $ by $ f_j := \langle e_j, f \rangle $. This way, we can also identify $ a^\dagger(f) + a(\overline{g}) \in W_1 $ with a vector $ (\bof, \bg) \in \ell^2 \oplus \ell^2 $, so an algebraic Bogoliubov transformation $ \cV_A $ is identified with a linear operator $ \cV $ on $ \ell^2 \oplus \ell^2 $, which we just call ``Bogoliubov transformation''.
We may also encode sums of creation and annihilation operators by vector pairs $ (\bof,\bg) \in \ell^2 \oplus \ell^2 $ via the generalized creation/annihilation operators
\begin{equation}
\begin{aligned}
	A_\pm^\dagger: \ell^2 \oplus \ell^2 &\to \cA_\pm \;,
	&(\bof_1, \bof_2) &\mapsto a_\pm^\dagger(\bof_1) + a_\pm(\overline{\bof_2}) = \sum_j (f_{1,j} a_\pm^\dagger(e_j) + f_{2,j} a_\pm(e_j)) \;,\\
	A_\pm: \ell^2 \oplus \ell^2 &\to \cA_\pm \;,
	&(\bg_1, \bg_2) &\mapsto a_\pm(\bg_1) + a_\pm^\dagger(\overline{\bg_2}) = \sum_j (\overline{g_{1,j}} a_\pm(e_j) + \overline{g_{2,j}} a_\pm^\dagger(e_j)) \;.\\
\label{eq:genAAdagger}
\end{aligned}
\end{equation}
A Bogoliubov transformation is then encoded by a $ 2 \times 2 $ block matrix
\begin{equation}
	\cV = \begin{pmatrix} u & v \\ \overline{v} & \overline{u} \end{pmatrix} \;,
\label{eq:cV}
\end{equation}
with operators $ u,v: \ell^2 \to \ell^2 $. The case of unbounded $ u,v $ is treated later in Section \ref{sec:extensionbogoliubov}. The Bogoliubov transformed operators are then given by 
\begin{equation}
\begin{aligned}
	b_\pm^\dagger(\bof) &= A_\pm^\dagger(\cV (\bof,0)) = a_\pm^\dagger(u \bof) + a_\pm(v \overline{\bof}) \;,\\
	b_\pm(\bg) &= A_\pm(\cV (\bg,0)) =a_\pm^\dagger(v \overline{\bg}) + a_\pm(u \bg) \;.\\
\end{aligned}
\label{eq:BogoliubovTransformation}
\end{equation}

In order for $ \cV $ to be a Bogoliubov transformation, we require that both $ \cV $ and $ \cV^* $ conserve the CAR/CCR, so
\begin{equation}
	[b_\pm(\bof),b_\pm^\dagger(\bg)]_\pm = \langle \bof,\bg \rangle \;, \qquad
	[b_\pm(\bof),b_\pm(\bg)]_\pm = 0 = [b_\pm^\dagger(\bof),b_\pm^\dagger(\bg)]_\pm \;,
\label{eq:CARCCRb}
\end{equation}
and the same, if in \eqref{eq:BogoliubovTransformation} $ \cV $ is replaced by $ \cV^* $. An explicit calculation shows that this conservation is equivalent to the Bogoliubov relations
\begin{equation}
\begin{aligned}
	u^*u \mp v^T \overline{v} &= 1 \;, \qquad
	&u^*v \mp v^T \overline{u} = 0 \;,\\
	u u^* \mp v v^* &= 1 \;, \qquad
	&u v^T \mp v u^T =0 \;,\\
\end{aligned}
\label{eq:Bogoliubovrelations}
\end{equation}
with $ (u^T)_{ij} = u_{ji}, \; (\overline{u})_{ij} = \overline{u_{ij}}, \; (u^*)_{ij} = \overline{u_{ji}} $ and the same for $ v_{ij} $. Since $ v^*v $ is self-adjoint, we have $ v^*v = \overline{v^*v} = v^T \overline{v} $, so the first equation is equivalent to $ u^*u \mp v^*v = 1 $. The generalized creation and annihilation operators also allow for particularly easy ``generalized CAR/CCR'': Using the standard scalar product on $ \boF,\bG \in \ell^2 \oplus \ell^2 $:
\begin{equation}
	\langle \boF, \bG \rangle
	= \Big\langle \begin{pmatrix} \bof_1 \\ \bof_2 \end{pmatrix} , \begin{pmatrix} \bg_1 \\ \bg_2 \end{pmatrix} \Big\rangle 
	= \sum_j ( \overline{f_{1,j}} g_{1,j} + \overline{f_{2,j}} g_{2,j}) \;,
\end{equation}
and $ \cS_- = \mathrm{id} $, $ \cS_+ = \left( \begin{smallmatrix} 1&0 \\ 0&-1 \end{smallmatrix} \right) $, we obtain the generalized CAR/CCR:
\begin{equation}
	\; [A_\pm(\boF), A_\pm^\dagger(\bG)]_\pm = \langle \boF, \cS_\pm \bG \rangle \;, \qquad
	[A_\pm(\boF), A_\pm(\bG)]_\pm = [A_\pm^\dagger(\boF), A_\pm^\dagger(\bG)]_\pm = 0 \;.
\end{equation}

\subsection{Implementation on Fock Space}
\label{subsec:implementation1}

In the above notation, a Bogoliubov transformation is implementable (in the regular sense), if there exists a unitary operator $ \UUU_\cV : \sF \to \sF $, such that
\begin{equation}
	\UUU_\cV A^\dagger(\boF) \UUU_\cV^* = A^\dagger(\cV \boF) \;.
\end{equation}
We recap some of the basic steps of the implementation process presented in \cite{Solovej2014} for $ \tr(v^* v) < \infty $, as they have to be carried out in a slightly modified way for an implementation in the extended sense.\\
The main task within the implementation is to find the \emph{Bogoliubov vacuum} $ \Omega_\cV \in \sF $, which is the vector annihilated by all operators $ b(\bof) $:
\begin{equation}
	b(\bof) \Omega_\cV = (a^\dagger(u \bof) + a(v\overline{\bof})) \Omega_\cV = 0 \qquad \forall \bof \in \ell^2 \;.
\end{equation}
If we can find such an $ \Omega_\cV $, then it is an easy task to transform any product state vector $ a^\dagger(\bof_1) \ldots a^\dagger(\bof_n) \Omega \in \sF $ via:
\begin{equation}
	\UUU_\cV a^\dagger(\bof_1) \ldots a^\dagger(\bof_n) \Omega
	= b^\dagger(\bof_1) \ldots b^\dagger(\bof_n) \Omega_\cV \;.
\label{eq:vacuumtrafo}
\end{equation}
The span of these state vectors (also called algebraic tensor product) is dense within $ \sF $, so we can transform any $ \Psi \in \sF $ by means of \eqref{eq:vacuumtrafo}.\\
To find $ \Omega_\cV $, we decompose the transformation $ \cV $ into modes by constructing vectors $ \bof_j $, such that $ u \bof_j, v \overline{\bof_j} $ are proportional to the same normalized vector $ \bg_j \in \ell^2 $, i.e.
\begin{equation}
	\cV \begin{pmatrix} \bof_j\\0 \end{pmatrix}
	= \begin{pmatrix} u \bof_j\\ \overline{v} \bof_j \end{pmatrix}
	= \begin{pmatrix} \mu_j \bg_j\\ \overline{\nu_j \bg_j} \end{pmatrix} \qquad \Leftrightarrow \qquad \UUU_\cV \; a^\dagger(\bof_j) \; \UUU_\cV^* = \mu_j a^\dagger(\bg_j) + \nu_j a(\bg_j) \;,
\label{eq:fg}
\end{equation}
where $ \mu_j, \nu_j \in \CCC $. If \eqref{eq:fg} holds, we only have to solve $ (\nu_j a^\dagger(\bg_j) + \mu_j a(\bg_j) )\Omega_\cV $ for each $ \bg_j $, separately.\\

\subsubsection{Bosonic Case}
\label{subsubsec:bosoniccaseimplementation}

Here, \eqref{eq:fg} can indeed be fulfilled: Following \cite{Solovej2014}, introducing the complex conjugation $ J \tilde{\bh}_j = J^* \tilde{\bh}_j = \overline{\tilde{\bh}_j} $, the operator $ C := u^* v J $ has an eigenbasis $ (\bof_j) $ with $ C \bof_j = \lambda_j \bof_j, \; \lambda_j \in \CCC $. One may further show that $ \bg_j := \mu_j^{-1} u \bof_j $ defines an orthonormal basis and that $ v \overline{\bof_j} = v J \bof_f = \nu_j \bg_j $ where $ \mu_j, \nu_j \ge 0 $ are fixed by
\begin{equation}
	\mu_j^2 - \nu_j^2 = 1 \;, \qquad
	\lambda_j = \mu_j \nu_j \;.
\end{equation}

The condition $ (\mu_j a(\bg_j) + \nu_j a^\dagger(\bg_j)) \Omega_\cV = 0 $ leads to one recursion relation per mode, which is formally solved by 
\begin{equation}
	\Omega_\cV = \left( \prod_j \left(1 - \tfrac{\nu_j^2}{\mu_j^2} \right)^{1/4} \right) \exp{\left( -\sum_j \frac{\nu_j}{2 \mu_j} (a^\dagger(\bg_j))^2 \right) } \Omega \;.
\label{eq:bosonicbogoliubovvacuum}
\end{equation}
The Shale condition now indicates when $ \Omega_\cV $ lies in Fock space, which is if and only if $ \sum_j \frac{\nu_j^2}{\mu_j^2} < \infty \Leftrightarrow \sum_j \nu_j^2 < \infty $. In that case, the transformation is implemented by \cite[(3.1)]{Schlein2019}:
\begin{equation}
	\UUU_\cV = \exp{ \left(- \sum_j \tfrac{\xi_j}{2} ((a^\dagger(\bg_j))^2 - (a(\bg_j))^2 ) \right) } \UUU_{\bg \bof} =: \prod_{j \in \NNN} \UUU_{j,\cV} \;,
\label{eq:bosonicbogoliubovimplementer}
\end{equation}
\begin{equation}
	\text{with} \qquad \sinh \xi_j := \nu_j \qquad \Rightarrow \qquad \cosh \xi_j := \mu_j \;,
\end{equation}
and where $ \UUU_{\bg \bof}: \sF \to \sF $ is the unitary basis change transformation
\begin{equation}
	\UUU_{\bg \bof}: \bof_{j_1} \otimes \ldots \otimes \bof_{j_n} \mapsto \bg_{j_1} \otimes \ldots \otimes \bg_{j_n} \qquad \forall \; j_1, \ldots, j_n \in \NNN \;.
\label{eq:Ugf}
\end{equation}
For a more general discussion on $ \UUU_\cV $, we refer the reader to \cite[Thm. 16.47]{DerezinskiGerard2013}.\\

\subsubsection{Fermionic Case}
\label{subsubsec:fermioniccaseimplementation}

In the fermionic case, following \cite{Solovej2014}, we can find a common orthonormal eigenbasis $ (\bof_j)_{j \in J} $ of $ C^*C $ (eigenvalues $ \lambda_j^2 $), of $ u^*u $ (eigenvalues $ \mu_j^2 $), and of $ v^*v $ (eigenvalues $ \nu_j^2 = 1 - \mu_j^2 $). The index set can be split as $ J = J' \cap J'' \subseteq \NNN $ with $ J'' $ containing all indices of zero eigenvectors and $ J' $ those of nonzero eigenvectors. Further, the $ j \in J' $ can be rearranged in pairs as $ J' := \{j \; \mid \; j=2i \lor j=2i-1, \; i \in I'\} $ such that
\begin{equation}
	C \bof_{2i} = \lambda_{2i} \bof_{2i-1} \;, \qquad
	C \bof_{2i-1} = - \lambda_{2i} \bof_{2i} \;,
\label{eq:fermionicpairs}
\end{equation}
for some $ I' \subseteq \NNN $. For $ j \in J'' $, two cases may occur: If $ \nu_j = 1 $, then we have a particle--hole transformation, for which we write $ j \in J''_1 $ and get
\begin{equation}
	\cV \begin{pmatrix} \bof_j \\ 0 \end{pmatrix} = \begin{pmatrix} 0 \\ \overline{\boeta_j} \end{pmatrix} \;, \qquad
	j \in J''_1 \;,
\label{eq:fermionictrafo1}
\end{equation}
for a suitable choice of the phase $ \boeta_j = e^{i \varphi} v \overline{\bof_j} $. The case $ \nu_j = 0 $ will be denoted by $ j \in J''_0 = J'' \setminus J''_1 $, and we have
\begin{equation}
	\cV \begin{pmatrix} \bof_j \\ 0\end{pmatrix} = \begin{pmatrix} \boeta_j \\ 0 \end{pmatrix} \;, \qquad
	j \in J''_0 \;,
\label{eq:fermionictrafo2}
\end{equation}
for a suitable phase choice of $ \boeta_j = e^{i \varphi} u \bof_j $. In case $ j \in J' $ with $ i \in I' \Rightarrow \lambda_{2i} \neq 0, \mu_{2i} \neq 0 $ (Cooper pair), we may define the normalized vectors
\begin{equation}
	\boeta_{2i} := \alpha_i^{-1} \; u \bof_{2i} \;, \qquad 
	\boeta_{2i-1} := \alpha_i^{-1} \; u \bof_{2i-1} \;,
\end{equation}
where $ \alpha_i, \beta_i > 0 $, $ \alpha_i = \mu_{2i} = \mu_{2i-1} $, $ \alpha_i^2 + \beta_i^2 = 1 $, and for which
\begin{equation}
	\cV \begin{pmatrix} \bof_{2i} \\ 0\end{pmatrix} = \begin{pmatrix} \alpha_i \; \boeta_{2i} \\ \overline{\beta_i \; \boeta_{2i-1}} \end{pmatrix} \;, \qquad
	\cV \begin{pmatrix} \bof_{2i-1} \\ 0\end{pmatrix} = \begin{pmatrix} \alpha_i \; \boeta_{2i-1} \\ \overline{-\beta_i \; \boeta_{2i}} \end{pmatrix} \;, \qquad i \in I' \;.
\label{eq:fermionictrafo3} 
\end{equation}
Relations \eqref{eq:fermionictrafo1}, \eqref{eq:fermionictrafo2} and \eqref{eq:fermionictrafo3} now replace \eqref{eq:fg}. The Bogoliubov vacuum is
\begin{equation}
	\Omega_\cV = \left( \prod_{j \in J''_1} a^\dagger(\boeta_j) \right) \left( \prod_{i \in I'} (\alpha_i - \beta_i a^\dagger(\boeta_{2i}) a^\dagger(\boeta_{2i-1}) ) \right) \Omega \;,
\label{eq:fermionifbogoliubovvacuum}
\end{equation}
and the implementer is
\begin{equation}
\begin{aligned}
	\UUU_\cV &= \left( \prod_{j \in J''_1} (a^\dagger(\boeta_j) + a(\boeta_j)) \right) \exp \left( -\sum_{i \in I'} \xi_i ( a^\dagger(\boeta_{2i}) a^\dagger(\boeta_{2i-1}) - a(\boeta_{2i-1}) a(\boeta_{2i}) ) \right) \UUU_{\boeta \bof}\\
	\UUU_\cV &=: \left(\prod_{j \in J''} \UUU_{j,\cV}\right) \left( \prod_{i \in I'} \UUU_{2i,2i-1,\cV} \right) \;, \quad
	\text{with} \quad \sin \xi_i := \beta_i \quad \Rightarrow \quad \cos \xi_i := \alpha_i \;,
\end{aligned}
\label{eq:fermionicbogoliubovimplementer}
\end{equation}
and where $ \UUU_{\boeta \bof} $ is the unitary basis change transformation
\begin{equation}
	\UUU_{\boeta \bof}: \bof_{j_1} \otimes \ldots \otimes \bof_{j_n} \mapsto \boeta_{j_1} \otimes \ldots \otimes \boeta_{j_n} \qquad \forall \; j_1, \ldots, j_n \in J \;.
\label{eq:Uetaf}
\end{equation}

\section{Bogoliubov Transformations: Extended}
\label{sec:extensionbogoliubov}

In the extended case, $ v $ is possibly unbounded, so we must show that the Bogoliubov relations \eqref{eq:Bogoliubovrelations} survive the extension. We do this in Section \ref{subsec:extensionbogoliubov}, while preparing the spectral decomposition for the final ITP construction. In Section \ref{subsec:extopalg}, we define an extended $ ^* $-algebra $ \cAB_{\be} $ of creation and annihilation operator products, and in Section \ref{subsec:finalITP}, we finish the construction of $ \widehat{\sH} $, with respect to a given Bogoliubov transformation $ \cV $.\\

\subsection{Extension of the Bogoliubov Relations}
\label{subsec:extensionbogoliubov}

	Throughout the following construction, we will assume that $ v^*v $ is densely defined and self-adjoint. In that case, we can define the self-adjoint operators $ C^*C := v^*v (1 \pm v^*v) $ and $ |C| = \sqrt{C^*C} $ by spectral calculus. By the spectral theorem in the form of \cite[Thm. 10.9]{Hall2013}, we may then decompose $ \ell^2 $ as a direct integral
\begin{equation}
	\ell^2 = \int^\oplus_{\sigma(|C|)} \CCC^n \; \di \mu_1(\lambda) \;,
\label{eq:directintegral}
\end{equation}
where $ \sigma(|C|) = \sigma $ is the spectrum of $ |C| $, $ \mu_1 $ is a suitable measure on it and $ n: \sigma \to \NNN \cup \{\infty\} $ \cite[Def. 7.18]{Hall2013} is a measurable dimension function. Put differently, as visualized in Figure \ref{fig:X_spectrum}, we can find a spectral set $ X = \bigcup_{\lambda \in \sigma} \{\lambda\} \times Y_\lambda \subset \RRR^2 $ with $ Y_\lambda \subseteq \ZZZ, |Y_\lambda| = n(\lambda) $ accounting for multiplicity and unitary maps\footnote{See also the formulation \cite[Thm. 10.10]{Hall2013} of the spectral theorem.}
\begin{equation}
	U_{X \bof}: \ell^2 \to L^2(X) \;, \qquad
	U_{\bof X} = U_{X \bof}^{-1} \;,
\label{eq:Ux}
\end{equation}
such that
\begin{equation}
	|C| = U_{\bof X} \lambda U_{X \bof} \;,
\end{equation}
with $ \lambda $ being the operator on $ L^2(X) $ that multiplies by $ \lambda(x) $. In addition, we denote $ Y = \bigcup_{\lambda \in \sigma} Y_\lambda \subseteq \ZZZ $ with $ |Y| $ being an upper bound for the multiplicity of any eigenvalue. Note that the $ \lambda $ here correspond to the $ \lambda_j $ in Section \ref{subsec:implementation1}.\\ 
	We also make use of the formulation \cite[Thm. 10.4]{Hall2013} of the spectral theorem, which provides us with a projection-valued measure $ P_{|C|} $, such that
\begin{equation}
	|C| = \int_X \lambda(x) \; dP_{|C|}(x) = \int_{\sigma \times Y} \lambda \; \di P_{|C|}(\lambda, y) \;.
\label{eq:muCC}
\end{equation}
Here, we choose $ Y \subseteq \ZZZ $, so $ X \subset \RRR^2 $ consists of ``lines'' with distance 1.
\begin{figure}
	\centering
	\scalebox{1.0}{\begin{tikzpicture}

\draw[thick, ->] (-2,0) -- (4.1,0) node[anchor = south west]{$ \lambda $};
\draw[thick, ->] (-1.5,-0.8) -- (-1.5,1.8) node[anchor = south west]{$ y $};
\draw (-1.6,-0.5) -- (-1.4,-0.5);
\draw (-1.6,0.5) -- (-1.4,0.5);
\draw (-1.6,1) -- (-1.4,1);
\draw (-1.6,1.5) -- (-1.4,1.5);
\node at (-1.8,-0.5) {-1};
\node at (-1.8,0) {0};
\node at (-1.8,0.5) {1};
\node at (-1.8,1) {2};
\node at (-1.8,1.5) {3};
\node[red] at (4.2,0.8) {$X$};

\fill[red] (-1.1,-1.7) circle (0.07);
\fill[red] (-0.7,-1.7) circle (0.07);
\fill[red] (-0.3,-1.7) circle (0.07);
\draw[red, line width = 3] (0.2,-1.7) -- (1,-1.7);
\fill[red] (1.5,-1.7) circle (0.07);
\draw[red, line width = 3] (2,-1.7) -- (4,-1.7);
\draw[red, rounded corners = 5] (-1.3,-1.9) rectangle (4.2,-1.5) ;
\draw[red] (3.1,-1.5) -- (2.6,-1.3) node[anchor = east]{$ \sigma $};

\fill[gray!50!white, opacity = .6] (-1.1,-0.5) circle (0.05);
\fill[gray!50!white, opacity = .6] (-0.7,-0.5) circle (0.05);
\fill[gray!50!white, opacity = .6] (-0.3,-0.5) circle (0.05);
\draw[gray!50!white, opacity = .6, line width = 2] (0.2,-0.5) -- (1,-0.5);
\fill[gray!50!white, opacity = .6] (1.5,-0.5) circle (0.05);
\draw[gray!50!white, opacity = .6, line width = 2] (2,-0.5) -- (4,-0.5);

\fill[red] (-1.1,0) circle (0.05);
\fill[red] (-0.7,0) circle (0.05);
\fill[red] (-0.3,0) circle (0.05);
\draw[red, line width = 2] (0.2,0) -- (1,0);
\fill[red] (1.5,0) circle (0.05);
\draw[red, line width = 2] (2,0) -- (4,0);

\fill[red] (-1.1,0.5) circle (0.05);
\fill[gray!30!white] (-0.7,0.5) circle (0.05);
\fill[red] (-0.3,0.5) circle (0.05);
\draw[red, line width = 2] (0.2,0.5) -- (1,0.5);
\fill[gray!30!white] (1.5,0.5) circle (0.05);
\draw[red, line width = 2] (2,0.5) -- (4,0.5);

\fill[red] (-1.1,1) circle (0.05);
\fill[gray!30!white] (-0.7,1) circle (0.05);
\fill[red] (-0.3,1) circle (0.05);
\draw[red, line width = 2] (0.2,1) -- (1,1);
\fill[gray!30!white] (1.5,1) circle (0.05);
\draw[gray!30!white, line width = 2] (2,1) -- (4,1);

\fill[gray!30!white] (-1.1,1.5) circle (0.05);
\fill[gray!30!white] (-0.7,1.5) circle (0.05);
\fill[red] (-0.3,1.5) circle (0.05);
\draw[gray!30!white, line width = 2] (0.2,1.5) -- (1,1.5);
\fill[gray!30!white] (1.5,1.5) circle (0.05);
\draw[gray!30!white, line width = 2] (2,1.5) -- (4,1.5);

\end{tikzpicture}}
	\caption{The spectral set $ X $ for a generic spectrum of $ |C| $.}
	\label{fig:X_spectrum}
\end{figure}
The case $ \lambda = 0 $ will turn out to be critical, whence we define
\begin{equation}
	X_{\crit} := \{ x = (\lambda,y) \in X \; \mid \; \lambda = 0\} \;, \qquad
	X_{\reg} := X \setminus X_{\crit} \;.
\label{eq:XcritXreg}
\end{equation}
Our (dense) space of test functions on the spectral set is given by:
\begin{equation}
	\cD_X := C_c^\infty(X_{\crit}) \otimes C_c^\infty(X_{\reg}) \;.
\label{eq:cDX}
\end{equation}
The corresponding test function space in $ \ell^2 $ is
\begin{equation}
	\cD_{|C|} := U_{\bof X} \cD_X \;.
\label{eq:cDC}
\end{equation}
For non-open $ X $, we interpret \eqref{eq:cDX} in the same way as the definition of $ \cE(X) $ \eqref{eq:cEej}:
\begin{equation*}
	C_c^\infty(X) := C_c^\infty(\RRR^2) / \{\phi \; \mid \; \phi(x) = 0 \; \forall x \in X \} \;.
\end{equation*}

\begin{lemma}[Bogoliubov relations \eqref{eq:Bogoliubovrelations} survive the extension]\ \\
Suppose, $ u $ and $ v $ are defined on a common dense domain $ \cD \subseteq \ell^2 $, such that $ v^*v $ is densely defined and self-adjoint, and such that the linear operator
\begin{equation}
	\cV = \begin{pmatrix} u & v \\ \overline{v} & \overline{u} \end{pmatrix} \;, \qquad
	\cV: \cD \oplus \cD \to \ell^2 \oplus \ell^2 \;,
\end{equation}
defines a Bogoliubov transformation. That means, both $ \cV $ and $ \cV^* = \left( \begin{smallmatrix} u^* & v^T \\ v^* & u^T \end{smallmatrix} \right) $ preserve the CAR/CCR, see \eqref{eq:genAAdagger}, \eqref{eq:CARCCRb}.\\
Then $ u, v, \overline{u}, \overline{v}, u^*, v^*, u^T $ and $ v^T $ are well-defined on all of $ \cD_{|C|} $, constructed above \eqref{eq:cDC}. Further, the Bogoliubov relations \eqref{eq:Bogoliubovrelations} hold as a weak operator identity on $ \cD_{|C|} $. Conversely, the extended Bogoliubov relations imply conservation of the CAR/CCR under both $ \cV $ and $ \cV^* $.\\

\label{lem:bogoliubovrelationssurvive}
\end{lemma}

\begin{proof}
	Well-definedness of $ u,v $ on $ \cD_{|C|} $ follows from the polar decompositions
\begin{equation}
	v = U_v |v| \;, \qquad
	u = U_u |u| \;,
\end{equation}
	with unitary $ U_v, U_u: L^2(X) \to \ell^2 $. The operators $ |v| = \sqrt{v^*v} $ and $ |u| = \sqrt{u^*u} = \sqrt{1 \pm v^*v} $ on $ L^2(X) $ are bounded in the fermionic case ($ - $), so $ u $ and $ v $ are defined on all of $ \ell^2 $. In the bosonic case ($ + $), they are spectral multiplications by
\begin{equation}
	\nu(\lambda) = \sqrt{-\frac{1}{2} + \sqrt{\frac{1}{4} + \lambda^2}} \quad \text{and} \quad
	\mu(\lambda) = \sqrt{\frac{1}{2} + \sqrt{\frac{1}{4} + \lambda^2}} \;,
\end{equation}	
which are bounded on each bounded interval in $ \lambda $. Therefore, $ |v| $ and $ |u| $ map $ \cD_X $ into itself, and by definition \eqref{eq:cDC} of $ \cD_{|C|} $, the operators $ v $ and $ u $ map $ \cD_{|C|} \to \ell^2 $.
Well-definedness of $ \overline{u} $ and $ \overline{v} $ on $ \cD_{|C|} $ follows analogously and the domains of the adjoints $ u^*, v^*, u^T $ and $ v^T $ contain the domain of the respective original operators, so they all contain $ \cD_{|C|} $.
The CAR/CCR conservation then follows by a direct computation. Checking that \eqref{eq:Bogoliubovrelations} indeed holds as weak operator identities on $ \cD_{|C|} $ is straightforward to check.
\end{proof}

\begin{figure}
	\centering
	\scalebox{1.0}{\begin{tikzpicture}

\fill[gray!10!white] (-1.5,-0.8) rectangle (4.1,1.8);
\node at (3.9,2) {bosonic};

\draw[thick, ->] (-2,0) -- (4.1,0) node[anchor = south west]{$ \lambda $};
\draw[thick, ->] (-1.5,-0.8) -- (-1.5,1.8) node[anchor = south west]{$ y $};
\draw (-1.6,-0.5) -- (-1.4,-0.5);
\draw (-1.6,0.5) -- (-1.4,0.5);
\draw (-1.6,1) -- (-1.4,1);
\draw (-1.6,1.5) -- (-1.4,1.5);
\node at (-1.8,-0.5) {-1};
\node at (-1.8,0) {0};
\node at (-1.8,0.5) {1};
\node at (-1.8,1) {2};
\node at (-1.8,1.5) {3};
\node[red] at (4.2,0.8) {$X$};

\draw[red, rounded corners = 3] (-1.6,-0.2) rectangle (-1.4,1.2) ;
\draw[red] (-1.5,1.2) -- ++(0.5,0.5) node[anchor = south west]{$ X_{\crit} $};

\fill[gray!50!white, opacity = .6] (-1.5,-0.5) circle (0.05);
\fill[gray!50!white, opacity = .6] (-1,-0.5) circle (0.05);
\fill[gray!50!white, opacity = .6] (-0.5,-0.5) circle (0.05);
\fill[red] (0.5,-0.5) circle (0.05);
\fill[gray!50!white, opacity = .6] (1.5,-0.5) circle (0.05);
\fill[gray!50!white, opacity = .6] (3,-0.5) circle (0.05);
\fill[gray!50!white, opacity = .6] (3.5,-0.5) circle (0.05);

\fill[red] (-1.5,0) circle (0.05);
\fill[red] (-1,0) circle (0.05);
\fill[red] (-0.5,0) circle (0.05);
\fill[red] (0.5,0) circle (0.05);
\fill[red] (1.5,0) circle (0.05);
\fill[red] (3,0) circle (0.05);
\fill[red] (3.5,0) circle (0.05);

\fill[red] (-1.5,0.5) circle (0.05);
\fill[gray!30!white] (-1,0.5) circle (0.05);
\fill[red] (-0.5,0.5) circle (0.05);
\fill[red] (0.5,0.5) circle (0.05);
\fill[red] (1.5,0.5) circle (0.05);
\fill[red] (3,0.5) circle (0.05);
\fill[gray!50!white, opacity = .6] (3.5,0.5) circle (0.05);

\fill[red] (-1.5,1) circle (0.05);
\fill[gray!30!white] (-1,1) circle (0.05);
\fill[red] (-0.5,1) circle (0.05);
\fill[red] (0.5,1) circle (0.05);
\fill[gray!30!white] (1.5,1) circle (0.05);
\fill[gray!50!white, opacity = .6] (3,1) circle (0.05);
\fill[gray!50!white, opacity = .6] (3.5,1) circle (0.05);

\fill[gray!30!white] (-1.5,1.5) circle (0.05);
\fill[gray!30!white] (-1,1.5) circle (0.05);
\fill[red] (-0.5,1.5) circle (0.05);
\fill[red] (0.5,1.5) circle (0.05);
\fill[gray!30!white] (1.5,1.5) circle (0.05);
\fill[gray!50!white, opacity = .6] (3,1.5) circle (0.05);
\fill[gray!50!white, opacity = .6] (3.5,1.5) circle (0.05);

\end{tikzpicture}}
	\hfill
	\scalebox{1.0}{\begin{tikzpicture}

\fill[gray!10!white] (-1.5,-0.8) rectangle (1,1.8);
\draw[thick] (1,-0.1) -- (1,0.1);
\node at (1.1,-0.3) {$ \tfrac 12 $};
\draw[dashed] (1,-0.8) -- (1,1.8);
\node at (2,2) {fermionic};
\node at (2.2,1) {$ 0 \le |C| \le \tfrac 12 $};

\draw[thick, ->] (-2,0) -- (1.5,0) node[anchor = south west]{$ \lambda $};
\draw[thick, ->] (-1.5,-0.8) -- (-1.5,1.8) node[anchor = south west]{$ y $};
\draw (-1.6,-0.5) -- (-1.4,-0.5);
\draw (-1.6,0.5) -- (-1.4,0.5);
\draw (-1.6,1) -- (-1.4,1);
\draw (-1.6,1.5) -- (-1.4,1.5);
\node at (-1.8,-0.5) {-1};
\node at (-1.8,0) {0};
\node at (-1.8,0.5) {1};
\node at (-1.8,1) {2};
\node at (-1.8,1.5) {3};

\fill[gray!30!white] (-1.5,-0.5) circle (0.05);
\fill[red] (-1,-0.5) circle (0.05);
\fill[red] (-0.5,-0.5) circle (0.05);
\fill[gray!30!white] (0.3,-0.5) circle (0.05);
\fill[gray!30!white] (0.7,-0.5) circle (0.05);

\fill[red] (-1.5,0) circle (0.05);
\fill[red] (-1,0) circle (0.05);
\fill[red] (-0.5,0) circle (0.05);
\fill[red] (0.3,0) circle (0.05);
\fill[red] (0.7,0) circle (0.05);

\fill[red] (-1.5,0.5) circle (0.05);
\fill[red] (-1,0.5) circle (0.05);
\fill[red] (-0.5,0.5) circle (0.05);
\fill[red] (0.3,0.5) circle (0.05);
\fill[gray!30!white] (0.7,0.5) circle (0.05);

\fill[red] (-1.5,1) circle (0.05);
\fill[red] (-1,1) circle (0.05);
\fill[gray!30!white] (-0.5,1) circle (0.05);
\fill[gray!30!white] (0.3,1) circle (0.05);
\fill[gray!30!white] (0.7,1) circle (0.05);

\fill[gray!30!white] (-1.5,1.5) circle (0.05);
\fill[red] (-1,1.5) circle (0.05);
\fill[gray!30!white] (-0.5,1.5) circle (0.05);
\fill[gray!30!white] (0.3,1.5) circle (0.05);
\fill[gray!30!white] (0.7,1.5) circle (0.05);

\draw[red, rounded corners = 3] (-1.6,-0.2) rectangle (-1.4,1.2) ;
\draw[red] (-1.5,1.2) -- ++(0.5,0.5) node[anchor = south west]{$ X_{\crit} $};

\end{tikzpicture}}
	\caption{Left: Discrete spectrum of $ |C| $ in the bosonic case.\\ Right: In the fermionic case, $ |C| \le \tfrac 12 $ holds.}
	\label{fig:X_spectrum_countable}
\end{figure}

\begin{lemma}
Let $ v^*v $ be self-adjoint. Then $ C = u^* v J $ and $ C^*C $ are well-defined operators on $ \cD_{|C|} $, where $ J $ denotes complex conjugation. The spectrum of $ C $ is contained in the real axis for bosons and the imaginary axis for fermions.\\
Further, if $ v^*v $ has countable spectrum, then also $ C, C^*C $ and $ |C| $ have countable spectrum, see Figure \ref{fig:X_spectrum_countable}.\\
\label{lem:vvtoC}
\end{lemma}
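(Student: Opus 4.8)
The plan is to reduce every claim to spectral calculus for the single self--adjoint operator $v^*v$, using the extended Bogoliubov relations of Lemma~\ref{lem:bogoliubovrelationssurvive}.

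\textbf{Well--definedness.} By Lemma~\ref{lem:bogoliubovrelationssurvive}, $u,v,\overline u,\overline v,u^*,v^*,u^T,v^T$ are all defined on $\cD_{|C|}$ and \eqref{eq:Bogoliubovrelationsext} holds there as a weak operator identity. Since $v^*v$ is self--adjoint and real (so $\overline{v^*v}=v^*v=v^T\overline v$, cf.\ the discussion after \eqref{eq:Bogoliuborelations}), the complex conjugation $J$ commutes with $v^*v$, hence with $|C|$ and all of its spectral projections; as $J$ is an antiunitary involution commuting with $v^*v$, the spectral resolution of $|C|$ can be chosen so that $J$ acts as complex conjugation on $L^2(X)$, which preserves $\cD_X=C_c^\infty(X_\crit)\otimes C_c^\infty(X_\reg)$, hence $\cD_{|C|}$. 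Next, $C^*C:=v^*v(1\pm v^*v)$ (upper sign bosonic, lower fermionic) is a bounded--below Borel function of $v^*v$, i.e.\ of $|C|$ — in the $X$--representation, multiplication by $\lambda(x)^2$ — so $C^*C$ and $|C|=\sqrt{C^*C}$ map $\cD_{|C|}$ into itself. For $C=u^*vJ$ one argues exactly as in the proof of Lemma~\ref{lem:bogoliubovrelationssurvive}: in the $X$--representation $u,v,u^*,v^*$ act as spectral multiplications that are bounded on $\lambda$--compacta, composed with unitaries respecting the $\lambda$--grading (because, by \eqref{eq:Bogoliubovrelationsext}, $uu^*$ and $vv^*$ are unitarily conjugate to $v^*v$), and $J$ preserves $\cD_{|C|}$ by the above, so $C=u^*vJ$ and $C^*=Jv^*u$ map $\cD_{|C|}$ into itself. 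A direct computation using $uu^*\mp vv^*=1$, $u^*u\mp v^*v=1$ and $\overline{v^*v}=v^*v$ then yields
\begin{equation*}
 CC^*=u^*vv^*u=v^*v(1\pm v^*v),\qquad C^*C=Jv^*uu^*vJ=v^*v(1\pm v^*v),
\end{equation*}
so $C$ is normal with $C^*C=CC^*=|C|^2$, consistent with the definition of $|C|$.

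\textbf{Location of the spectrum.} The off--diagonal relation $u^*v\mp v^T\overline u=0$ of \eqref{eq:Bogoliubovrelationsext} together with $v^T\overline u=\overline{v^*u}=Jv^*uJ$ gives $u^*v=\pm Jv^*uJ$, whence
\begin{equation*}
 C=u^*vJ=\pm Jv^*u=\pm C^*
\end{equation*}
(upper sign bosonic, lower fermionic). Combined with normality, this means that the closure of $C$ is self--adjoint in the bosonic case and skew--adjoint in the fermionic case — in the $X$--representation it is multiplication by a function $c(x)$ with $|c(x)|=\lambda(x)$ that is real, resp.\ purely imaginary, on each connected piece of $X$, which is exactly the sign information recorded by the $y$--coordinate in Figure~\ref{fig:X_spectrum_countable}. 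Hence $\sigma(C)\subseteq\RRR$ for bosons and $\sigma(C)\subseteq i\RRR$ for fermions.

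\textbf{Countability.} Suppose $\sigma(v^*v)$ is countable. A self--adjoint operator with countable spectrum is purely point, because $P_{v^*v}(\sigma(v^*v))$ equals the identity and $\sigma(v^*v)$ is a countable disjoint union of singletons, so the eigenprojections $P_{v^*v}(\{\lambda\})$ sum to the identity; consequently $f(v^*v)=\sum_{\lambda}f(\lambda)\,P_{v^*v}(\{\lambda\})$ is purely point with eigenvalue set $f(\sigma(v^*v))$ for every Borel $f$. Applying this to $C^*C=p(v^*v)$ and $|C|=q(v^*v)$ with $p(t)=t(1\pm t)$, $q(t)=\sqrt{t(1\pm t)}$: in the fermionic case $\sigma(v^*v)\subseteq[0,1]$ is compact and $p,q$ are continuous, so $p(\sigma(v^*v))$ and $q(\sigma(v^*v))$ are compact; in the bosonic case $p,q$ are continuous strictly increasing bijections of $[0,\infty)$, hence homeomorphisms, so they map the closed set $\sigma(v^*v)$ onto closed sets. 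Either way $p(\sigma(v^*v))$ and $q(\sigma(v^*v))$ are closed and, being continuous images of a countable set, countable; therefore $\sigma(C^*C)=\overline{p(\sigma(v^*v))}=p(\sigma(v^*v))$ and $\sigma(|C|)=q(\sigma(v^*v))$ are countable. Finally $C$ is normal with $C^2=\pm C^*C$, so by the spectral mapping theorem $\sigma(C)^2=\sigma(C^2)=\pm\sigma(C^*C)$ is countable, and since each point has at most two square roots, $\sigma(C)$ is countable. The delicate point is precisely this last step: ``countable spectrum'' is \emph{not} in general stable under continuous functional calculus — the image of a countable closed set under a continuous map can be dense in an interval — and what saves the argument is that $\sigma(v^*v)$ is closed and the functions $p,q$ are proper on the relevant domain (equivalently, $\sigma(v^*v)$ is compact in the fermionic case), so the image stays closed and hence coincides with the countable eigenvalue set; the only other point needing care, the well--definedness of the composition $u^*vJ$ on $\cD_{|C|}$ in the unbounded bosonic case, is handled by the same polar--decomposition and spectral--cutoff argument already used for Lemma~\ref{lem:bogoliubovrelationssurvive}.
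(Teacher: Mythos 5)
Your overall strategy -- reduce everything to spectral calculus of $v^*v$ via $C^*C=v^*v(1\pm v^*v)$, locate $\sigma(C)$ through $C=\pm C^*$ and $C^2=\pm C^*C$, and get countability from functional calculus -- is the paper's route, and your countability step is in fact more careful than the paper's one--line "image of a countable set" argument: you correctly flag that $\sigma(f(v^*v))$ is only the \emph{closure} of $f(\sigma(v^*v))$ and close the loophole by properness/compactness of $\lambda\mapsto\lambda(1\pm\lambda)$, and you also treat $\sigma(C)$ and $\sigma(C^*C)$ explicitly, which the paper leaves implicit.

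The genuine gap is in your well--definedness argument for $C$ itself. You treat $C=u^*vJ$ as a literal composition on $\cD_{|C|}$ and justify it by saying the polar--decomposition unitaries "respect the $\lambda$--grading because $uu^*$ and $vv^*$ are unitarily conjugate to $v^*v$". Unitary conjugacy of $uu^*$ to $1\pm v^*v$ does not make $U_u$ (or $U_u^*U_v$) commute with the spectral projections of $|C|$; for that you would need the \emph{strong} operator identity $uu^*=1\pm vv^*$, whereas Lemma \ref{lem:bogoliubovrelationssurvive} only gives the relations \eqref{eq:Bogoliubovrelationsext} weakly on $\cD_{|C|}$, and the bosonic counter--example in the remark following that lemma exhibits exactly the failure mode: $u^*v$ applied to a vector in $\cD_{|C|}$ can leave $\ell^2$, because the unbounded factor $|u|$ acts \emph{after} a unitary that has spread the vector over the spectrum. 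The same objection hits your identities $CC^*=u^*vv^*u=v^*v(1\pm v^*v)$ as operator identities on $\cD_{|C|}$, and hence your appeal to normality and essential self--adjointness of the closure of $C$. The paper avoids all of this by never composing $u^*$ with $v$: it \emph{defines} $C^*C:=v^*v(1\pm v^*v)$ by spectral calculus, checks directly that this preserves $\cD_{|C|}$ (compactly supported spectral functions, smooth away from the critical points, stay so under multiplication by $\lambda\pm\lambda^2$), sets $|C|:=\sqrt{C^*C}$, and then defines $C:=U_C|C|$ via polar decomposition, reading off the spectral statements from $C=\pm C^*$ and $C^2=\pm C^*C$ exactly as you do afterwards. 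Your proof becomes correct if you replace the "literal composition" paragraph by this definition (or independently establish the strong identity $uu^*=1\pm vv^*$, which the paper deliberately does not assume).
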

\begin{proof}
By the Bogoliubov relations, $ C^*C = v^*v \pm (v^*v)^2 $ holds, wherever it is defined. Now, $ \lambda \mapsto \lambda \pm \lambda^2 $ is smooth apart from the critical points 0 (bosons) or 0 and 1 (fermions). The condition $ \bphi \in \cD_{|C|} $ means that the corresponding spectral function $ \phi_X $ has compact support and is smooth apart from the critical points. This property is preserved by an application of $ C^*C $, so $ C^*C: \cD_{|C|} \to \cD_{|C|} $ is well-defined. Hence, also $ |C| = \sqrt{C^*C} $ is well-defined, and by a polar decomposition also $ C = U_C |C| $ with $ U_C: \ell^2 \to \ell^2 $ unitary.\\
In the bosonic case, $ C $ is symmetric by the Bogoliubov relations, so $ C^*C = C^2 $ and further, $ \sigma(C) $ is a subset of the preimage of $ \sigma(C^2) \subseteq [0,\infty) $ under the complex map $ z \mapsto z^2 $. This preimage is contained within the real axis.\\
In the fermionic case, the Bogoliubov relations imply $ C^* = -C $, so $ C^*C = -C^2 $. That means, $ \sigma(C) $ lies within the preimage of $ \sigma(C^2) \subseteq [0,\infty) $ under the map $ z \mapsto -z^2 $, which is contained within the imaginary axis.\\
Now, suppose $ \sigma(v^*v) $ is countable. Then, $ \sigma(|C|) $ is the image of $ \sigma(v^*v) $ under the map $ z \mapsto \sqrt{z(1 \pm z)} $, which sends at most 2 arguments to the same value, so $ \sigma(|C|) $ is also countable.\\
\end{proof}

\begin{remark}
For fermions, $ u^*u + v^*v = 1 $ implies that $ v^*v $ is bounded, so $ v^*v $ and also $ u^*u $ can be defined on all of $ \ell^2 $. Hence, also $ v $ and $ u $ are defined on all of $ \ell^2 $, so the Bogoliubov relations \eqref{eq:Bogoliubovrelations} hold as a strong operator identity on $ \ell^2 $.
\end{remark}

\begin{remark}
For bosons, it is not obvious that the Bogoliubov relations \eqref{eq:Bogoliubovrelations} hold as a strong operator identity on a dense domain of $ \ell^2 $. In fact, it does not always hold as a strong operator identity on $ \cD_{|C|} $: As a counter-example, let $ v e_j := j e_j $, with respect to the canonical basis $ (e_j)_{j \in \NNN} $ and let $ u = U_u |u| $ such that $ U_u^* e_1 = c \; \sum_j j^{-1/2 - \varepsilon} e_j $ for some $ \varepsilon > 0 $. A direct calculation then shows
\begin{equation}
	u^*v e_1 = |u| U_u^* e_1 = \sum_j \sqrt{1 + j^2} j^{-1/2 - \varepsilon} e_j \;.
\end{equation}
For $ \varepsilon \le 1 $, this is obviously not in $ \ell^2 $, so $ u^*v $ is ill-defined on $ e_1 \in \cD_{|C|} $.
\end{remark}

\subsection{Extension of the Operator Algebra}
\label{subsec:extopalg}

In Section \ref{subsec:fockspace}, we defined a $ ^* $-algebra (bosonic) or $ C^* $-algebra (fermionic) $ \cA $ generated by $ a^\dagger_\pm(f), a_\pm(f), f \in \fh $. On our ITP spaces, we will encounter formal expressions in creation and annihilation operators that belong to a larger algebra $ \cAB_{\be} $, which is defined with respect to a basis $ \be = (\be_j)_{j \in \NNN} \subset \ell^2 $. We introduce the shorthand notations $ a_j := a(\be_j), a^\dagger_j := a^\dagger(\be_j) $, and consider the set of finite operator products
\begin{equation}
	\Pi_{\be} := \{a_{j_1}^\sharp \ldots a_{j_m}^\sharp \; \mid \; j_{\ell} \in \NNN \} \;.
\end{equation}
Then $ \cAB_{\be} $ is defined as the set of all complex-valued maps
\begin{equation}
	\cAB_{\be} := \{ H: \Pi_{\be} \to \CCC \} \;.
\label{eq:cAB}
\end{equation}
We formally write its elements as infinite sums
\begin{equation}
	H = \sum_{m \in \NNN} \sum_{j_1,\ldots, j_m \in \NNN} H_{j_1,\ldots, j_m} a^\sharp_{j_1} \ldots a^\sharp_{j_m} \;.
\label{eq:Hinfinitesum}
\end{equation}
$ \cAB_{\be} $ is made a $ ^* $-algebra by the involution
\begin{equation}
	^*: c a_j \mapsto \overline{c} a^\dagger_j, \quad c a^\dagger_j \mapsto \overline{c} a_j \qquad \forall c \in \CCC \;.
\end{equation}
It is easy to see that $ \cAB_{\be} $ extends $ \cA $. Resolving each $ a^\sharp(f_j) $ with respect to the basis $ \be $, we obtain a countable sum of the form \eqref{eq:Hinfinitesum}, that contains each term $ a^\sharp_{j_1} \ldots a^\sharp_{j_m} $ at most once.\\
Of particular interest will be elements of $ \cAB_{\be} $ corresponding to finite sums. For $ a^\dagger(\bphi) = \sum_{j \in \NNN} \phi_j a^\dagger_j, \bphi \in \ell^2 $, this sum is finite if and only if
\begin{equation}
	\bphi \in \cD_{\be}
	:= \{ \bphi \in \ell^2 \; \mid \; \phi_j = 0 \; \text{for all but finitely many} \; j \in \NNN\} \;.
\label{eq:cD}
\end{equation}
Here, the index $ \be $ in $ \cD_{\be} $ emphasizes that we are working with respect to the basis $ (\be_j)_{j \in \NNN} $. If $ (\be_j)_{j \in \NNN} $ is an orthonormal eigenbasis of $ |C| $, then $ \cD_{\be} = \cD_{|C|} $, since both domains are spanned by finite linear combinations of eigenvectors of $ C^*C $.\\

\subsection{Final ITP Construction and Operator Lift}
\label{subsec:finalITP}

	Next, we fix the final definitions for our ITP spaces $ \widehat{\sH} $ and define products of $ a^\dagger(\bphi), a(\bphi) $ with $ \bphi \in \cD_{\be} $ on suitable subspaces of them. Within these definitions, {\bf we assume that $ v^*v $ has countable spectrum}, so Lemma \ref{lem:vvtoC} applies and $ |C| $ has countable spectrum.\\
	As argued around \eqref{eq:muCC}, $ X \subseteq \sigma \times Y $ is countable and there exists an orthonormal eigenbasis $ (\bof_j)_{j \in \NNN} $.\\
	For bosons we follow the construction of \cite{Solovej2014}, replacing the argument ``$ C $ is Hilbert--Schmidt'' by ``$ |C| $ has countable spectrum'', which renders an orthonormal basis $ \bg = (\bg_j)_{j \in \NNN} $ as in Section~\ref{sec:bogoliubovtrafo}.	The construction uses that by Lemma \ref{lem:bogoliubovrelationssurvive}, the Bogoliubov relations still hold as a weak operator identity. Now, $ \bg $ takes the role of $ \be $ in $ \cD_{\be} $ and $ \cAB_{\be} $.\\
\begin{definition}
	The \textbf{bosonic infinite tensor product space} is given by
\begin{equation}
	\widehat{\sH} = \prod_{k \in \NNN}^\otimes \sH_k = \prod_{k \in \NNN}^\otimes \sF(\{\bg_k\}) \;,
\label{eq:bosonicITP}
\end{equation}
\label{def:bosonicITP}
\end{definition}
Note that the sequence $ (e_{k,n})_{n \in \NNN_0} $ of $ n $-particle basis vectors is a canonical basis of each one-mode Fock space $ \sH_k $, and can be used to describe elements of $ \widehat{\sH} $.\\
	For fermions the construction of \cite{Solovej2014}, with ''$ C^*C $ is trace class'' replaced by ``$ |C| $ has countable spectrum'' (which is true by Lemma \ref{lem:vvtoC}), yields an orthonormal basis $ (\boeta_j)_{j \in J} $ with countable $ J \subseteq \NNN $ as in Section \ref{subsubsec:fermioniccaseimplementation}. Here, $ \boeta $ takes the role of $ \be $ in $ \cD_{\be} $ and $ \cAB_{\be} $.\\	
The ITP construction is then a bit more delicate, since for each Cooper pair $ i \in I' $ (so $ j \in J' $), we must introduce a separate Fock space $ \sF(\{\boeta_{2i-1}\}) \otimes \sF(\{\boeta_{2i}\}) \cong \CCC^4 $ (see Remark \ref{rem:fermionicITP}). We index all $ j \in J'' $ and $ i \in I' $ by a corresponding $ k(i) $ or $ k(j) $, such that all $ k \in \NNN $ are used and take the tensor product over those $ k $:
\begin{definition}
	The \textbf{fermionic infinite tensor product space} is given by
\begin{equation}
	\widehat{\sH} = \prod_{k \in \NNN}^\otimes \sH_k = \left( \prod_{j \in J''}^\otimes \sF(\{\boeta_j\}) \right) \otimes \left( \prod_{i \in I'}^\otimes \sF(\{\boeta_{2i-1}\}) \otimes \sF(\{\boeta_{2i}\}) \right) \;.
\label{eq:fermionicITP}
\end{equation}
\label{def:fermionicITP}
\end{definition}

	For a one-mode Fock space $ \sH_{k(j)} := \sF(\{\boeta_j\}) $, the pair $ (e_{k,0}, e_{k,1}) $ forms a basis of each $ \sH_k $, while for a two-mode Fock space $ \sH_{k(i)} := \sF(\{\boeta_{2i-1}\}) \otimes \sF(\{\boeta_{2i}\}) $, such a basis is given by $ (e_{k,0,0}, e_{k,1,0}, e_{k,0,1}, e_{k,1,1}) $, with the two indices denoting the particle numbers per mode.\\

Our next challenge is to lift the one-mode creation and annihilation operators $ a_j^\dagger, a_j $ defined on the one- or two-mode Fock space $ \sH_k $ to $ \widehat{\sH} $.
\begin{lemma}
	Consider a (possibly unbounded) operator $ A_{j,j}: \sH_j \supset \dom(A_{j,j}) \to \sH_j $. Then, for $ \Psi^{(m)}_j \in \dom(A_{j,j}) $,
\begin{equation}
	A_j \Psi^{(m)} := \Psi^{(m)}_1 \otimes \ldots \otimes \Psi^{(m)}_{j-1} \otimes A_{j,j} \Psi^{(m)}_j \otimes \Psi^{(m)}_{j+1} \otimes \ldots \;,
\label{eq:operatorlift}
\end{equation}
is independent of the choice of a $ C $-sequence $ (\Psi^{(m)}) = (\Psi^{(m)}_k)_{k \in \NNN} $ representing $ \Psi^{(m)} $, and defines an operator $ A_j $ by linearity on
\begin{equation}
	\Psi \in \dom(A_j)
	:= \Big\{ \Psi = \sum_{m \in \cM} d_m \Psi^{(m)} \in \widehat{\sH} \; \Big\vert \; \Big\Vert \sum_{m \in \cM} d_m A_j \Psi^{(m)} \Big\Vert < \infty \Big\} \;,
\label{eq:domAj}
\end{equation}
where $ \cM \subseteq \NNN $, $ d_m \in \CCC $ and $ \Psi^{(m)} $ being such that $ A_j \Psi^{(m)} $ is well-defined by \eqref{eq:operatorlift}.\\
\label{lem:Operatorlift}
\end{lemma}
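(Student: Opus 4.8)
The plan is to verify two things in turn: first, that the right-hand side of \eqref{eq:operatorlift} does not depend on which $ C $-sequence represents the vector $ \Psi^{(m)} \in \widehat{\sH} $, and second, that the resulting assignment extends linearly to a well-defined operator on the set $ \dom(A_j) $ in \eqref{eq:domAj}.

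For the first point, I would invoke Proposition \ref{prop:cksequence}: two $ C $-sequences $ (\Psi^{(m)}_k)_{k \in \NNN} $ and $ (\Psi'^{(m)}_k)_{k \in \NNN} $ represent the same element of $ \widehat{\sH} $ if and only if they differ by a family of scalars $ (c_k)_{k \in \NNN} $ with $ \prod_{k \in \NNN} c_k = 1 $, i.e.\ $ \Psi'^{(m)}_k = c_k \Psi^{(m)}_k $. Replacing $ \Psi^{(m)}_j $ by $ c_j \Psi^{(m)}_j $ and every other factor $ \Psi^{(m)}_k $ by $ c_k \Psi^{(m)}_k $, the $ j $-th slot of \eqref{eq:operatorlift} becomes $ A_{j,j}(c_j \Psi^{(m)}_j) = c_j A_{j,j}\Psi^{(m)}_j $ by linearity of $ A_{j,j} $ (and $ c_j \Psi^{(m)}_j \in \dom(A_{j,j}) $ since the domain is a subspace), while the remaining slots pick up the factors $ c_k $. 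So the new tensor family differs from the old $ A_j \Psi^{(m)} $-family precisely by the same scalar family $ (c_k)_{k\in\NNN} $ with $ \prod_k c_k = 1 $, hence it represents the same functional in $ \prod^{\circled{$\otimes$}}_{k \in \NNN}\sH_k $; if $ A_j\Psi^{(m)} $ lies in $ \widehat{\sH} $ (equivalently its representing family is a $ C $-sequence), then so does the new one, and they are equal. One should also note that scaling by a null family—e.g.\ some $ c_k = 0 $—only arises when $ \Vert(\Psi^{(m)})\Vert = 0 $, in which case both sides are the zero functional; this is the only degenerate case and it is harmless.

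For the second point, linearity: given $ \Psi = \sum_{m \in \cM} d_m \Psi^{(m)} $ with each $ A_j \Psi^{(m)} $ well-defined and $ \big\Vert \sum_{m} d_m A_j \Psi^{(m)} \big\Vert < \infty $, one simply \emph{sets} $ A_j \Psi := \sum_{m \in \cM} d_m A_j \Psi^{(m)} $, interpreting the sum as a weak-$ ^* $ limit of finite partial sums inside $ \prod^{\circled{$\otimes$}}_{k}\sH_k $, which converges to an element of $ \widehat{\sH} $ by the finiteness of the norm together with completeness of $ \widehat{\sH} $. The content to check is that this is unambiguous: if $ \sum_m d_m \Psi^{(m)} = \sum_{m'} d'_{m'} \Psi'^{(m')} $ as elements of $ \widehat{\sH} $, the two candidate values for $ A_j\Psi $ agree. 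This follows because $ A_j $ acts on each summand through the single linear map $ A_{j,j} $ in a fixed tensor slot: pairing $ A_j\Psi $ against an arbitrary $ C $-sequence $ (\Phi) $ gives $ \sum_m d_m \langle A_{j,j}\Psi^{(m)}_j,\Phi_j\rangle_j \prod_{k \neq j}\langle \Psi^{(m)}_k,\Phi_k\rangle_k = \sum_m d_m \langle \Psi^{(m)}_j, A_{j,j}^*\Phi_j\rangle_j \prod_{k\neq j}\langle\Psi^{(m)}_k,\Phi_k\rangle_k $ whenever $ \Phi_j \in \dom(A_{j,j}^*) $, which is $ \langle \Psi, A_j^{(\Phi)}\rangle $ for the modified sequence with $ \Phi_j $ replaced by $ A_{j,j}^*\Phi_j $; since such $ (\Phi) $ are weak-$ ^* $ dense, the value of $ A_j\Psi $ as a functional depends only on $ \Psi $, not on its chosen decomposition. (If one prefers to avoid adjoints in the possibly non-densely-defined case, the same conclusion follows more cheaply: any two decompositions of the same $ \Psi $ differ by a finite/countable $ \eRen $-linear—here $ \CCC $-linear—combination that represents $ 0 $, and $ A_j $ is linear on representatives term by term.)

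The main obstacle is the bookkeeping around unboundedness and convergence: $ A_{j,j} $ need not be bounded, so one must be careful that (i) the tensor factor $ A_{j,j}\Psi^{(m)}_j $ makes sense only under the stated hypothesis $ \Psi^{(m)}_j \in \dom(A_{j,j}) $, (ii) the family obtained after applying $ A_{j,j} $ in one slot need not be a $ C $-sequence a priori—it could fail to be normalizable or even fail quasi-convergence—which is exactly why \eqref{eq:domAj} imposes the explicit norm-finiteness condition by hand, and (iii) the infinite $ \eRen $-linear (here $ \CCC $-linear) combination defining $ \dom(A_j) $ must be summed in the weak-$ ^* $ sense consistent with the completion procedure for $ \widehat{\sH} $. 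None of these steps is deep; they are all routine once one recalls that $ A_{j,j} $ acts in a single fixed tensor slot and that the ambiguity in representing elements of $ \widehat{\sH} $ is controlled entirely by Proposition \ref{prop:cksequence}. $ \blacksquare $
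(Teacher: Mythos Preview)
Your proof is correct and follows essentially the same line as the paper's: the key step---independence of $A_j\Psi^{(m)}$ from the representing $C$-sequence---is handled identically via Proposition~\ref{prop:cksequence}, and the extension by linearity to $\dom(A_j)$ is then routine. The only difference is that where you argue independence of the decomposition $\Psi = \sum_m d_m \Psi^{(m)}$ through a pairing/adjoint argument, the paper instead invokes Lemma~\ref{lem:psilinearcombination} to fix a canonical orthonormal decomposition from the outset, so uniqueness is built in and the question of comparing two arbitrary decompositions never arises; this is slightly cleaner and avoids the detour through $\dom(A_{j,j}^*)$ (which, as you note yourself, may not exist).
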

\begin{proof}
For a fixed choice of $ (\Psi^{(m)}) $ representing $ \Psi^{(m)} $ such that $ \Psi^{(m)}_j \in \dom(A_{j,j}) $, well-definedness of $ A_j \Psi^{(m)} $ is easy to see. By Lemma \ref{lem:psilinearcombination}, we can now represent $ \Psi = \sum_{m \in \cM} d_m \Psi^{(m)} $. And if $ \sum_{m \in \cM} d_m A_j \Psi^{(m)} $ converges, then it is independent of the representation, since $ A_j $ is linear. So $ \dom(A_j) $ and $ A_j \Psi $ are well-defined.\\
It remains to prove that $ A_k \Psi^{(m)} $ (and hence $ A_k \Psi $) is independent of the choice of $ (\Psi^{(m)}) $ representing $ \Psi^{(m)} $. For $ m \in \cM $, consider a second representative $ C $-sequence $ (\tilde\Psi^{(m)}) $ with $ \tilde\Psi^{(m)} = \Psi^{(m)} $. By Proposition \ref{prop:cksequence}, $ \tilde\Psi_k^{(m)} = c_k \Psi_k^{(m)} $ for some $ c_k \in \CCC $ with $ \prod_{k \in \NNN} c_k = 1 $. By linearity, $ A_{j,j} \tilde\Psi_k^{(m)} = c_k A_{j,j} \Psi_k^{(m)} $, so also $ A_j \Psi^{(m)} $ and $ A_j \tilde\Psi^{(m)} $ defined by \eqref{eq:operatorlift} just differ by the sequence of complex factors $ (c_k)_{k \in \NNN} $ with $ \prod_{k \in \NNN} c_k = 1 $. Hence, according to Proposition \ref{prop:cksequence}, they correspond to the same functional $ A_j \tilde\Psi^{(m)} = A_j \Psi^{(m)} $.\\
\end{proof}

The operators $ a_j, a^\dagger_j $ may be unbounded. We thus need to carefully choose a domain to make them bounded, by restricting the space of allowed $ \Psi $.
\begin{definition}
In the bosonic case, the \textbf{space $ \cS^\otimes $ with rapid decay in the particle number} is defined as
\begin{equation}
	\cS^\otimes := \Big\{ \Psi \in \bigcap_{\substack{n \in \NNN\\ k \in \NNN_0}} \dom(N_k^n) \subseteq \widehat{\sH} \; \Big\vert \; \Vert N_k^n \Psi \Vert \le c_{k,n} \Vert \Psi \Vert \; \forall k \in \NNN, \; n \in \NNN_0 \Big\} \;,
\label{eq:cSotimes}
\end{equation}
where $ c_{k,n} > 0 $ are suitable constants for each $ n $ and $ k $, and $ N_k $ is the number operator on $ \sH_k $, lifted to $ \widehat{\sH} $. The lift is possible by Lemma \ref{lem:Operatorlift}, which also allows defining $ N_k^n $.\\
In the fermionic case, the maximum particle number per mode is 1, so we always have rapid decay and simply set
\begin{equation}
	\cS^\otimes := \widehat{\sH} \;.
\end{equation}
\label{def:cSotimes}
\end{definition}

\begin{lemma}[Products of $ a^\dagger, a $ are well-defined on the ITP space]\ \\
	Consider the ITP space $ \widehat{\sH} \supseteq \cS^\otimes $ corresponding to the basis $ (\be_j)_{j \in \NNN} $, which is $ (\bg_j)_{j \in \NNN} $ (bosonic) or $ (\boeta_j)_{j \in \NNN} $ (fermionic). Then we can lift $ a_j $ and $ a_j^\dagger $ to $ \widehat{\sH} $ and for $ \bphi \in \cD_{\bg} $ or $ \cD_{\boeta} $ (defined in \eqref{eq:cD}), the expressions
	\begin{equation}
	a^\dagger(\bphi) = \sum_j \phi_j a^\dagger_j \;, \qquad
	a(\bphi) = \sum_j \overline{\phi_j} a_j \;,
\label{eq:adaggerf}
\end{equation}
define linear operators $ a^\dagger(\bphi) : \cS^\otimes \to \cS^\otimes $ and $ a(\bphi) : \cS^\otimes \to \cS^\otimes $.
\label{lem:aadaggerexist}
\end{lemma}
\begin{proof}
	First, note that the sum over $ j $ in \eqref{eq:adaggerf} is finite by definition of $ \cD_{\bg} $ and $ \cD_{\boeta} $.\\
In the fermionic case, $ a_j, a_j^\dagger $ are bounded. So by \cite[Lemma 5.1.1]{vonNeumann1939}, we can lift them to bounded operators on $ \cS^\otimes = \widehat{\sH} $, and also the finite linear combination in \eqref{eq:adaggerf} is a bounded operator on $ \cS^\otimes $.\\
In the bosonic case, where $ j = k $, we have
\begin{equation}
	\Vert a^\dagger_k \Psi \Vert = \Vert \sqrt{N_k+1} \Psi \Vert \le \Vert (N_k+1) \Psi \Vert \le (c_{k,1} + 1) \Vert \Psi \Vert \;,
\end{equation}
so
\begin{equation}
	\Vert a^\dagger(\bphi) \Psi \Vert \le \left( \sum_{k: \phi_k \neq 0} |\phi_k| (c_{k,1} + 1) \right) \Vert \Psi \Vert =: c_1 \Vert \Psi \Vert \;,
\label{eq:l1estimate}
\end{equation}
where the sum over $ k $ contains only finitely many nonzero terms, so we may call it $ c_1 > 0 $. Hence, $ a^\dagger(\bphi) \Psi \in \widehat{\sH} $ is well-defined.
It remains to establish rapid decay. Now,
\begin{equation}
\begin{aligned}
	\Vert N_k^n a^\dagger_k \Psi \Vert &= \Vert N_k^n \sqrt{N_k+1} \Psi \Vert \le \Vert N_k^{n+1} \Psi \Vert + \Vert N_k^n \Psi \Vert \le (c_{k,n+1} + c_{k,n}) \Vert \Psi \Vert\\
	&\le (c_{k,n+1} + c_{k,n}) \Vert a^\dagger_k \Psi \Vert \;,
\end{aligned}
\label{eq:Nknadagger}
\end{equation}
so by summing over $ k $, the rapid decay condition is again satisfied and $ a^\dagger(\bphi) \Psi \in \cS^\otimes $.\\

For $ a(\bphi) \Psi \in \cS^\otimes $, the same finite-sum argument can be used to obtain $ a(\bphi) \Psi \in \widehat{\sH} $. However, verifying rapid decay needs a bit more attention, since $ \Vert a^\dagger_k \Psi \Vert \le \Vert \Psi \Vert $ in \eqref{eq:Nknadagger} does not generalize to $ a_k $, see also Remark \ref{rem:ak}. However, for all $ k $ with $ a_k \Psi \neq 0 $ and $ \phi_k \neq 0 $, there is a fixed ratio $ \frac{\Vert \Psi \Vert}{\Vert a_k \Psi \Vert} =: d_k > 0 $. So with $ d := \max_k d_k $,
\begin{equation}
\begin{aligned}
	\Vert N_k^n a_k \Psi \Vert &\le (c_{k,n+1} + c_{k,n}) \Vert \Psi \Vert \le d \cdot (c_{k,n+1} + c_{k,n}) \Vert a_k \Psi \Vert \;.
\end{aligned}
\label{eq:Nkna}
\end{equation}
For $ a_k \Psi = 0 $, the inequality is trivial. A finite sum over $ k $ then establishes $ a(\bphi) \Psi \in \cS^\otimes $.\\
\end{proof}

\begin{remark} \label{rem:ak}
The condition $ \bphi \in \cD_{\bof} $ is indeed necessary, meaning we may not just allow any $ \bphi \in \ell^2 $ inside $ a^\sharp(\bphi) $, as the following counterexample shows: For the bosonic case ($ j = k $), consider $ \phi_k = \frac{1}{k} $, so $ \bphi \in \ell^2 \setminus \cD_{\bof} $. For each mode $ k $, consider the coherent state $ \Psi_k $ defined sector-wise by
\begin{equation}
	\Psi_k^{(N_k)} = e^{-\frac{\alpha_k}{2}} (N_k!)^{-\frac 12} \alpha_k^{\frac{N_k}{2}} \;,
\label{eq:coherentstate}
\end{equation}
where all $ \alpha_k \in \RRR $ are set equal to the same $ \alpha_k = \alpha > 0 $ and where $ \Vert \Psi_k \Vert_k = 1 $. Then, define the ITP $ \Psi = \prod^\otimes_{k \in \NNN} \Psi_k $. It is easy to see that $ \Psi $ satisfies the rapid decay condition \eqref{eq:cSotimes}, as for each $ \Psi_k $, $ \Vert \Psi_k^{(N_k)} \Vert_k $ decays exponentially in $ N_k $. But still, $ (\alpha_k)_{k \in \NNN} \notin \ell^2 $, so we may think of $ \Psi $ as a ``coherent state with a large displacement'', living outside the Fock space. It is a well-known fact about coherent states that $ a_k \Psi = \alpha \Psi $, so
\begin{equation}
	\Vert a(\bphi) \Psi \Vert
	= \Big\Vert \sum_k \overline{\phi_k} \alpha \Psi \Big\Vert
	= \alpha \sum_k \frac{1}{k} \Vert \Psi \Vert
	= \infty \;.
\end{equation}
Hence, $ a(\bphi) $ is ill-defined on $ \Psi $.\\
	The same happens with any coherent state product \eqref{eq:coherentstate} and any $ \bphi $, where $ \sum_k \overline{\phi_k} \alpha_k = \infty $. In particular, the space of allowed $ (\phi_k)_{k \in \NNN} $ is dual to the one of allowed $ (\alpha_k)_{k \in \NNN} $.\\
\end{remark}

\begin{remark}
It is also possible to define $ a^\sharp(\bphi) $ for more general $ \bphi $, if one restricts the space $ \cS^\otimes $. For instance, it is not too difficult to see that imposing uniform rapid decay via
\begin{equation}
	\Psi \in \cS^\otimes_{\uni}
	:= \left\{ \Psi \in \widehat{\sH} \; \middle\vert \; \Vert N_k^n \Psi \Vert \le c_n \Vert \Psi \Vert \; \forall n, k \in \NNN \right\}
\label{eq:cSotimesuni}
\end{equation}
yields a well-defined product $ a^\sharp(\bphi_1) \ldots a^\sharp(\bphi_n) \Psi \in \widehat{\sH} $ for any $ \bphi_1, \ldots, \bphi_n \in \ell^1 $.\\
Alternatively, one may allow for $ \bphi_1, \ldots, \bphi_n \in \ell^p $, $ p \in (1,2] $ by restricting to
\begin{equation}
	\Psi \in \cS^\otimes_q
	:= \Big\{ \Psi \in \widehat{\sH} \; \Big\vert \; \Vert (N_k+1)^{n/2} \Psi \Vert \le c_k^n \Vert \Psi \Vert \; \text{with} \; \sum_k c_k^q < \infty \quad \forall n \in \NNN \Big\} \;,
\label{eq:cSotimesq}
\end{equation}
where $ q $ is the Hölder dual of $ p $, i.e., $ \frac 1p + \frac 1q = 1 $. Note that $ \cS^\otimes_{\uni} \subset \cS^\otimes_q \subset \cS^\otimes $.\\
\end{remark}

\begin{remark}
It is possible to view the subspace $ \prod^{\otimes C}_{k \in \NNN} \sH_k $ of the equivalence class $ C $ as the \textit{original Fock space} with respect to the vacuum $ \Omega = \prod^\otimes_{k \in \NNN} e_{k,0} $: Recall that each $ \Psi \in \prod^{\otimes C}_{k \in \NNN} \sH_k $ can be written in coordinates as \eqref{eq:Phiincoordinates}:
\begin{equation}
	\Psi = \sum_{n(\cdot) \in F} a(n(\cdot)) \prod^\otimes_{k \in \NNN} e_{k,n(k)} \;,
\label{eq:Phiinspecificcoordinates}
\end{equation}
with $ F $ containing all sequences $ (n(k))_{k \in \NNN} $, such that $ n(k)=0 $ for almost all $ k $. Hence, each $ \prod^\otimes_{k \in \NNN} e_{k,n(k)} $ is a tensor product state of finitely many particles. Since the Fock norm and the $ \widehat{\sH} $-norm coincide, the vector $ \prod^\otimes_{k \in \NNN} e_{k,n(k)} $ can be seen as a Fock space vector normalized to 1. The linear combination \eqref{eq:Phiinspecificcoordinates} with $ \sum_{n(\cdot)} |a(n(\cdot))|^2 $ can hence also be interpreted a Fock space vector.\\
Conversely, each Fock space vector can be written as a countable sequence \eqref{eq:Phiinspecificcoordinates}, since the span of the above-mentioned tensor product states is dense in $ \sF $.\\
\end{remark}

\section{Implementation: Extended}
\label{sec:implementation}

We proceed with defining implementability of $ \cV $ by an extended operator $ \UUU_\cV $ on $ \widehat{\sH} $. Lemma \ref{lem:UUUcV} establishes that $ \UUU_\cV $ is well-defined and Lemma \ref{lem:implementation} gives conditions for when $ \UUU_\cV $ is an implementer in the extended sense.
We then prove our main results by verifying these conditions in Theorem \ref{thm:bosoniccountable} for bosons and Theorem \ref{thm:fermioniccountable} for fermions.

\subsection{Definition of Extended Implementation}
\label{subsec:implementationdef}

The implementer $ \UUU_\cV $ is defined on a dense subspace of Fock space $ \cD_\sF \subset \sF $, that contains a finite number of particles from the space $ \cD_{\bof} $ (defined by \eqref{eq:cD} with $ \be = \bof $):
\begin{equation}
	\cD_\sF := \mathrm{span}\{a^\dagger(\bphi_1) \ldots a^\dagger(\bphi_N) \Omega, \; N \in \NNN_0, \; \bphi_{\ell} \in \cD_{\bof}\} \;.
\label{eq:sFfin}
\end{equation}
	The operator $ \UUU_\cV $ now maps from $ \cD_\sF $ into an ITP space $ \widehat{\sH} $. Note that due to Lemma \ref{lem:aadaggerexist}, the expressions $ b^\dagger(\bphi) := a^\dagger(u \bphi) + a(v \overline{\bphi}) $ and $ b(\bphi) := a(u \bphi) + a^\dagger(v \overline{\bphi}) $ define operators $ \cS^\otimes \to \cS^\otimes $, i.e., in an extended sense.\\

\begin{definition}
	We say that a linear operator $ \UUU_\cV: \cD_\sF \to \widehat{\sH} $ \textbf{implements} a Bogoliubov transformation $ \cV $ \textbf{in the extended sense}, if for all $ \bphi \in \cD_{\bof}, \; \Psi \in \UUU_\cV[\cD_\sF] $, we have that
\begin{equation}
	\UUU_\cV a^\dagger(\bphi) \UUU_\cV^{-1} \Psi = b^\dagger(\bphi) \Psi \;, \qquad
	\UUU_\cV a(\bphi) \UUU_\cV^{-1} \Psi = b(\bphi) \Psi \;.
\label{eq:implementation}
\end{equation}
\label{def:implementation}
\end{definition}
This requires, of course, that $ \UUU_\cV^{-1} $ is well-defined. So before establishing \eqref{eq:implementation}, we have to show that $ \UUU_\cV $ is invertible. This will be one main difficulty within the upcoming proofs.\\
The implementer $ \UUU_\cV $ is constructed as follows: First we define some new vacuum vector $ \Omega_\cV = \UUU_\cV \Omega \in \cS^\otimes $, such that
\begin{equation}
	b(\bphi) \Omega_\cV = 0 \;.
\label{eq:vacuumannihilation}
\end{equation}

Then we make $ \UUU_\cV $ change $ a^\sharp $- into $ b^\sharp $-operators:
\begin{definition}
Given a Bogoliubov transformed vacuum state $ \Omega_\cV \in \cS^\otimes $, the \textbf{Bogoliubov implementer} $ \UUU_\cV $ is formally defined on $ \cD_\sF $ by
\begin{equation}
	\UUU_\cV a^\dagger(\bphi_1) \ldots a^\dagger(\bphi_n) \Omega := b^\dagger(\bphi_1) \ldots b^\dagger(\bphi_n) \Omega_\cV \;,
\label{eq:transformbogoliubovstate}
\end{equation}
with $ \bphi_{\ell} \in \cD_{\bof} $ and $ b^\dagger(\bof_j) = (a^\dagger(u \bof_j) + a(v \overline{\bof_j})) $ for all basis vectors $ \bof_j $ in $ \bof $.\\
\label{def:transformbogoliubovstate}
\end{definition}

\begin{lemma}[$ \UUU_\cV $ is well-defined]\ \\
	If $ \Omega_\cV \in \cS^\otimes \subseteq \widehat{\sH} $ (see \eqref{eq:cSotimes}), then \eqref{eq:transformbogoliubovstate} defines an operator $ \UUU_\cV: \cD_\sF \to \cS^\otimes $.\\
\label{lem:UUUcV}
\end{lemma}

\begin{proof}
	Both $ u \bof_j $ and $ v \overline{\bof_j} $ are proportional to the same basis vector $ \be_j $ (bosonic: $ \bg_j $, fermionic: $ \boeta_j $, see~\cite{Solovej2014}). So the right-hand side of \eqref{eq:transformbogoliubovstate} is a finite linear combination of vectors $ a^\sharp(\be_{j_1}) \ldots a^\sharp(\be_{j_n}) \Omega_\cV $. Now, $ \Omega_\cV \in \cS^\otimes $ and by Lemma \ref{lem:aadaggerexist}, each application of $ a^\sharp(\be_j) $ leaves the vector in $ \cS^\otimes $.\\
\end{proof}

\begin{lemma}[Conditions for an implementer $ \UUU_\cV $]
	Suppose that for a Bogoliubov transformation (i.e., $ \cV $ satisfying \eqref{eq:Bogoliubovrelations}) an $ \Omega_\cV $ satisfying $ b(\bphi) \Omega_\cV = 0 $ for all $ \bphi \in \cD_{\bof} \subseteq \ell^2 $ has been found, such that $ \UUU_\cV $ in \eqref{eq:transformbogoliubovstate} is well-defined on $ \cD_\sF $ and has an inverse $ \UUU_\cV^{-1} $ defined on $ \UUU_\cV[\cD_\sF] $. Then, $ \UUU_\cV $ implements $ \cV $ in the sense of \eqref{eq:implementation} on all $ \Psi \in \UUU_\cV[\cD_\sF] $.\\
\label{lem:implementation}
\end{lemma}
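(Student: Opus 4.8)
The plan is to show that, given the hypotheses, the two identities in \eqref{eq:implementation} follow by a direct algebraic manipulation using the defining property \eqref{eq:transformbogoliubovstate} of $ \UUU_\cV $ together with the fact that $ \cD_\sF $ is spanned by product vectors $ a^\dagger(\bphi_1) \ldots a^\dagger(\bphi_n) \Omega $. Since $ \UUU_\cV^{-1} $ is assumed to exist on $ \UUU_\cV[\cD_\sF] $, it suffices to verify \eqref{eq:implementation} on vectors of the form $ \Psi = \UUU_\cV a^\dagger(\bphi_1) \ldots a^\dagger(\bphi_n) \Omega $ with $ \bphi_\ell \in \cD_{\bof} $, and then extend by linearity. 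For such $ \Psi $ we have $ \UUU_\cV^{-1} \Psi = a^\dagger(\bphi_1) \ldots a^\dagger(\bphi_n) \Omega $, so the creation-operator identity reduces to
\begin{equation*}
	\UUU_\cV \, a^\dagger(\bphi) a^\dagger(\bphi_1) \ldots a^\dagger(\bphi_n) \Omega = b^\dagger(\bphi) \, \UUU_\cV a^\dagger(\bphi_1) \ldots a^\dagger(\bphi_n) \Omega ,
\end{equation*}
which is immediate from \eqref{eq:transformbogoliubovstate}: the left side equals $ b^\dagger(\bphi) b^\dagger(\bphi_1) \ldots b^\dagger(\bphi_n) \Omega_\cV $ by definition (noting $ \bphi \in \cD_{\bof} $ means $ a^\dagger(\bphi) $ applied first is still of the product form covered by the definition), and the right side equals the same thing by applying \eqref{eq:transformbogoliubovstate} to the inner product vector.

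The annihilation-operator identity requires slightly more care, because $ a(\bphi) $ is not a creation operator and so $ a(\bphi) a^\dagger(\bphi_1) \ldots a^\dagger(\bphi_n) \Omega $ is not literally of the product form in \eqref{eq:transformbogoliubovstate}. The key step is to commute $ a(\bphi) $ through to the right using the CCR/CAR \eqref{eq:CARCCR}, which remain valid for the operator extensions (as noted after \eqref{eq:aadaggerESS} and in Lemma \ref{lem:aadaggerexist}). This yields
\begin{equation*}
	a(\bphi) a^\dagger(\bphi_1) \ldots a^\dagger(\bphi_n) \Omega = \sum_{\ell=1}^n (\pm 1)^{\ell-1} \langle \bphi, \bphi_\ell \rangle \, a^\dagger(\bphi_1) \ldots \widehat{a^\dagger(\bphi_\ell)} \ldots a^\dagger(\bphi_n) \Omega ,
\end{equation*}
since $ a(\bphi) \Omega = 0 $ and each contracted term is again a product vector in $ \cD_\sF $. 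Applying $ \UUU_\cV $ and using \eqref{eq:transformbogoliubovstate} gives $ \sum_\ell (\pm 1)^{\ell-1} \langle \bphi, \bphi_\ell \rangle \, b^\dagger(\bphi_1) \ldots \widehat{b^\dagger(\bphi_\ell)} \ldots b^\dagger(\bphi_n) \Omega_\cV $. One then checks that this equals $ b(\bphi) b^\dagger(\bphi_1) \ldots b^\dagger(\bphi_n) \Omega_\cV $: expand the latter using the CCR/CAR \eqref{eq:CARCCRb} for the $ b^\sharp $ (which hold by Lemma \ref{lem:bogoliubovrelationssurvive}, since $ \cV $ is a Bogoliubov transformation), picking up exactly the same contraction terms $ \langle \bphi, \bphi_\ell \rangle $, and use $ b(\bphi) \Omega_\cV = 0 $, which is the hypothesis \eqref{eq:vacuumannihilation}. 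Hence $ \UUU_\cV a(\bphi) \UUU_\cV^{-1} \Psi = b(\bphi) \Psi $.

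A couple of technical points need attention along the way. First, in the annihilation identity one must ensure that all intermediate expressions genuinely live in the relevant space ($ \cS^\otimes $ or $ \FB $), so that the manipulations are legitimate; this is guaranteed by Lemmas \ref{lem:aadaggerexist} and \ref{lem:aadaggerexistESS}, which say $ a^\sharp(\bphi) $ (and hence $ b^\sharp(\bphi) $, being a combination of two such operators with form factors $ u\bof_j, v\overline{\bof_j} $ proportional to basis vectors in $ \cE $) map these spaces into themselves — with the caveat that for the extended state space $ a(\bphi) $ with $ \bphi \in \cE \setminus \cD $ could land in $ \FB_{\ex} $, but here all form factors are in $ \cD_{\bof} $, so we stay in $ \FB $. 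Second, linearity: both sides of \eqref{eq:implementation} are linear in $ \Psi $, and $ \UUU_\cV[\cD_\sF] $ is by construction the span of the images of the product vectors, so verifying the identities on those generators suffices.

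I do not expect a serious obstacle here: the real work — constructing $ \Omega_\cV $, and above all proving $ \UUU_\cV $ is invertible — has been explicitly quarantined into the hypotheses of this lemma and deferred to Theorems \ref{thm:bosoniccountable}--\ref{thm:fermioniccountableESS}. The only mild subtlety is bookkeeping the signs $ (\pm 1)^{\ell-1} $ in the fermionic contractions and making sure the $ b^\sharp $-side produces precisely matching signs, which it does because the $ b^\sharp $ satisfy the same CAR as the $ a^\sharp $. So the proof is essentially a short, careful "push the annihilation operator to the vacuum on both sides and compare" argument.
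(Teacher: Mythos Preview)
Your proposal is correct and follows essentially the same approach as the paper: reduce by linearity to product vectors, handle the $a^\dagger$ case directly from the definition \eqref{eq:transformbogoliubovstate}, and for the $a$ case commute the annihilator to the vacuum on both the $a$-side (using $a(\bphi)\Omega=0$) and the $b$-side (using the hypothesis $b(\bphi)\Omega_\cV=0$ and the CAR/CCR for $b^\sharp$ via Lemma~\ref{lem:bogoliubovrelationssurvive}), then match the contraction sums. The paper writes the sign as $\varepsilon^{\ell+1}$ with $\varepsilon=\pm1$ rather than your $(\pm1)^{\ell-1}$, but this is the same bookkeeping.
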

\begin{proof}
	We write $ \Psi = \UUU_\cV \Phi $ with $ \Phi \in \cD_\sF $. By linearity, it suffices to prove the statement for $ \Phi = a^\dagger(\bphi_1) \ldots a^\dagger(\bphi_n) \Omega $, which implies by \eqref{eq:transformbogoliubovstate} that $ \Psi = b^\dagger(\bphi_1) \ldots b^\dagger(\bphi_n) \Omega_\cV $. Checking the first statement of \eqref{eq:implementation}, i.e., $ \UUU_\cV a^\dagger(\bphi) \UUU_\cV ^{-1} \Psi= b^\dagger(\bphi) \Psi $ is straightforward. For the second statement, we make use of the CAR/CCR of $ a $- and $ b $-operators, using $ \varepsilon = (-1) $ for fermions and $ \varepsilon = 1 $ for bosons. Here, the CAR/CCR are valid for $ a $-operators by definition, and for $ b $-operators, since by means of Lemma \ref{lem:bogoliubovrelationssurvive}, the Bogoliubov relations survive the extension.\\
\begin{equation}
\begin{aligned}
	\UUU_\cV a(\bphi) \UUU_\cV ^{-1} \Psi
	&= \sum_{\ell = 1}^n \UUU_\cV a^\dagger(\bphi_1) \ldots a^\dagger(\bphi_{\ell-1}) \varepsilon^{\ell+1} \langle \bphi, \bphi_\ell \rangle a^\dagger(\bphi_{\ell+1}) \ldots a^\dagger(\bphi_n) \Omega\\
	&= b(\bphi) b^\dagger(\bphi_1) \ldots b^\dagger(\bphi_n) \Omega_\cV - \varepsilon^{n+1} b^\dagger(\bphi_1) \ldots b^\dagger(\bphi_n) b(\bphi) \Omega_\cV\\
	&\overset{\eqref{eq:vacuumannihilation}}{=} b(\bphi) b^\dagger(\bphi_1) \ldots b^\dagger(\bphi_n) \Omega_\cV = b(\bphi) \Psi \;,\\
\end{aligned}
\end{equation}
where we used the convention that the above sums are set to zero for $ N = 0 $.\\

\end{proof}

\subsection{Bosonic Case}
\label{subsec:countablevboson}

We now show that for a suitable choice of $ \Omega_\cV $, the operator $ \UUU_\cV $ defined in \eqref{eq:transformbogoliubovstate} indeed implements the Bogoliubov transformation $ \cV $.\\

\begin{theorem}[Implementation via ITP works, bosonic]
	 Consider a bosonic Bogoliubov transformation $ \cV = \left( \begin{smallmatrix} u & v \\ \overline{v} & \overline{u} \end{smallmatrix} \right) $ with $ v^*v $ having countable spectrum. Let $ \widehat{\sH} = \prod_{k \in \NNN}^\otimes \sH_k $ be the ITP space (Definition \ref{def:bosonicITP}) with respect to the basis $ (\bg_k)_{k \in \NNN} \subset \ell^2 $. Define the new vacuum vector
\begin{equation}
	\Omega_\cV = \prod_{k \in \NNN}^\otimes \Omega_{k,\cV} := \prod_{k \in \NNN}^\otimes \left( \left( \left(1 - \tfrac{\nu_k^2}{\mu_k^2} \right)^{1/4} \right) \exp{\left( -\tfrac{\nu_k}{2 \mu_k} (a^\dagger(\bg_k))^2 \right) } \Omega_k \right) \;,
\label{eq:bosonicbogoliubovvacuumcountable}
\end{equation}
	where $ \mu_k, \nu_k $ are the singular values of $ u, v $ as in Section \ref{subsec:implementation1}. Then, $ \cV $ is implemented in the sense of \eqref{eq:implementation} by $ \UUU_\cV: \cD_\sF \to \widehat{\sH} $ \eqref{eq:transformbogoliubovstate}.\\
\label{thm:bosoniccountable}
\end{theorem}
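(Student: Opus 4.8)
The plan is to apply Lemma \ref{lem:implementation}: I must verify three things, namely that $\Omega_\cV$ as defined in \eqref{eq:bosonicbogoliubovvacuumcountable} lies in $\cS^\otimes$, that it is annihilated by every $b(\bphi)$ with $\bphi \in \cD_\bof$, and that the resulting $\UUU_\cV$ from \eqref{eq:transformbogoliubovstate} is invertible on $\UUU_\cV[\cD_\sF]$. Once these are in place, Lemma \ref{lem:implementation} immediately gives the implementation property \eqref{eq:implementation}, and Lemma \ref{lem:UUUcV} already guarantees $\UUU_\cV$ is a well-defined map $\cD_\sF \to \cS^\otimes$.

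**Step 1: $\Omega_\cV \in \cS^\otimes$.** Each factor $\Omega_{k,\cV}$ is the one-mode squeezed vacuum over the mode $\bg_k$, with $N_k$-sector coefficients $\propto (\nu_k/\mu_k)^{N_k/2}$ on even sectors and zero on odd ones. Since $\mu_k^2 - \nu_k^2 = 1$ gives $\nu_k/\mu_k < 1$, the coefficients decay geometrically in $N_k$, so $\Omega_{k,\cV}$ lies in $\dom(N_k^n)$ for all $n$ and $\Vert N_k^n \Omega_{k,\cV}\Vert \le c_{k,n}\Vert\Omega_{k,\cV}\Vert$ for suitable $c_{k,n}$; this is precisely the rapid-decay condition \eqref{eq:cSotimes}. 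I also need $(\Omega_{k,\cV})_{k\in\NNN}$ to be a $C$-sequence, i.e.\ $\prod_k \Vert\Omega_{k,\cV}\Vert_k < \infty$. But the normalization prefactor $(1-\nu_k^2/\mu_k^2)^{1/4} = \mu_k^{-1/2}$ makes each $\Vert\Omega_{k,\cV}\Vert_k = 1$ exactly, so the product is trivially $1$ and $\Omega_\cV$ is a genuine element of $\widehat{\sH}$, in fact of the equivalence class of the bare vacuum $\Omega = \prod^\otimes_k e_{k,0}$.

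**Step 2: $b(\bphi)\Omega_\cV = 0$.** It suffices to check $b(\bof_j)\Omega_\cV = 0$ for each basis vector $\bof_j$, since $\cD_\bof$ is their finite span. By \eqref{eq:munu}, $b(\bof_j) = a^\dagger(v\overline{\bof_j}) + a(u\bof_j) = \nu_j a^\dagger(\bg_j) + \mu_j a(\bg_j)$, which acts only on the $j$-th tensor factor. On that factor, the recursion $(\nu_j a^\dagger(\bg_j) + \mu_j a(\bg_j))\Omega_{j,\cV} = 0$ is exactly the defining equation of the one-mode squeezed vacuum \eqref{eq:bosonicbogoliubovvacuum}, restricted to a single mode, and is verified by the standard generating-function / recursion-relation computation sector by sector. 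Here I use that the operator lift of Lemma \ref{lem:Operatorlift} lets $a^\sharp(\bg_j)$ act factorwise, and that $\Omega_\cV$ is a $C$-sequence so the lift applies.

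**Step 3: invertibility of $\UUU_\cV$.** This is the main obstacle. The strategy is to exhibit $\UUU_\cV$ as (a lift of) the product $\prod_{k\in\NNN} \UUU_{k,\cV}$ of one-mode squeezing unitaries as in \eqref{eq:bosonicbogoliubovimplementer}, together with the basis-change map $\UUU_{\bg\bof}$. Each $\UUU_{k,\cV}$ is a genuine unitary on $\sH_k$, hence invertible; by Lemma \ref{lem:Operatorlift} (applied mode by mode, or rather to the finitely many modes touched by any given vector of $\cD_\sF$) it lifts to an operator on $\widehat{\sH}$ whose inverse is the lift of $\UUU_{k,\cV}^{-1}$. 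The point is that a vector in $\cD_\sF$ has the form $a^\dagger(\bphi_1)\cdots a^\dagger(\bphi_n)\Omega$ with $\bphi_\ell \in \cD_\bof$, so after applying $\UUU_{\bg\bof}$ only finitely many modes $\bg_k$ are excited relative to $\Omega_\cV$; on this finite set of modes the product of squeezings is a finite composition of unitaries, and $\UUU_\cV^{-1}$ is defined on $\UUU_\cV[\cD_\sF]$ by undoing these finitely many steps and then applying $\UUU_{\bof\bg}$. One must check that $\UUU_\cV^{-1}$ so defined is single-valued — i.e.\ independent of the representation of an element of $\UUU_\cV[\cD_\sF]$ as such a finite linear combination — which follows because $\UUU_\cV$ is injective on $\cD_\sF$: distinct basis product vectors map to linearly independent vectors of $\widehat{\sH}$, as the bare product vectors $\bof_{j_1}\otimes\cdots\otimes\bof_{j_n}$ are mapped by the composite isometry-like construction to an orthogonal-up-to-the-squeezing family. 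Once $\UUU_\cV^{-1}$ is in hand on $\UUU_\cV[\cD_\sF]$, Lemma \ref{lem:implementation} closes the argument.
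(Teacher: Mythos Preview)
Your overall strategy matches the paper's: invoke Lemma \ref{lem:implementation} after checking (i) $\Omega_\cV \in \cS^\otimes$, (ii) $b(\bphi)\Omega_\cV = 0$, and (iii) injectivity of $\UUU_\cV$. Steps 1 and 2 are essentially what the paper does.

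Two points deserve comment. First, your aside in Step 1 that $\Omega_\cV$ lies ``in fact in the equivalence class of the bare vacuum $\Omega$'' is false precisely in the regime of interest: one has $\langle \Omega_k, \Omega_{k,\cV}\rangle_k = (1-\nu_k^2/\mu_k^2)^{1/4} = \mu_k^{-1/2}$, and $\sum_k |\mu_k^{-1/2}-1| < \infty$ is essentially the Shale--Stinespring condition. When it fails, $\Omega_\cV$ sits in a \emph{different} equivalence class---this is the whole point. Fortunately the claim plays no role in your argument; drop it.

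Second, your Step 3 is the right idea but under-argued. The phrase ``orthogonal-up-to-the-squeezing family'' does not establish linear independence, and your attempt to \emph{construct} $\UUU_\cV^{-1}$ by undoing finitely many one-mode unitaries obscures the issue (since $\UUU_\cV$ changes \emph{every} tensor factor, not just finitely many). The paper's route is cleaner: reduce injectivity to linear independence of $\{b^\dagger(\bof_{k_1})\cdots b^\dagger(\bof_{k_N})\Omega_\cV\}$ in $\widehat{\sH}$; since applying $b^\dagger_k$ preserves the ITP structure, this reduces mode-by-mode to linear independence of $\{(b^\dagger_k)^N \Omega_{k,\cV} : N \in \NNN_0\}$ in $\sH_k$; and that set is the image, under the one-mode unitary $\UUU_{k,\cV}$, of the orthogonal zero-free set $\{(a^\dagger(\bof_k))^N \Omega_k\}$. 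You already have the key ingredient (one-mode unitarity); just organize the conclusion this way.
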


\begin{proof}
By Lemma \ref{lem:implementation}, we need to establish the following four points:
\begin{enumerate}
\item The new vacuum $ \Omega_\cV $ is well-defined
\item $ \UUU_\cV $ is well-defined on $ \cD_\sF $ (Lemma \ref{lem:UUUcV} will be used, here)
\item $ b(\bphi) \Omega_\cV = 0 $
\item $ \UUU_\cV^{-1} $ exists on $ \UUU_\cV [\cD_\sF] $
\end{enumerate}

\noindent \underline{1.) Well-definedness of $ \Omega_\cV $}: Expression \eqref{eq:bosonicbogoliubovvacuumcountable} is an ITP of one normalized factor per space $ \sH_k $. Hence, it is a $ C $-sequence, which can be identified with $ \Omega_\cV \in \widehat{\sH} $.\\

\noindent \underline{2.) Well-definedness of $ \UUU_\cV $}: follows from Lemma \ref{lem:UUUcV}, if we can establish $ \Omega_\cV \in \cS^\otimes $. By definition of $ \cS^\otimes $, we need to verify the rapid decay condition $ \Vert N_k^n \Omega_\cV \Vert \le c_{k,n} \Vert \Omega_\cV \Vert $. Since all $ \Omega_{k,\cV} $ are normalized, this boils down to proving $ \Vert N_k^n \Omega_{k,\cV} \Vert_k^2 \le c_{k,n}^2 $. We explicitly compute
\begin{equation}
	\Vert N_k^n \Omega_{k,\cV} \Vert_k^2 = (1-4t^2)^{1/2} \sum_{N = 0}^\infty \frac{t^{2N} (2N)!}{(N!)^2} (2N)^{2n} \;,
\label{eq:rapiddecaycheck}
\end{equation}
with $ t = \left\vert \tfrac{\nu_k}{2 \mu_k} \right\vert \in [0,1/2) $. Now, the function
\begin{equation}
	N \mapsto \frac{t^{2N} (2N)!}{(N!)^2} (2N)^{2n} \le (2t)^{2N} (2N)^{2n} \;,
\end{equation}
is positive, bounded and decays exponentially at $ N \to \infty $ since $ 0 \le 2t < 1 $. So,
\begin{equation}
	\sum_{N = 0}^\infty \frac{t^{2N} (2N)!}{(N!)^2} (2N)^{2n} \le \text{cons.} + \sum_{N = 0}^\infty (2t)^{2N} (2N)^{2n} =: c_{k,n}^2 < \infty \;,
\end{equation}
which establishes $ \Omega_\cV \in \cS^\otimes $ and hence the claim.\\

\noindent \underline{3.) $ b(\bphi) $ annihilates $ \Omega_\cV $}: This is straightforward to check: Since $ \bphi \in \cD_{\bof} $, the following sum over $ k $ is finite:
\begin{equation}
	b(\bphi) \Omega_\cV = \sum_k \overline{\phi_k} b(\bof_k) \Omega_\cV.
\end{equation}
As in the case, where the Shale condition holds, each $ b(\bof_k) $ annihilates the corresponding vacuum vector $ \Omega_{k,\cV} $, so the finite sum above is 0.\\

\noindent \underline{4.) Well-definedness of $ \UUU_\cV^{-1} $}: Consider the basis $ \{a^\dagger(\bof_{k_1}) \ldots a^\dagger(\bof_{k_N}) \Omega \; \mid \; N \in \NNN_0, \; k_{\ell} \in \NNN \} $ of $ \cD_\sF $, where $ \bof_{k_{\ell}} $ are chosen out of the basis $ (\bof_j)_{j \in \NNN} $ (with $ j = k $). If we can show that the set
\begin{equation}
	\{b^\dagger(\bof_{k_1}) \ldots b^\dagger(\bof_{k_N}) \Omega_\cV \; \mid \; N \in \NNN_0, \; k_{\ell} \in \NNN \} \subset \widehat{\sH} \;,
\label{eq:bsetbosonic}
\end{equation}
with $ b^\dagger(\bof_k) = \mu_k a^\dagger(\bg_k) + \nu_k a(\bg_k) $, is linearly independent, we are done, since then $ \mathrm{ker}(\UUU_\cV) = \{0\} $, so $ \UUU_\cV $ is injective and hence invertible on its image.\\
	Now, as applications of $ b^\dagger_k := b^\dagger(\bof_k) $ and $ \UUU_\cV $ preserve the ITP structure, it suffices to show that on each mode $ k $, the set
\begin{equation}
	\{(b^\dagger_k)^N \Omega_{k,\cV} \; \mid \; N \in \NNN_0 \} \subset \sH_k
\label{eq:bNset}
\end{equation}
is linearly independent. But \eqref{eq:bNset} is just the image of the set
\begin{equation}
	\{(a^\dagger(\bof_k))^N \Omega_k \; \mid \; N \in \NNN_0 \} \subset \sF(\{\bof_k\})
\label{eq:aNset}
\end{equation}
under a one-mode Bogoliubov transformation $ \UUU_{k,\cV}: \sF(\{\bof_k\}) \to \sH_k $ (defined as $ \UUU_{j,\cV} $ in \eqref{eq:bosonicbogoliubovimplementer}). For a finite number $ m $ of modes, Bogoliubov transformations can always be implemented by unitary operators, as then the operator $ v: \CCC^m \to \CCC^m $ is always Hilbert--Schmidt. Now, \eqref{eq:aNset} is an orthogonal set with no vector being 0, so its image \eqref{eq:bNset} under $ \UUU_{k,\cV} $ is also orthogonal with no vector being 0, and hence it is linearly independent.\\
\end{proof}

\subsection{Fermionic Case}
\label{subsec:countablevfermion}

\begin{theorem}[Implementation via ITP works, fermionic]
	Consider a fermionic Bogoliubov transformation $ \cV = \left( \begin{smallmatrix} u & v \\ \overline{v} & \overline{u} \end{smallmatrix} \right) $ with $ v^*v $ having countable spectrum. Let $ \widehat{\sH} = \prod_{k \in \NNN}^\otimes \sH_k $ be the ITP space (Definition \ref{def:fermionicITP}). Define the new vacuum vector
\begin{equation}
\begin{aligned}
	\Omega_\cV 
	= &\prod_{j \in J''}^\otimes \Omega_{j,\cV} \otimes \prod_{i \in I'}^\otimes \Omega_{2i,2i-1,\cV} \\
	:= &\left(\prod_{j \in J''_1}^\otimes a^\dagger(\boeta_j) \Omega_j \right) \otimes \left(\prod_{j \in J''_0}^\otimes \Omega_j \right) \otimes \left( \prod_{i \in I'}^\otimes (\alpha_i - \beta_i a^\dagger(\boeta_{2i}) a^\dagger(\boeta_{2i-1})) \Omega_{2i,2i-1} \right) \;,
\end{aligned}
\label{eq:fermionicbogoliubovvacuum}
\end{equation}
	with $ \alpha_i, \beta_i $ being the singular values of $ u, v $ as in \eqref{eq:fermionictrafo3}, and with $ \Omega_{2i,2i-1}, \Omega_{2i,2i-1,\cV} \in \sH_{k(i)} $. Then, $ \cV $ is implemented in the sense of \eqref{eq:implementation} by $ \UUU_\cV: \cD_\sF \to \widehat{\sH} $ \eqref{eq:transformbogoliubovstate}.\\
\label{thm:fermioniccountable}
\end{theorem}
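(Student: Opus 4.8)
The plan is to verify, for the $\Omega_\cV$ of \eqref{eq:fermionicbogoliubovvacuum}, the four hypotheses of Lemma~\ref{lem:implementation}, following the scheme of the proof of Theorem~\ref{thm:bosoniccountable}: (1)~$\Omega_\cV$ is a well-defined vector in $\widehat{\sH}$; (2)~$\UUU_\cV$ of \eqref{eq:transformbogoliubovstate} is well-defined on $\cD_\sF$; (3)~$b(\bphi)\Omega_\cV = 0$ for all $\bphi \in \cD_{\bof}$; and (4)~$\UUU_\cV$ is injective, so $\UUU_\cV^{-1}$ exists on $\UUU_\cV[\cD_\sF]$. Compared with the bosonic case most of this is lighter, because the Pauli principle bounds occupation numbers so that $\cS^\otimes = \widehat{\sH}$ and no rapid-decay estimate is needed.

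For (1), \eqref{eq:fermionicbogoliubovvacuum} is an infinite tensor product of one factor per tensor slot $\sH_k$: the factors $a^\dagger(\boeta_j)\Omega_j$ ($j\in J''_1$), $\Omega_j$ ($j\in J''_0$), and $(\alpha_i - \beta_i a^\dagger(\boeta_{2i})a^\dagger(\boeta_{2i-1}))\Omega_{2i,2i-1}$ ($i\in I'$) each have norm $1$ — the last because the two summands are orthogonal and $\alpha_i^2+\beta_i^2=1$ (Section~\ref{subsubsec:fermioniccaseimplementation}). Hence $(\Omega_{k,\cV})_{k\in\NNN}$ is a $C$-sequence and identifies with $\Omega_\cV\in\widehat{\sH}$. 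Step (2) is then immediate: since $\cS^\otimes=\widehat{\sH}$ in the fermionic case, $\Omega_\cV\in\cS^\otimes$ trivially, and Lemma~\ref{lem:UUUcV} directly yields a well-defined operator $\UUU_\cV:\cD_\sF\to\cS^\otimes$.

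For (3), as $\bphi\in\cD_{\bof}$ the sum $b(\bphi)\Omega_\cV=\sum_k\phi_k\,b(\bof_k)\Omega_\cV$ is finite, so it suffices that $b(\bof_k)\Omega_\cV=0$ for each basis index $k$, which is exactly the mode-by-mode computation already performed in Section~\ref{subsubsec:fermioniccaseimplementation}, now read inside the ITP. Concretely: for $k=j\in J''_0$, relation \eqref{eq:fermionictrafo2} gives $b(\bof_j)\propto a(\boeta_j)$, annihilating the empty mode $\Omega_j$; for $k=j\in J''_1$, \eqref{eq:fermionictrafo1} gives $b(\bof_j)\propto a^\dagger(\boeta_j)$, annihilating the filled mode $a^\dagger(\boeta_j)\Omega_j$ by Pauli exclusion; and for $k\in\{2i-1,2i\}$ with $i\in I'$, \eqref{eq:fermionictrafo3} shows $b(\bof_{2i})$ and $b(\bof_{2i-1})$ act only on the two-mode slot $\sH_{k(i)}$ and annihilate $\Omega_{2i,2i-1,\cV}$ — which is precisely how \eqref{eq:fermionifbogoliubovvacuum} was derived. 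Since $b(\bof_k)$ acts as the identity on all other tensor factors, $b(\bof_k)\Omega_\cV=0$.

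The substantive step is (4). As in the bosonic proof, $\{a^\dagger(\bof_{k_1})\cdots a^\dagger(\bof_{k_N})\Omega\mid N\in\NNN_0,\ k_\ell\in\NNN\}$ spans $\cD_\sF$, so injectivity of $\UUU_\cV$ follows once its image $\{b^\dagger(\bof_{k_1})\cdots b^\dagger(\bof_{k_N})\Omega_\cV\}$ is shown linearly independent. Each $b^\dagger(\bof_k)$ acts nontrivially only on the tensor slot $\sH_{k(j)}$ or $\sH_{k(i)}$ containing mode $k$ (for a pair slot, on the $\CCC^4$-factor carrying both $\boeta_{2i-1}$ and $\boeta_{2i}$), and, as the application of these operators and of $\UUU_\cV$ preserves the ITP structure, $\UUU_\cV$ restricts on each such slot to the finite-dimensional Bogoliubov implementer $\UUU_{j,\cV}$ or $\UUU_{2i,2i-1,\cV}$ of \eqref{eq:fermionicbogoliubovimplementer}. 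Hence it suffices to check linear independence slot by slot. On a one-mode slot $\sF(\{\bof_j\})\cong\CCC^2$ the CAR give $(b^\dagger(\bof_j))^2=0$, so the image set is $\{\Omega_{j,\cV},\ b^\dagger(\bof_j)\Omega_{j,\cV}\}=\UUU_{j,\cV}\{\Omega_j,\ a^\dagger(\bof_j)\Omega_j\}$, the image of an orthonormal basis under a unitary, hence orthonormal. On a pair slot $\sF(\{\bof_{2i-1}\})\otimes\sF(\{\bof_{2i}\})\cong\CCC^4$ the four vectors $(a^\dagger(\bof_{2i}))^m(a^\dagger(\bof_{2i-1}))^n\Omega_{2i,2i-1}$ with $m,n\in\{0,1\}$ form an orthonormal basis whose image under $\UUU_{2i,2i-1,\cV}$ is orthonormal, since a Bogoliubov transformation on finitely many modes is always unitarily implementable ($v$ being Hilbert--Schmidt there). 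Therefore every nontrivial linear combination of the $b^\dagger(\bof_{k_1})\cdots b^\dagger(\bof_{k_N})\Omega_\cV$ is nonzero, $\mathrm{ker}(\UUU_\cV)=\{0\}$, and Lemma~\ref{lem:implementation} gives the claim. The one point needing care relative to the bosonic ITP case — and the reason \eqref{eq:fermionicITP} groups $\boeta_{2i-1}$ and $\boeta_{2i}$ into a single $\CCC^4$-factor — is that for the Cooper-pair sector the tensor factorization underlying steps (3) and (4) is over mode pairs rather than single modes.
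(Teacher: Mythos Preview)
Your proof is correct and follows essentially the same approach as the paper: you verify the same four hypotheses of Lemma~\ref{lem:implementation}, with the same arguments (normalized tensor factors give a $C$-sequence, $\cS^\otimes=\widehat{\sH}$ makes Lemma~\ref{lem:UUUcV} immediate, annihilation is checked mode-by-mode, and injectivity is reduced slot-by-slot to unitarity of the finite-mode implementers $\UUU_{j,\cV}$ and $\UUU_{2i,2i-1,\cV}$). Your closing remark on why Cooper pairs must be grouped into a single $\CCC^4$-slot is precisely the content of the paper's Remark~\ref{rem:fermionicITP}.
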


\begin{proof}

Again, by Lemma \ref{lem:implementation}, it suffices to establish the four points in the proof of the bosonic case (Theorem \ref{thm:bosoniccountable}). Points 1.) and 3.) are analogous, while 2.) follows from Lemma \ref{lem:UUUcV} as $ \cS^\otimes = \widehat{\sH} $.\\

\noindent \underline{4.) Well-definedness of $ \UUU_\cV^{-1} $}: We proceed as in proof step 4.) in Theorem \ref{thm:bosoniccountable}. So we are done if we can prove that the set
\begin{equation}
	\{b^\dagger(\bof_{j_1}) \ldots b^\dagger(\bof_{j_N}) \Omega_\cV \; \mid \; N \in \NNN_0, \; j_{\ell} \in J \} \subset \widehat{\sH}
\label{eq:bsetfermionic}
\end{equation}
is linearly independent. This again boils down to proving a linear independence statement on each $ \sH_k $. The crucial difference now is, that each tensor product factor $ \sH_k $ may be a Fock space over either one or two modes. We abbreviate $ b^\sharp_j := b^\sharp(\bof_j) $ and $ a^\sharp_j := a^\sharp(\boeta_j) $. For two-mode factors indexed by $ i \in I' $, we need to prove linear independence of the set
\begin{equation}
	\{(b^\dagger_{2i})^{N_1} (b^\dagger_{2i-1})^{N_2} \Omega_{k(i),\cV} \; \mid \; N_1,N_2 \in \{0, 1\} \} \subset \sH_{k(i)} \;.
\label{eq:bN1N2set}
\end{equation}
This follows from $ \UUU_{2i,2i-1,\cV} $ (see \eqref{eq:fermionicbogoliubovimplementer}) being unitary and mapping the set
\begin{equation}
	\{(a^\dagger_{2i})^{N_1} (a^\dagger_{2i-1})^{N_2} \Omega_{k(i)} \; \mid \; N_1,N_2 \in \{0, 1\} \} \subset \sF(\{\bof_{2i}\}) \otimes \sF(\{\bof_{2i-1}\})
\label{eq:aN1N2set}
\end{equation}
onto \eqref{eq:bN1N2set}. For one-mode factors indexed by $ j \in J'' $ we need linear independence of
\begin{equation}
	\{(b^\dagger_j)^N \Omega_{k(j),\cV} \; \mid \; N \in \{0, 1\} \} \subset \sH_{k(j)} \;.
\label{eq:bNsetonemode}
\end{equation}
This follows again by unitarity of $ \UUU_{j,\cV} $, as well as orthogonality and zero-freeness of the set $ \{(a^\dagger_j)^N \Omega_{k(j)} \; \mid \; N \in \{0, 1\} \} \subset \sF(\{\bof_j\}) $, which is mapped to \eqref{eq:bNsetonemode}. By linear independence of \eqref{eq:bN1N2set} and \eqref{eq:bNsetonemode}, we obtain linear independence of \eqref{eq:bsetfermionic}, which implies injectivity of $ \UUU_\cV $ and finishes the proof.\\
\end{proof}

\begin{remark} \label{rem:ITPunitary}
We may extend $ \UUU_\cV $ to a unitary operator on $ \widehat{\sH} $: Both in the bosonic and the fermionic case, we have $ \widehat{\sH} = \prod_{k \in \NNN}^\otimes \sH_k $, compare \eqref{eq:bosonicITP} and \eqref{eq:fermionicITP}, where a unitary $ \UUU_{k, \cV}: \sH_k \to \sH_k $ is defined in \eqref{eq:bosonicbogoliubovimplementer} (bosonic) and \eqref{eq:fermionicbogoliubovimplementer} (fermionic). On tensor product states $ \Psi^{(m)} := \prod_k^\otimes \Psi_k^{(m)} $, we can thus immediately define $ \UUU_\cV $ as
\begin{equation}
	\UUU_\cV \Psi^{(m)} := \prod_k^\otimes (\UUU_{k, \cV} \Psi_k^{(m)}) \;, \qquad
	\Vert \UUU_\cV \Psi^{(m)} \Vert
	= \Vert \Psi^{(m)} \Vert \;.
\end{equation}
By Lemma \ref{lem:psilinearcombination}, there exists an orthonormal set $ \{\Psi^{(m)}\}_{m \in \cM} $ such that any $ \Psi \in \widehat{\sH} $ can be written as a convergent series $ \Psi = \sum_{m \in \cM} d_m \Psi^{(m)} $, $ d_m \in \CCC $. One easily checks that $ \UUU_\cV $ preserves orthonormality of $ \Psi^{(m)} $. Thus, $ \UUU_\cV $ first extends to a bounded operator on all $ \Psi $ that are finite linear combinations of $ \Psi^{(m)} $, and then to all $ \Psi \in \widehat{\sH} $ by continuity. Unitarity of $ \UUU_\cV: \widehat{\sH} \to \widehat{\sH} $ is then straightforward to check.\\
\end{remark}

\begin{remark} \label{rem:fermionicITP}
It is crucial that the fermionic ITP space has been chosen as $ \widehat{\sH} = \prod_k^\otimes \sH_k $, with two-mode spaces $ \sH_k = \sF(\{\boeta_{2i}\}) \otimes \sF(\{\boeta_{2i-1}\}) $ for Cooper pairs $ i \in I' $. If we had just chosen a product of one-mode spaces $ \prod_{j \in J}^\otimes \sF(\{\boeta_j\}) $, then invertibility of $ \UUU_\cV $ may fail.\\
As an example, consider a $ \cV $ with countably infinitely many Cooper pairs $ i \in I' $, such that $ \alpha_i = \beta_i = \frac{1}{\sqrt{2}} $. Then, each Cooper pair is in the state
\begin{equation}
	\Psi_i := \frac{1}{\sqrt{2}} (| 0 \rangle \otimes | 0 \rangle + | 1 \rangle \otimes | 1 \rangle) \in \CCC^4 \;,
\label{eq:cooperstate}
\end{equation}
i.e., we have a ``half particle--hole transformation''. When evaluating the formal ITP $ \Omega_\cV = \prod_{i \in I'}^\otimes \Psi_i $, we obtain a sum of $ C $-sequences: For each pair $ i $, one has to choose either $ | 0 \rangle \otimes | 0 \rangle $ or $ | 1 \rangle \otimes | 1 \rangle $ as a contribution to $ \Omega_\cV $ and sum over all choices. But now, there are uncountably many such choices, as each corresponds to a binary number of infinitely many digits. And each one gives a contribution of norm $ \prod_{i \in I'} \frac{1}{\sqrt{2}} = 0 $. So $ \Omega_\cV = 0 $, making $ \UUU_\cV $ non-invertible.\\
\end{remark}

\section{Diagonalization: Extended}
\label{sec:diagonalization}

As we now have conditions for $ \cV $ being implementable by $ \UUU_\cV $ in the extended sense, it would be interesting to diagonalize quadratic Hamiltonians $ H $ via $ \UUU_\cV $. We now give precise definitions of ``quadratic Hamiltonian'' and ``diagonalized'' in the extended sense and provide diagonalizability criteria in Propositions \ref{prop:diagonalizableboson} and \ref{prop:diagonalizablefermion}.\\

\subsection{Definition of Extended Diagonalization}
\label{subsec:diagonalizationextdef}

Recall that the extended operator algebra $ \cAB_{\be} $, defined in \eqref{eq:cAB}, consists of all maps $ H $ that assign to each finite operator product $ a^\sharp_{j_1} \ldots a^\sharp_{j_m} $ a complex coefficient $ H_{j_1,\ldots, j_m} \in \CCC $. Each map $ H $ can be interpreted as a (possibly infinite) sum
\begin{equation}
	H = \sum_m \sum_{j_1,\ldots, j_m} H_{j_1,\ldots, j_m} a^\sharp_{j_1} \ldots a^\sharp_{j_m} \;.
\end{equation}

A \textbf{formal quadratic Hamiltonian} is an element $ H \in \cAB_{\be} $, where $ H_{j_1,\ldots, j_m} \neq 0 $ only appears for $ m = 2 $ and $ H^* = H $. We impose normal ordering on quadratic
Hamiltonians (see Remark \ref{rem:normalordering}), so they read
\begin{equation}
	H = \frac{1}{2} \sum_{j,k \in \NNN} (2 h_{jk} a^\dagger_j a_k \pm k_{jk} a^\dagger_j a^\dagger_k + \overline{k_{jk}} a_j a_k) \;,
\label{eq:Hquadratic}
\end{equation}
where $ \pm $ means $ + $ in the bosonic and $ - $ in the fermionic case. The term ``formal'' stresses that $ H $ is not necessarily an operator on Fock space. To $ H $ we associate a block matrix
\begin{equation}
	A_H = \begin{pmatrix} h & \pm k \\ \overline{k} & \pm \overline{h} \end{pmatrix} \;,
\label{eq:AHquadratic}
\end{equation}
with $ h = (h_{jk})_{j,k \in \NNN} $, $ k = (k_{jk})_{j,k \in \NNN} $ being matrices of infinite size.\\
Consider a Bogoliubov transformation $ \cV = \left( \begin{smallmatrix} u & v \\ \overline{v} & \overline{u} \end{smallmatrix} \right) $. Then, $ a^\dagger_j \mapsto b^\dagger_j = \sum_k (u_{jk} a^\dagger_k + \overline{v_{jk}} a_k) $, followed by a \textbf{normal ordering}, defines a corresponding algebraic Bogoliubov transformation $ \cV_{\cAB} : \cAB_{\be} \supseteq \dom(\cV_{\cAB}) \to \cAB_{\be} $, where $ \dom(\cV_{\cAB}) \subseteq \cAB_{\be} $ is a suitable subspace that avoids diverging sums over $ k $. The transformed operator and its associated block matrices are
\begin{equation}
	\widetilde{H} = \cV_{\cAB}(H) \;, \qquad 
	A_{\widetilde{H}} = \cV^* A_H \cV \;.
\end{equation}

\begin{definition}
	A formal quadratic Hamiltonian $ H \in \cAB_{\be} $ is called \textbf{diagonalizable in the extended sense} if there exists a Bogoliubov transformation $ \cV $, such that
\begin{equation}
	\cV^* A_H \cV = \begin{pmatrix} E & 0 \\ 0 & \pm E \end{pmatrix} \;,
\end{equation}
with $ E \ge 0 $ being Hermitian, $ \pm $ being $ + $ in the bosonic and $ - $ in the fermionic case, and where $ \cV $ is implementable in the extended sense (see Definition \ref{def:implementation}).\\
\label{def:diagonalizable}
\end{definition}

The Hamiltonian associated with $ A_{\widetilde{H}} $ is then $	\widetilde{H} = \di \Gamma(E) $, where the matrix $ E $ provides a positive semidefinite quadratic form on $ \cD_{\be} $. So by Friedrichs' theorem, it has a self-adjoint extension on $ \dom(E) $. Following \cite[Sect. VIII.10]{reedsimon1}, $ \di \Gamma(E) $ is then essentially self-adjoint on $ \bigoplus_{n = 0}^\infty \dom(E)^{\otimes n} \subseteq \sF $, so $ \widetilde{H} $ defines quantum dynamics on $ \sF $.\\

\begin{remark}[Normal ordering constant] \label{rem:normalordering}
Our process of ``diagonalizing'' a Hamiltonian $ H $ actually consists of conjugating it with $ \UUU_\cV $, so $ a^\sharp $ is replaced by $ b^\sharp $, \textit{plus a subsequent normal ordering process}. This process is equivalent to adding a constant to the Hamiltonian, namely
\begin{equation}
	c = \frac{1}{2} \left( \tr(E) - \tr(h) \right) = \frac{1}{2} \sum_j (E_{jj} - h_{jj}) \;.
\end{equation}
The sum might be divergent and hence not a complex number. Nevertheless, using ESS (see Appendix \ref{app:ESS}), we may interpret it as an infinite renormalization constant $ c \in \Ren_1(\NNN) $, namely the one associated with the sequence $ c_j = \frac{1}{2} (E_{jj} - h_{jj}) $.
If $ E $ now maps $ \cD_{\be} $ into itself (so each column has finitely many non-zero entries), then
\begin{equation*}
	\widetilde{H} = \UUU_\cV^{-1} (H + c) \UUU_\cV \;,
\end{equation*}
which is in accordance with \eqref{eq:renormalizedhamiltonian}. Otherwise, we may decompose $ H = \sum_{n \in \NNN} H^{(n)} $ and $ c = \sum_{n \in \NNN} c^{(n)} $, such that in $ \cV^* A_{H^{(n)}} \cV = \left( \begin{smallmatrix} E^{(n)} & 0 \\ 0 & \pm E^{(n)} \end{smallmatrix} \right) $, each $ E^{(n)} $ maps $ \cD_{\be} $ into itself. So
\begin{equation*}
	\widetilde{H} = \sum_{n \in \NNN} \UUU_\cV^{-1} (H^{(n)} + c^{(n)}) \UUU_\cV \;,
\end{equation*}
which is in accordance with \eqref{eq:renormalizedhamiltonian2}.
\end{remark}

\subsection{Bosonic Case}
\label{subsec:diagonalizationboson}

Conditions for the existence of a $ \cV $, such that $ \cV^* A_H \cV $ is block-diagonal, can be found in \cite[Thms. 1 and 4]{NamNapiorkowskiSolovej2016}. We can use them to readily derive conditions for when a formal quadratic Hamiltonian $ H $ is diagonalizable in the extended sense:

\begin{proposition}[Extended diagonalizability, bosonic case]
	Let a formal quadratic bosonic Hamiltonian $ H $ \eqref{eq:Hquadratic} be given such that for the associated block matrix $ A_H $ \eqref{eq:AHquadratic} we have $ h > 0 $, and that $ G = h^{-1/2} k h^{-1/2} $ is a bounded operator with $ \Vert G \Vert < 1 $. Following \cite[Thm. 1]{NamNapiorkowskiSolovej2016}, there exists a bosonic Bogoliubov transformation $ \cV = \left( \begin{smallmatrix} u & v \\ \overline{v} & \overline{u} \end{smallmatrix} \right) $ such that
\begin{equation}
	\cV^* A_H \cV = \begin{pmatrix} E & 0 \\ 0 & E \end{pmatrix} \;.
\end{equation}	
Suppose further that $ v^*v $ has countable spectrum.\\
	Then, $ H $ is diagonalizable in the extended sense.\\
\label{prop:diagonalizableboson}
\end{proposition}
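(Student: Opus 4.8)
The plan is a short citation chain, followed by a check that the transformation produced meets every clause of Definition~\ref{def:diagonalizable}. The hypotheses $h>0$ and $\Vert G\Vert<1$ with $G=h^{-1/2}kh^{-1/2}$ are exactly those of \cite[Thm.~1]{NamNapiorkowskiSolovej2016}, so that theorem already hands us a bosonic Bogoliubov transformation $\cV=\left(\begin{smallmatrix}u&v\\\overline v&\overline u\end{smallmatrix}\right)$ with $\cV^*A_H\cV=\left(\begin{smallmatrix}E&0\\0&E\end{smallmatrix}\right)$. What remains to verify is: (i) $E$ is Hermitian and $E\ge0$; (ii) $\cV$ is a Bogoliubov transformation in the precise sense used here, i.e.\ $u,v$ sit on a common dense domain, $v^*v$ is densely defined and self-adjoint, and \eqref{eq:Bogoliubovrelationsext} holds (equivalently $\cV,\cV^*$ conserve the CCR); and (iii) this $\cV$ is implementable in the extended sense.

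For (i), $E$ is produced in \cite{NamNapiorkowskiSolovej2016} as a positive operator (a square root of a non-negative operator built from $h$ and $G$), so $E\ge0$ and $E=E^*$ follow directly from that construction; if one wishes, the explicit formula for $E$ there can be written out and positivity read off. For (ii), self-adjointness and dense definition of $v^*v$ are part of the standing assumption ``$v^*v$ has countable spectrum'' (which presupposes a self-adjoint $v^*v$), while conservation of the CCR under $\cV$ and $\cV^*$ is precisely the symplectic property by which \cite{NamNapiorkowskiSolovej2016} constructs $\cV$; a brief remark reconciling their diagonalization convention with $A_{\widetilde H}=\cV^*A_H\cV$ and with the block structure \eqref{eq:cV} closes this point, and Lemma~\ref{lem:bogoliubovrelationssurvive} then guarantees that $u,v$ (and their adjoints, transposes, conjugates) are defined on $\cD_{|C|}$ and that \eqref{eq:Bogoliubovrelationsext} holds.

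For (iii), since $v^*v$ has countable spectrum, the hypotheses of Theorem~\ref{thm:bosoniccountable} and of Theorem~\ref{thm:bosoniccountableESS} are met, so $\UUU_\cV:\cD_\sF\to\widehat\sH$ \eqref{eq:transformbogoliubovstate} implements $\cV$ in the extended sense on the ITP space, and $\UUU_\cV:\cD_\sF\to\FB$ implements it in the extended sense on the extended state space. Combining (i)--(iii) with Definition~\ref{def:diagonalizable} yields that $H$ is diagonalizable in the extended sense on both $\widehat\sH$ and $\FB$, the transformed Hamiltonian being $\widetilde H=d\Gamma(E)$ up to the renormalization constant $c$ of Remark~\ref{rem:normalordering}. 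The only genuine obstacle is the bookkeeping in step~(ii): confirming that the Bogoliubov transformation delivered by \cite{NamNapiorkowskiSolovej2016} really satisfies the standing assumptions of Section~\ref{sec:extensionbogoliubov} and Lemma~\ref{lem:bogoliubovrelationssurvive} — in particular matching sign and transpose conventions, and checking that $u,v$ (which in that reference are expressed through $h^{\pm1/2}$ and functions of $G$) share a common dense domain and give a self-adjoint $v^*v$. Beyond that, no new analytic estimate is needed; the statement follows entirely from the already-established results.
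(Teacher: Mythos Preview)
Your proposal is correct and follows essentially the same approach as the paper: cite \cite[Thm.~1]{NamNapiorkowskiSolovej2016} for the existence of the diagonalizing $\cV$ and for $E\ge 0$, reconcile the representation conventions, and then invoke Theorems~\ref{thm:bosoniccountable} and~\ref{thm:bosoniccountableESS} under the countable-spectrum hypothesis. The paper's proof is slightly terser (it does not separately invoke Lemma~\ref{lem:bogoliubovrelationssurvive}, since the $\cV$ produced by the reference is already a bounded operator on $\ell^2\oplus\ell^2$), but the logical structure is identical.
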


\begin{proof}
	Consider Definition \ref{def:diagonalizable} for diagonalizability. The existence of $ \cV $ as a block matrix associated with a bounded operator on $ \ell^2 $ is a direct consequence of \cite[Thm. 1]{NamNapiorkowskiSolovej2016}.\\
If the spectrum of $ v^*v $ is countable, then implementability of $ \cV $ in the extended sense follows from Theorem \ref{thm:bosoniccountable}.\\
\end{proof}

\subsection{Fermionic Case}
\label{subsec:diagonalizationfermion}

\begin{proposition}[Extended diagonalizability, fermionic case]
	Let a formal quadratic fermionic Hamiltonian $ H $ \eqref{eq:Hquadratic} be given such that for the associated block matrix $ A_H $ \eqref{eq:AHquadratic}, $ \mathrm{dimKer}(A_H) $ is even or $ \infty $. Following \cite[Thm. 4]{NamNapiorkowskiSolovej2016}, there exists some fermionic Bogoliubov transformation $ \cV = \left( \begin{smallmatrix} u & v \\ \overline{v} & \overline{u} \end{smallmatrix} \right) $ such that
\begin{equation}
	\cV^* A_H \cV = \begin{pmatrix} E & 0 \\ 0 & - E \end{pmatrix} \;.
\end{equation}
Suppose further that $ v^*v $ has countable spectrum.\\
	Then, $ H $ is diagonalizable in the extended sense on the ITP space $ \widehat{\sH} $.\\
\label{prop:diagonalizablefermion}
\end{proposition}

\begin{proof}
	Existence of a unitary $ \cV $ and of $ E \ge 0 $ follows from \cite[Thm. 4]{NamNapiorkowskiSolovej2016}. By unitarity, $ \cV^* \cV = 1 = \cV \cV^* $, so $ \cV $ is a fermionic Bogoliubov transformation.
If $ \sigma(v^*v) $ is countable, then implementability follows from Theorem \ref{thm:fermioniccountable}.
\end{proof}

\section{Applications}
\label{sec:applications}

\subsection{Quadratic Bosonic Interaction}
\label{subsec:QuadraticBosonic}

Our first example for a quadratic Hamiltonian whose diagonalization requires Bogoliubov transformations ``beyond the Shale/Shale--Stinespring condition'' is inspired by \cite{Derezinski2017}. We consider a free massive bosonic scalar field, which is interacting by a Wick square $ :\phi(\bx)^2: $, with $ \phi(\bx) = a^\dagger(\bx) + a(\bx) $. We discretize the momentum by putting the system in a box $ \bx \in [-\pi, \pi]^3 $ with periodic boundary conditions. Further, the Wick square is weighted by a real-valued external field $ \kappa \in C_c^\infty([-\pi, \pi]^3), \kappa(\bx) \in \CCC $. The Hamiltonian then reads
\begin{equation}
\begin{aligned}
	H = &\di \Gamma(\varepsilon_{\bp}) + \frac{1}{2}\int \kappa(\bx) :\phi(\bx)^2: \; \di \bx\\
	= &\frac{1}{2}\sum_{\bp_1, \bp_2 \in \ZZZ^3} (2 \varepsilon_{\bp_1} \delta(\bp_1 - \bp_2) a^\dagger_{\bp_1} a_{\bp_2}
	+ 2 \hat{\kappa}(- \bp_1 + \bp_2) a^\dagger_{\bp_1} a_{\bp_2} + \\
	& \qquad + \hat{\kappa}(- \bp_1 - \bp_2) a^\dagger_{\bp_1} a^\dagger_{\bp_2}
	+ \hat{\kappa}(\bp_1 + \bp_2) a_{\bp_1} a_{\bp_2} ) \;,
\end{aligned}
\end{equation}
with $ \hat{\kappa}(\bp) = \overline{\hat{\kappa}(- \bp)} $ denoting the Fourier transform of $ \kappa(\bx) $. For simplicity, we assume that $ \kappa(\bx) = \mathrm{const.} $, so we can write $ \hat{\kappa}(\bp) = \kappa \delta(\bp) $, $ \kappa \in \RRR $.

\begin{proposition}
For interactions $ \kappa > - \frac{m}{2} $ but $ \kappa \neq 0 $, the Hamiltonian $ H $ is diagonalizable in the extended sense on $ \widehat{\sH} $.
However, the transformation $ \cV $ violates the Shale condition, so $ H $ is not diagonalizable on $ \sF $.
\label{prop:applicationquadraticbosonic}
\end{proposition}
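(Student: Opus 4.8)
The plan is to apply Proposition \ref{prop:diagonalizableboson} after an explicit mode decomposition that reduces the translation-invariant Hamiltonian to a one-parameter family of single-mode Bogoliubov problems, one per momentum $\bp \in \ZZZ^3$. First I would diagonalize in momentum space: since $\kappa$ is constant, $\hat\kappa(\bp) = \kappa\,\delta(\bp)$, so both the $a^\dagger a$ and the $a^\dagger a^\dagger$/$aa$ terms are diagonal in $\bp$, and $H$ decouples as $H = \sum_{\bp} H_\bp$ with $H_\bp = (\varepsilon_\bp + \kappa) a^\dagger_\bp a_\bp + \tfrac{\kappa}{2}((a^\dagger_\bp)^2 + (a_\bp)^2)$, where $\varepsilon_\bp = \sqrt{\bp^2 + m^2}$. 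Hence in the notation of \eqref{eq:Hquadratic}, $h$ is the multiplication operator $h_{\bp} = \varepsilon_\bp + \kappa$ and $k$ is the multiplication operator $k_\bp = \kappa$ (constant in $\bp$).

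Second I would check the hypotheses of \cite[Thm.~1]{NamNapiorkowskiSolovej2016} as used in Proposition \ref{prop:diagonalizableboson}: namely $h > 0$ and $G = h^{-1/2} k h^{-1/2}$ bounded with $\Vert G\Vert < 1$. Here $h_\bp = \varepsilon_\bp + \kappa \ge m + \kappa > m/2 > 0$ by the assumption $\kappa > -m/2$, so $h > 0$. The operator $G$ is multiplication by $G_\bp = \kappa/(\varepsilon_\bp + \kappa)$, which is a bounded sequence with $\sup_\bp |G_\bp| = |\kappa|/(m+\kappa) < 1$ precisely when $|\kappa| < m + \kappa$; for $\kappa > -m/2$ this holds (if $\kappa \ge 0$ it reads $\kappa < m + \kappa$, trivially true; if $-m/2 < \kappa < 0$ it reads $-\kappa < m + \kappa$, i.e.\ $\kappa > -m/2$). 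So Proposition \ref{prop:diagonalizableboson} produces a bosonic Bogoliubov transformation $\cV$ block-diagonalizing $A_H$, once we verify that $v^*v$ has countable spectrum.

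Third, the countability of $\sigma(v^*v)$: because the whole problem is diagonal in the momentum basis $(e_\bp)_{\bp \in \ZZZ^3}$, the diagonalizing $\cV$ can be chosen diagonal in $\bp$ as well, so $v$ is a multiplication operator $v_\bp$, and $v^*v$ is multiplication by $|v_\bp|^2$. Following Section \ref{subsubsec:bosoniccaseimplementation}, each single mode $\bp$ has $\nu_\bp$ determined by the one-mode Bogoliubov rotation angle $\xi_\bp$ with $\tanh(2\xi_\bp) = -G_\bp = -\kappa/(\varepsilon_\bp+\kappa)$, hence $\nu_\bp^2 = \sinh^2\xi_\bp$ is an explicit function of $\varepsilon_\bp$ only. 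Since $\ZZZ^3$ is countable, $\{\nu_\bp^2 : \bp \in \ZZZ^3\}$ is a countable set, so $\sigma(v^*v)$ is countable and Proposition \ref{prop:diagonalizableboson} applies, giving diagonalizability in the extended sense on both $\widehat\sH$ and $\FB$.

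Finally, the failure of Shale--Stinespring: one computes $\tr(v^*v) = \sum_{\bp \in \ZZZ^3} \nu_\bp^2$ and shows it diverges. As $|\bp| \to \infty$, $G_\bp = \kappa/(\varepsilon_\bp + \kappa) \sim \kappa/|\bp| \to 0$, so $\xi_\bp \to 0$ and $\nu_\bp^2 = \sinh^2\xi_\bp \sim \xi_\bp^2 \sim (\kappa/(2|\bp|))^2 = \kappa^2/(4|\bp|^2)$; for $\kappa \neq 0$ this is comparable to $|\bp|^{-2}$, and $\sum_{\bp \in \ZZZ^3} |\bp|^{-2} = \infty$ in three dimensions (the number of lattice points with $|\bp| \approx R$ grows like $R^2$, so the summand times the shell count behaves like $R^{-2}\cdot R^2 = 1$, a divergent series). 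Hence $v$ is not Hilbert--Schmidt and $H$ is not diagonalizable on $\sF$. I expect the main obstacle to be the bookkeeping in this last step: one must extract the precise large-$|\bp|$ asymptotics of $\nu_\bp$ from the one-mode diagonalization formulas of Section \ref{subsubsec:bosoniccaseimplementation} (relating $\xi_\bp$, $\mu_\bp$, $\nu_\bp$ to $G_\bp$), and then invoke the standard fact that $\sum_{\bp \in \ZZZ^3} (1+|\bp|)^{-s}$ converges iff $s > 3$ to conclude divergence at $s = 2$; everything else is the routine momentum-space decoupling plus a direct application of the earlier propositions.
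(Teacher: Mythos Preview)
Your proposal is correct and follows essentially the same route as the paper: mode decomposition in $\bp$, verification of $h>0$ and $\Vert G\Vert<1$ for Proposition~\ref{prop:diagonalizableboson}, countability of $\sigma(v^*v)$ from the countability of $\ZZZ^3$, and failure of Shale--Stinespring via $|v_\bp|^2 \sim \kappa^2/(4|\bp|^2)$ together with the divergence of $\sum_{\bp\in\ZZZ^3}|\bp|^{-2}$. One minor bookkeeping slip: since $\hat\kappa(-\bp_1-\bp_2)=\kappa\,\delta(\bp_1+\bp_2)$, the pair creation term is $a^\dagger_\bp a^\dagger_{-\bp}$ rather than $(a^\dagger_\bp)^2$, so $k$ has kernel $\kappa\,\delta(\bp_1+\bp_2)$ (a ``flip'' rather than a multiplication operator) and the paper's $2\times2$ blocks pair $\bp$ with $-\bp$; because $h_\bp=h_{-\bp}$, this does not affect $\Vert G\Vert$ or the asymptotics of $v_\bp$, so your conclusions stand.
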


\begin{proof}
We directly compute $ \cV $ and then apply Proposition \ref{prop:diagonalizableboson}. The matrix $ A_H $ of $ A $ is
\begin{equation}
	A_H = \bigoplus_{\bp \in \ZZZ^3} A_{H,\bp} \;, \quad
	A_{H,\bp} = \begin{pmatrix} h_{\bp} & k_{\bp} \\ k_{\bp} & h_{\bp} \end{pmatrix} \;, \qquad
	h_{\bp} = (\varepsilon_{\bp} + \kappa) \;, \qquad
	k_{\bp} = \kappa \;.
\end{equation}
We diagonalize all $ A_{H,\bp} \in \CCC^{2 \times 2} $ separately via
\begin{equation}
	\cV = \bigoplus_{\bp \in \ZZZ^3} \cV_{\bp} \;, \qquad
	\cV_{\bp}^* A_{H,\bp} \cV_{\bp} = \begin{pmatrix} E_{\bp} & 0 \\ 0 & E_{\bp} \end{pmatrix} \;,
\label{eq:cVsumbosonic}
\end{equation}
with $ E_{\bp} = \sqrt{h^2 - k^2} $. Following \cite[Subsec. 1.3]{NamNapiorkowskiSolovej2016}, this is done for $ |k_{\bp}| < h_{\bp} $ by
\begin{equation}
	\cV_{\bp} = \begin{pmatrix} u_{\bp} & v_{\bp} \\ \overline{v_{\bp}} & \overline{u_{\bp}} \end{pmatrix} \;, \qquad
	u_{\bp} = c_{\bp} \;, \qquad
	v_{\bp} = c_{\bp} \frac{-G_{\bp}}{1 + \sqrt{1 - G_{\bp}^2}} \;,
\end{equation}
\begin{equation}
	G_{\bp} = k_{\bp} h_{\bp}^{-1} \;, \qquad
	c_{\bp} = \sqrt{\frac{1}{2} + \frac{1}{2 \sqrt{1 - G_{\bp}^2}}} \;.
\end{equation}
Now, $|k_{\bp}| < h_{\bp} \quad \Leftrightarrow \quad |\kappa| < \sqrt{|\bp|^2 + m^2} + \kappa $, which is satisfied for all $ \bp \in \ZZZ^3 $, if and only if $ \kappa > - \frac{m}{2} $. $ h_{\bp} > 0 $ also holds in that case and \eqref{eq:cVsumbosonic} defines a Bogoliubov transformation $ \cV $ diagonalizing $ A_H $. Regarding Proposition \ref{prop:diagonalizableboson}, $ h > 0 $ and $ \Vert h^{-1/2} k h^{-1/2} \Vert < 1 $ follow from $ h_{\bp} $ and $ |k_{\bp}| < h_{\bp} $ after taking a direct sum. Since $ v = \bigoplus_{\bp \in \ZZZ^3} v_{\bp} $ can be decomposed into modes, the same holds for $ v^*v $, which therefore has countable spectrum. So Proposition \ref{prop:diagonalizableboson} applies and $ H $ is diagonalizable in the extended sense on $ \widehat{\sH} $.\\

	It remains to show that $ \cV $ violates the Shale condition, i.e., $ \tr(v^*v) = \sum_{\bp \in \ZZZ^3} |v_{\bp}|^2 = \infty $. If $ |\bp| $ is large enough (say, $ |\bp| > p_{\max} > 0 $), we have $ \frac{\kappa d}{|\bp|} \le G_{\bp} \le \frac{\kappa}{|\bp|} $ for any $ d < 1 $, so
\begin{equation}
	|v_{\bp}|^2 = \frac{1 + \sqrt{1 - G_{\bp}^2}}{2 \sqrt{1 - G_{\bp}^2}} \frac{G_{\bp}^2}{(1 + \sqrt{1 - G_{\bp}^2})^2}
	\ge \frac{\kappa^2 d^2}{4|\bp|^2} \;.
\end{equation}
We write $ \sum_{\bp} |v_{\bp}|^2 $ as an integral, using indicator functions $ \chi_{Q(\bp)}(\cdot) $ of half-open unit cubes $ Q(\bp) $ centered at $ \bp = (p_1,p_2,p_3) $:
\begin{equation}
	\sum_{\bp \in \ZZZ^3} |v_{\bp}|^2 = \int_{\RRR^3} f(\bp') \; \di \bp' \;, \qquad
	f(\bp') = \sum_{\bp \in \ZZZ^3} |v_{\bp}|^2 \chi_{Q(\bp)}(\bp') \;.
\end{equation}
Then, for $ |\bp'| > p_{\max} $,
\begin{equation}
	\sum_{\bp \in \ZZZ^3} |v_{\bp}|^2
	\ge \int_{|\bp| > p_{\max}} f(\bp') \; \di \bp'
	\ge \int_{p_{\max}}^\infty \frac{\kappa^2 d^2}{4 (|\bp'| + \frac{\sqrt{3}}{2})^2} 4 \pi |\bp'|^2 \; \di |\bp'|
	= \infty \;,
\end{equation}
where the integral is linearly divergent, which establishes the claim.\\
\end{proof}

\begin{remark}[Infinite volume case]
Here, $ \bp \in \RRR^3 $ and we can construct an analogous $ \cV $ diagonalizing $ A_H $. However, the spectrum of $ v^*v $ is then no longer countable.\\
\end{remark}

\begin{remark}[Position-dependent $ \kappa(\bx) $]
In contrast to \cite{Derezinski2017}, we assume a constant interaction strength $ \kappa(\bx) $. Physically, it would be desirable to treat any $ \kappa \in C_c^\infty $. Then, the decomposition $ A_H = \bigoplus_{\bp \in \ZZZ} A_{H,\bp} $ fails and it might occur that $ v^*v $ has uncountable spectrum.\\
\end{remark}

\begin{remark}[Wick square is not diagonalizable]
It may be tempting to set $ \varepsilon_{\bp} = 0 $ and to try a diagonalization of only the interaction Hamiltonian $ \tfrac 12 \int \kappa(\bx) : \phi(\bx): \; \di \bx $. However, bosonic Wick squares are not diagonalizable by a Bogoliubov transformation, as ``the off-diagonal is too large''. For example, on one mode ($ \fh \cong \CCC $), the matrix associated with a Wick square $ H_W = 2 a^\dagger a + a^\dagger a^\dagger + a a $ is
\begin{equation}
	A_{H_W} = \begin{pmatrix} h & k \\ \overline{k} & \overline{h} \end{pmatrix} = \begin{pmatrix} 1 & 1 \\ 1 & 1 \end{pmatrix} \;,
\end{equation}
so $ \Vert h^{-1/2} k h^{-1/2} \Vert = 1 $ and Proposition \ref{prop:diagonalizableboson} does not apply. However, bosonic Wick products have still been constructed as self-adjoint operators by a suitable GNS construction \cite{Sanders2012}.\\
\end{remark}

\subsection{BCS Model}
\label{subsec:BCSmodel}

An example with non-implementable fermionic Bogoliubov transformations is the Bardeen--Cooper--Schrieffer (BCS) model for explaining superconductivity \cite{BardeenCooperSchrieffer1957, Haag1962}. For an overview on recent mathematical advances on BCS theory, we refer the reader to~\cite{HS16,HLR24} and the references therein. The ``Hartree-like approximation'' state \cite[(2.16)]{BardeenCooperSchrieffer1957} corresponds to a formal fermionic Bogoliubov vacuum state $ \Omega_\cV $ as in \eqref{eq:fermionicbogoliubovvacuum}. A mathematical analysis by Haag \cite{Haag1962} shows that in the infinite volume limit, the BCS Hamiltonian can indeed be diagonalized by a corresponding Bogoliubov transformation, which is not implementable on Fock space.\\
We consider a similar model of a fermionic gas inside a box with periodic boundary conditions $ \bx \in [-\pi, \pi]^3 $. Hence, we have discretized momenta $ \bp \in \ZZZ^3 $, as well as two spins $ s \in \{ \uparrow, \downarrow \} $, leading to a one-particle Hilbert space $ \fh = L^2(\ZZZ^3 \times \{ \uparrow, \downarrow \}) $. The corresponding Fock space is $ \sF = \sF(\ZZZ^3 \times \{ \uparrow, \downarrow \} ) $. We consider the following quadratic Hamiltonian (see \cite{Haag1962}), which provides an approximate description for the fermionic gas that becomes exact in the infinite volume limit:
\begin{equation}
	H' = H_0 + H_I' = \sum_{\substack{\bp \in \ZZZ^3 }} \left( \varepsilon_{\bp} a_{\bp,\uparrow}^\dagger a_{\bp,\uparrow} + \varepsilon_{\bp} a_{\bp,\downarrow}^\dagger a_{\bp,\downarrow} - \tilde{\Delta}_{\bp} a_{\bp,\uparrow}^\dagger a_{\bp,\downarrow}^\dagger + \overline{\tilde{\Delta}_{\bp}} a_{\bp,\uparrow} a_{\bp,\downarrow} \right) \;,
\label{eq:hamiltonianBCS}
\end{equation}
with kinetic energy $ \varepsilon_{\bp} = \frac{\bp^2}{2m} - \mu \in \RRR $ and interaction strength $ \tilde{\Delta}_{\bp} \in \CCC $, of which we assume $ \tilde{\Delta}_{\bp} \neq 0 $. As a basis $ (e_j)_{j \in J} $ for identifying $ \fh $ with $ \ell^2 $, we choose
\begin{equation}
	(e_{\bp,s})_{\substack{\bp \in \ZZZ^3 \\ s \in \{ \uparrow, \downarrow \} }} \subset L^2(\ZZZ^3 \times \{ \uparrow, \downarrow \}) \;,
	\qquad e_{\bp,s}(\bp',s') = \delta_{\bp \bp'} \delta_{ss'} \;,
\end{equation}
with $ \delta $ being the Kronecker delta. The corresponding canonical basis of $ \ell^2 $ is denoted $ (\be_{\bp,s})_{\bp \in \ZZZ^3, s \in \{ \uparrow, \downarrow \} } $. In order to obtain momentum conservation, we have to interpret $ a^\dagger_{\bp,\downarrow}, a_{\bp,\downarrow} $ as creating/annihilating a fermion of momentum $ - \bp $. The mode index $ \bp $ is only used for an easier decomposition into modes.\\

\begin{proposition}
	The Hamiltonian $ H' $ \eqref{eq:hamiltonianBCS} is diagonalizable in the extended sense on $ \widehat{\sH} $.
\label{prop:applicationBCS}
\end{proposition}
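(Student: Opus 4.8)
The plan is to follow exactly the same route as in the bosonic example (Proposition \ref{prop:applicationquadraticbosonic}): first construct the diagonalizing Bogoliubov transformation $\cV$ explicitly by decomposing the problem into modes, then verify that $v^*v$ has countable spectrum and that the finite-multiplicity condition on the eigenvalue $1$ of $v^*v$ holds, so that Proposition \ref{prop:diagonalizablefermion} applies and yields extended diagonalizability on both $\widehat{\sH}$ and $\FB$.

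First I would write the block matrix $A_{H'}$ associated with $H'$ via \eqref{eq:AHquadratic}. Since the only nontrivial coupling in \eqref{eq:hamiltonianBCS} pairs the mode $(\bp,\uparrow)$ with the mode $(\bp,\downarrow)$, the matrix $A_{H'}$ decomposes as a direct sum over $\bp \in \ZZZ^3$ of $4\times 4$ blocks $A_{H',\bp}$ acting on $\vecspan\{a^\dagger_{\bp,\uparrow}, a^\dagger_{\bp,\downarrow}, a_{\bp,\uparrow}, a_{\bp,\downarrow}\}$, with $h_\bp = \varepsilon_\bp \, \mathrm{diag}(1,1)$ and $k_\bp$ the $2\times 2$ matrix with $(k_\bp)_{\uparrow\downarrow} = \tilde\Delta_\bp$ (and the fermionic sign convention $\mp = -$). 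Each such $A_{H',\bp}$ has even-dimensional (in fact zero-dimensional, generically) kernel, so by \cite[Thm. 4]{NamNapiorkowskiSolovej2016} — or just by the explicit $2\times 2$ Bogoliubov rotation familiar from the textbook BCS treatment — there is a fermionic Bogoliubov transformation $\cV_\bp$ with $\cV_\bp^* A_{H',\bp} \cV_\bp = \mathrm{diag}(E_\bp, E_\bp, -E_\bp, -E_\bp)$, where $E_\bp = \sqrt{\varepsilon_\bp^2 + |\tilde\Delta_\bp|^2} \ge 0$, and with the usual coefficients $u_\bp = \sqrt{\tfrac12(1 + \varepsilon_\bp/E_\bp)}$, $|v_\bp| = \sqrt{\tfrac12(1 - \varepsilon_\bp/E_\bp)}$. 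Taking $\cV = \bigoplus_{\bp} \cV_\bp$ gives a fermionic Bogoliubov transformation diagonalizing $A_{H'}$, and $\mathrm{dimKer}(A_{H'})$ is either $0$ or a sum of even contributions, hence even or infinite, so the hypothesis of Proposition \ref{prop:diagonalizablefermion} is met on the existence side.

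Next I would check the spectral hypotheses. Since $v = \bigoplus_\bp v_\bp$ is a direct sum of $2\times 2$ (indeed scalar-on-each-spin) blocks, $v^*v = \bigoplus_\bp v_\bp^* v_\bp$ has spectrum $\{|v_\bp|^2 : \bp \in \ZZZ^3\}$, which is a countable set; hence $v^*v$ has countable spectrum and Theorem \ref{thm:fermioniccountable} gives implementability on $\widehat{\sH}$. For the $\FB$ statement one must rule out $1$ being an eigenvalue of $v^*v$ of infinite multiplicity. Now $|v_\bp|^2 = \tfrac12(1 - \varepsilon_\bp/E_\bp) = 1$ forces $\varepsilon_\bp = -E_\bp \le 0$ together with $\varepsilon_\bp^2 = E_\bp^2 = \varepsilon_\bp^2 + |\tilde\Delta_\bp|^2$, i.e. $\tilde\Delta_\bp = 0$, which is excluded by assumption; hence $1$ is not an eigenvalue of $v^*v$ at all. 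Therefore Theorem \ref{thm:fermioniccountableESS} applies as well (in particular $|J''_1| = 0 < \infty$, since no mode undergoes a full particle-hole transformation: that would again require $|v_\bp|^2 = 1$), and Proposition \ref{prop:diagonalizablefermion} gives diagonalizability in the extended sense on $\FB$ too.

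I do not expect a genuine obstacle here, since every ingredient reduces to a per-mode $2\times2$ computation that is classical in the BCS context; the only points requiring a little care are bookkeeping ones — correctly translating \eqref{eq:hamiltonianBCS} into the block-matrix form \eqref{eq:AHquadratic} with the right fermionic sign and the right identification of $(\bp,\downarrow)$ with momentum $-\bp$, and confirming that the pairing structure genuinely splits $\fh$ into the two-dimensional mode-spin subspaces so that $\cV$ is block-diagonal. The mild subtlety, as in Proposition \ref{prop:applicationquadraticbosonic}, is that the discretization $\bp \in \ZZZ^3$ (finite box) is what makes $\sigma(v^*v)$ countable; in the continuum this would fail, but that is outside the scope of the present statement. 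So the proof is essentially: decompose into modes, quote \cite[Thm. 4]{NamNapiorkowskiSolovej2016} (or do the $2\times2$ rotation by hand), observe countability of $\sigma(v^*v)$ and absence of the eigenvalue $1$, and invoke Proposition \ref{prop:diagonalizablefermion}.
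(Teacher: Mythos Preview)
Your proposal is correct and follows essentially the same route as the paper: decompose $A_{H'}$ into $4\times4$ blocks over $\bp\in\ZZZ^3$, diagonalize each block by an explicit fermionic Bogoliubov rotation with $E_\bp=\sqrt{\varepsilon_\bp^2+|\tilde\Delta_\bp|^2}$, observe that $v^*v=\bigoplus_\bp v_\bp^*v_\bp$ has countable spectrum, and use $\tilde\Delta_\bp\neq0$ to exclude the eigenvalue $1$, so that Proposition~\ref{prop:diagonalizablefermion} applies on both $\widehat{\sH}$ and $\FB$. The only cosmetic difference is that you write $|u_\bp|^2=\tfrac12(1+\varepsilon_\bp/E_\bp)$, $|v_\bp|^2=\tfrac12(1-\varepsilon_\bp/E_\bp)$ in the standard BCS form, while the paper parameterizes the same quantities as $u_\bp=\tilde\Delta_\bp/\sqrt{(E_\bp-\varepsilon_\bp)^2+|\tilde\Delta_\bp|^2}$, $v_\bp=(E_\bp-\varepsilon_\bp)/\sqrt{(E_\bp-\varepsilon_\bp)^2+|\tilde\Delta_\bp|^2}$; a short computation shows these agree.
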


\begin{proof}
We compute $ \cV $ directly and apply Proposition \ref{prop:diagonalizablefermion}.
This is done in the block matrix representation $ A_{H'} = \bigoplus_{\bp} A_{H',\bp} $ and $ \cV = \bigoplus_{\bp} \cV_{\bp} $ with
\begin{equation}
	A_{H',\bp} = \begin{pmatrix}
		\varepsilon_{\bp} & 0 & 0 & -\tilde{\Delta}_{\bp} \\
		0 & \varepsilon_{\bp} & \tilde{\Delta}_{\bp} & 0 \\
		0 & \overline{\tilde{\Delta}_{\bp}} & -\varepsilon_{\bp} & 0 \\
		-\overline{\tilde{\Delta}_{\bp}} & 0 & 0 & -\varepsilon_{\bp} \\
	\end{pmatrix} \;, \qquad
	\cV_{\bp} = \begin{pmatrix}
		u_{\bp} & 0 & 0 & v_{\bp} \\
		0 & u_{\bp} & -v_{\bp} & 0 \\
		0 & \overline{v}_{\bp} & \overline{u}_{\bp} & 0 \\
		-\overline{v}_{\bp} & 0 & 0 & \overline{u}_{\bp} \\
	\end{pmatrix} \;,
\label{eq:AH'bp}
\end{equation}
where $ A_{H',\bp}, \cV_{\bp,s} \in \CCC^4 \otimes \CCC^4 $. The diagonalized matrix reads
\begin{equation}
	\cV_{\bp}^* A_{H',\bp} \cV_{\bp} = \begin{pmatrix}
		E_{\bp} & 0 & 0 & 0 \\
		0 & E_{\bp} & 0 & 0 \\
		0 & 0 & -E_{\bp} & 0 \\
		0 & 0 & 0 & -E_{\bp} \\
	\end{pmatrix} \;, \qquad
	E_{\bp} = \sqrt{\varepsilon_{\bp}^2 + |\tilde{\Delta}_{\bp}|^2} \;,
\end{equation}
and the diagonalization is established by
\begin{equation}
	u_{\bp} = \frac{\tilde{\Delta}_{\bp}}{\sqrt{(E_{\bp} - \varepsilon_{\bp})^2 + |\tilde{\Delta}_{\bp}|^2 }} \;, \qquad
	v_{\bp} = \frac{E_{\bp} - \varepsilon_{\bp}}{\sqrt{(E_{\bp} - \varepsilon_{\bp})^2 + |\tilde{\Delta}_{\bp}|^2 }} \;.
\label{eq:upvp}
\end{equation}

From this form, it also follows that $ \mathrm{dimKer}(A_H) $ is either even or $ \infty $. So in order to apply Proposition \ref{prop:diagonalizablefermion}, we only need to show that the spectrum of $ v^*v $ is countable. This is the case, since $ v = \bigoplus_{\bp \in \ZZZ^3} v_{\bp} $ decays into modes, so also $ v^*v = \bigoplus_{\bp \in \ZZZ^3} (v^*v)_{\bp} $ decays into modes, where each $ (v^*v)_{\bp} $ is a finite-dimensional matrix with finite spectrum. As the sum over $ \bp $ is countable, also the spectrum of $ v^*v $ is countable, and by Proposition \ref{prop:diagonalizablefermion}, $ H $ is diagonalizable on $ \widehat{\sH} $.\\
\end{proof}

\begin{remark}[Infinite volume case]
The Hamiltonian $ H' $ is an approximation to $ H = H_0 + H_I $, where $ H_I $ is an attractive quartic interaction between fermion pairs. As mentioned above, this approximation is only exact in the infinite volume limit. In that case, $ \bp \in \RRR^3 $ and \eqref{eq:AH'bp}--\eqref{eq:upvp} still yield a Bogoliubov transformation $ \cV $ diagonalizing $ A_{H'} $. However, $ v^*v $ has generally uncountable spectrum, so Theorem \ref{thm:fermioniccountable} does not apply. \\
\end{remark}

\begin{appendices}
\addtocontents{toc}{\setcounter{tocdepth}{-2}}

\section{Results on the Infinite Tensor Product Space}
\label{sec:resultsITP}

Our first proposition characterizes when exactly two $ C $-sequences correspond to the same functional $ \Psi \in \widehat{\sH} = \prod_{k \in I}^\otimes \sH_k $. This allows us to define operators on $ \widehat{\sH} $ in Section \ref{subsec:extopalg}. Recall that any $ C $-sequence $ (\Psi) = (\Psi_k)_{k \in I} $ gives rise to a unique functional $ \Psi \in \widehat{\sH} $ by the embedding $ \Psi = \iota((\Psi)) $.

\begin{proposition}
	Whenever $ (\Psi), (\Psi') \in \Cseq $ represent the same functional $ \Psi = \Psi' $, then there exists a family of complex numbers $ (c_k)_{k \in I} $ such that
\begin{equation}
	\Psi_k = c_k \Psi_k' \quad \forall k \in I \;, \qquad \textnormal{and} \qquad
	\prod_{k \in I} c_k = 1 \;,
\label{eq:cksequence}
\end{equation}
using the notion of convergence for an infinite product from Section \ref{subsec:infinitetensorprod}.\\
Conversely, if $ (\Psi), (\Psi') \in \Cseq $ just differ by a family $ (c_k)_{k \in I} $ as in \eqref{eq:cksequence}, then they represent the same functional $ \Psi = \Psi' $.\\
\label{prop:cksequence}
\end{proposition}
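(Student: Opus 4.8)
The plan is to prove both directions by working with the defining pairing $\Psi\bigl((\Phi)\bigr) = \prod_{k\in I}\langle\Psi_k,\Phi_k\rangle_k$ and exploiting that a $C$-sequence which is not a $C_0$-sequence is sent to the zero functional.

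\textbf{Easy direction first.} Suppose $(\Psi)$ and $(\Psi')$ differ by a family $(c_k)_{k\in I}$ with $\prod_k c_k = 1$. For any test $C$-sequence $(\Phi)$, I would compute
\begin{equation*}
\Psi\bigl((\Phi)\bigr) = \prod_{k\in I}\langle\Psi_k,\Phi_k\rangle_k = \prod_{k\in I}\langle c_k\Psi'_k,\Phi_k\rangle_k = \prod_{k\in I} c_k\,\langle\Psi'_k,\Phi_k\rangle_k.
\end{equation*}
One then splits the product: since $\prod_k c_k$ converges to $1$ (in particular $c_k\neq 0$ for all $k$ and $c_k\to 1$ in the net sense), the product of the $c_k$ factors over any cofinal family of finite sets tends to $1$, and the product of $\langle\Psi'_k,\Phi_k\rangle_k$ converges (it is $\Psi'((\Phi))$). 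Using the multiplicativity of convergence of infinite products from Section \ref{subsec:infinitetensorprod} — i.e.\ if $\prod a_k$ and $\prod b_k$ both converge then $\prod a_k b_k = (\prod a_k)(\prod b_k)$ — I get $\Psi((\Phi)) = 1\cdot\Psi'((\Phi))$ for every $(\Phi)\in\Cseq$, hence $\Psi=\Psi'$ as functionals. A small care point: if $\Psi'((\Phi))=0$ the identity still holds trivially, so no case distinction is really needed.

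\textbf{Hard direction.} Assume $\iota((\Psi)) = \iota((\Psi'))$. I would first dispose of the degenerate case: if $\Vert(\Psi)\Vert = 0$ then $(\Psi)$ is a $C$-sequence that is not a $C_0$-sequence, so $\Psi=0$; then $\Psi'=0$ too, forcing $\Vert(\Psi')\Vert=0$, and one can manufacture the required $(c_k)$ by hand (e.g.\ rescale each nonzero $\Psi'_k$ to match $\Psi_k$ where possible and absorb the rest — here some freedom is available precisely because the seminorm vanishes). The substantive case is $\Vert(\Psi)\Vert\neq 0$, so both are genuine $C_0$-sequences with all $\Psi_k,\Psi'_k\neq 0$. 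The idea is to test against cleverly chosen $C$-sequences: for a fixed index $j$, take $(\Phi)$ with $\Phi_k = \Psi_k$ for $k\neq j$ (a $C_0$-sequence since $(\Psi)$ is) and $\Phi_j = \psi$ an arbitrary vector in $\sH_j$. Evaluating $\Psi((\Phi)) = \Psi'((\Phi))$ gives
\begin{equation*}
\langle\Psi_j,\psi\rangle_j\prod_{k\neq j}\Vert\Psi_k\Vert_k^2 = \langle\Psi'_j,\psi\rangle_j\prod_{k\neq j}\langle\Psi'_k,\Psi_k\rangle_k.
\end{equation*}
Since $(\Psi),(\Psi')$ are $C_0$-sequences and $\iota$-equivalent, one must check that the two infinite products appearing as scalar coefficients are finite and nonzero — this is where I expect the main technical friction: one needs that $\sum_k|\langle\Psi'_k,\Psi_k\rangle_k - \Vert\Psi_k\Vert_k^2|<\infty$ (or at least that the product $\prod_k\langle\Psi'_k,\Psi_k\rangle_k$ converges to something nonzero), which should follow from both sequences representing the same nonzero functional together with the characterization of convergent infinite products; alternatively one invokes that $(\Psi)\sim_w(\Psi')$ is forced. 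Granting that, the displayed identity holds for all $\psi\in\sH_j$, so $\Psi_j = c_j\Psi'_j$ with $c_j$ the ratio of the two (nonzero, finite) infinite products. Doing this for every $j$ produces the family $(c_k)_{k\in I}$.

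\textbf{Closing the product condition.} Finally I must show $\prod_k c_k = 1$. Plug the relation $\Psi_k = c_k\Psi'_k$ back into $\Psi((\Psi')) = \Psi'((\Psi'))$: the left side is $\prod_k\langle\Psi_k,\Psi'_k\rangle_k = \prod_k c_k\langle\Psi'_k,\Psi'_k\rangle_k = (\prod_k c_k)\prod_k\Vert\Psi'_k\Vert_k^2$, while the right side is $\prod_k\Vert\Psi'_k\Vert_k^2\neq 0$; cancelling the common nonzero factor $\prod_k\Vert\Psi'_k\Vert_k^2$ (finite and nonzero because $(\Psi')$ is a $C_0$-sequence) yields $\prod_k c_k = 1$. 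I would note that the convergence of $\prod_k c_k$ as a genuine infinite product — not merely quasi-convergence — is part of the claim and follows from the convergence of the three products just manipulated, using again the multiplicativity facts from Section \ref{subsec:infinitetensorprod}. The main obstacle throughout is bookkeeping with (quasi-)convergent infinite products of complex numbers, in particular justifying that the coefficient products one divides by are both convergent and nonzero; everything else is a direct unwinding of the pairing.
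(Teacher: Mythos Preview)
Your approach to the main (nondegenerate) case is correct and genuinely different from the paper's. The paper proves $\Psi_k \parallel \Psi'_k$ by contradiction: assuming a nonzero orthogonal component $\Psi_k^\perp$, it chooses test sequences $(\Phi),(\Phi')$ that agree off index $k$ with $\Phi_k\parallel\Psi_k$ and $\Phi'_k\parallel\Psi'_k$ normalized, and derives incompatible strict inequalities among $\langle\Psi,\Phi\rangle$, $\langle\Psi,\Phi'\rangle$, $\langle\Psi',\Phi\rangle$, $\langle\Psi',\Phi'\rangle$. You instead fix $\Phi_k=\Psi_k$ for $k\neq j$, let $\Phi_j=\psi$ vary, and read off the proportionality directly from the equality of functionals. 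The ``technical friction'' you flag resolves exactly as you suspect: since $\Psi=\Psi'\neq 0$ one has $\prod_k\langle\Psi'_k,\Psi_k\rangle_k = \langle\Psi',\Psi\rangle = \Vert\Psi\Vert^2 > 0$, so this product genuinely converges to a nonzero number and dropping one factor keeps it nonzero. For $\prod_k c_k=1$ and for the converse direction you and the paper argue identically (modulo a harmless conjugation slip: with the paper's convention $\langle c_k\Psi'_k,\Phi_k\rangle = \overline{c_k}\,\langle\Psi'_k,\Phi_k\rangle$, but $\prod c_k=1 \Leftrightarrow \prod\overline{c_k}=1$).

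There is, however, a real gap in your degenerate case: the claim that one can ``manufacture the required $(c_k)$ by hand'' is false. Take $I=\NNN$, $\sH_k=\CCC^2$ with orthonormal $e_1,e_2$, and set $\Psi_k=\tfrac12 e_1$, $\Psi'_k=\tfrac12 e_2$ for all $k$. Both are $C$-sequences with $\prod_k\tfrac12=0$, hence both represent the zero functional; yet $\Psi_k\perp\Psi'_k$ for every $k$, so no $c_k$ with $\Psi_k=c_k\Psi'_k$ exists at all. The paper's own proof shares this lacuna --- its strict inequalities collapse when the tail products vanish --- so the proposition as literally stated fails in the norm-zero case. It should be read as a statement about representatives of a \emph{nonzero} functional; that is the only case needed downstream (in Lemma~\ref{lem:Operatorlift} the zero vector is harmless anyway), so you should restrict to $\Psi\neq 0$ rather than claim the degenerate case can be patched.
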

\begin{proof}
	For the first statement, we must prove that $ \Psi_k $ and $ \Psi_k' $ are parallel for any $ k \in I $. So let us fix a $ k $ and decompose $ \Psi_k = \Psi_k^\parallel + \Psi_k^\perp $ with $ \Psi_k^\parallel \parallel \Psi_k' $ and $ \Psi_k^\perp \perp \Psi_k' $, and suppose that $ \Psi_k^\perp \neq 0 $. Now, choose some $ C $-sequences $ (\Phi), (\Phi') $, that agree on all $ k' \neq k $ and with $ \Phi_k \parallel \Psi_k $ and $ \Phi'_k \parallel \Psi'_k $, as well as $ \Vert \Phi_k \Vert_k = \Vert \Phi'_k \Vert_k = 1 $. Then,
\begin{equation}
\begin{aligned}
	\langle \Psi, \Phi' \rangle
	= &\langle \Psi_k, \Phi_k' \rangle_k \prod_{k \neq k'} \langle \Psi_{k'}, \Phi'_{k'} \rangle_{k'} 
	= \Vert \Psi_k^\parallel \Vert_k \prod_{k \neq k'} \langle \Psi_{k'}, \Phi_{k'} \rangle_{k'}\\
	< &\Vert \Psi_k \Vert_k \prod_{k \neq k'} \langle \Psi_{k'}, \Phi_{k'} \rangle_{k'}
	= \langle \Psi_k, \Phi_k \rangle_k \prod_{k \neq k'} \langle \Psi_{k'}, \Phi_{k'} \rangle_{k'}
	= \langle \Psi, \Phi \rangle \;.
\end{aligned}
\label{eq:smallerinequality}
\end{equation}
By the same arguments, $ \langle \Psi', \Phi \rangle < \langle \Psi', \Phi' \rangle $. But since $ (\Psi), (\Psi') $ correspond to the same functional $ \Psi = \Psi' $, we can freely exchange both expressions within the scalar product:
\begin{equation}
	\langle \Psi, \Phi' \rangle
	= \langle \Psi', \Phi' \rangle
	> \langle \Psi', \Phi \rangle
	= \langle \Psi, \Phi \rangle \;.
\label{eq:largerinequality}
\end{equation}
This contradicts \eqref{eq:smallerinequality} and thus establishes $ \Psi_k = c_k \Psi_k' $. Convergence of $ \prod_{k \in I} c_k $ can be seen as follows: We have
\begin{equation}
	\Vert \Psi \Vert^2 = \langle \Psi', \Psi \rangle
	= \prod_{k \in I} \langle \Psi'_k, c_k \Psi'_k \rangle_k
	= \prod_{k \in I} c_k \Vert \Psi'_k \Vert_k^2 \;.
\label{eq:prodconvergence}
\end{equation}
So if $ \prod_{k \in I} c_k $ was not convergent, i.e., $ \sum_k |c_k - 1| = \infty $, then for the product on the right-hand side, we would have
\begin{equation}
	 \sum_k \left\vert c_k \Vert \Psi'_k \Vert_k^2 - 1 \right\vert
	 \ge \underbrace{\sum_k \Vert \Psi'_k \Vert_k^2 \left\vert c_k - 1 \right\vert}_{(*)} - \underbrace{\sum_k \left\vert \Vert \Psi'_k \Vert_k^2 - 1 \right\vert }_{< \infty} \;.
\label{eq:prodconvergence2}
\end{equation}
Now, $ \Vert \Psi'_k \Vert_k^2 > 1/2 $ for all but finitely many $ k $, so $ (*) $ and thus the first expression in \eqref{eq:prodconvergence2} diverges. This is a contradiction to \eqref{eq:prodconvergence} being convergent. So $ \prod_{k \in I} c_k $ indeed yields a complex number.\\
But since $ \Vert \Psi \Vert^2 = \Vert \Psi' \Vert^2 = \prod_{k \in I} \Vert \Psi'_k \Vert_k^2 $, we immediately obtain $ \prod_{k \in I} c_k = 1 $ from \eqref{eq:prodconvergence}.\\

The converse statement can readily be seen by computing the action of the functionals $ \Psi, \Psi' $ on some $ \Phi \in \Cseq $:
\begin{equation}
	\langle \Psi, \Phi \rangle
	= \prod_{k \in I} \langle \Psi_k, \Phi_k \rangle_k
	= \prod_{k \in I} \langle c_k \Psi'_k, \Phi_k \rangle_k
	= \left( \prod_{k \in I} \overline{c_k} \right) \prod_{k \in I} \langle \Psi'_k, \Phi_k \rangle_k
	= \langle \Psi', \Phi \rangle \;.
\end{equation}
\end{proof}

By \cite[Lemma 4.1.1]{vonNeumann1939}, all subspaces $ \prod^{\otimes C}_{k \in I} \sH_k $ of $ \widehat{\sH} = \prod^\otimes_{k \in I} \sH_k $ are \textbf{mutually orthogonal}. This allows for a particularly simple decomposition.

\begin{lemma}
	For any $ \Psi \in \widehat{\sH} $, we can write
\begin{equation}
	\Psi
	= \sum_{m \in \cM} d_m \Psi^{(m)}
	= \sum_{m \in \cM} d_m \prod^\otimes_{k \in I} \Psi^{(m)}_k \;,
\label{eq:psilinearcombination}
\end{equation}
with $ \cM $ being a subset of $ \NNN $, $ \Psi^{(m)} $ defined by the mutually orthogonal $ C_0 $-sequences $ (\Psi^{(m)}) $ with $ \Vert \Psi^{(m)} \Vert = 1 $ and where $ \sum_m |d_m|^2 < \infty $ is a complex sequence.\\
	Moreover, one can choose a fixed set $ Z = \{\Psi^{(a)}\}_{a \in A} $ defined by mutually orthogonal, normalized $ C_0 $-sequences $ (\Psi^{(a)}) $, such that for all $ \Psi \in \widehat{\sH} $, the form \eqref{eq:psilinearcombination} can be achieved by taking only $ \Psi^{(m)} \in Z $. The decomposition \eqref{eq:psilinearcombination} is then unique up to the choice of the $ \Psi^{(m)}_k $ representing $ \Psi^{(m)} $.\\
\label{lem:psilinearcombination}
\end{lemma}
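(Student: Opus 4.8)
The plan is to exploit the decomposition structure of von Neumann's ITP space into the mutually orthogonal pieces $\prod^{\otimes C}_{k\in I}\sH_k$, combined with the coordinate representation \eqref{eq:Phiincoordinates} from \cite[Thm. V]{vonneumann}. First I would recall that $\widehat{\sH}$ is, by construction and by \cite[Lemma 4.1.1]{vonneumann}, an orthogonal direct sum $\widehat{\sH}=\bigoplus_{C}\prod^{\otimes C}_{k\in I}\sH_k$ over the equivalence classes $C$ of $C_0$-sequences. So any $\Psi\in\widehat{\sH}$ splits uniquely as $\Psi=\sum_{\alpha}\Psi^{[\alpha]}$ with $\Psi^{[\alpha]}\in\prod^{\otimes C_\alpha}_{k\in I}\sH_k$ and $\sum_\alpha\|\Psi^{[\alpha]}\|^2=\|\Psi\|^2<\infty$; in particular only countably many $\Psi^{[\alpha]}$ are nonzero, which is where the countability of $\cM$ comes from.

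Next, within a single class $C_\alpha$ I would invoke the coordinate representation: fix a normalized $C_0$-sequence $(\Omega^{(\alpha)})$ in the class, choose orthonormal bases $(e^{(\alpha)}_{k,n})_{n\in\NNN_0}$ of each $\sH_k$ with $\Omega^{(\alpha)}_k=e^{(\alpha)}_{k,0}$, and write, as in \eqref{eq:Phiincoordinates},
\begin{equation}
	\Psi^{[\alpha]}=\sum_{n(\cdot)\in F_\alpha}a_\alpha(n(\cdot))\prod^\otimes_{k\in I}e^{(\alpha)}_{k,n(k)},
\end{equation}
with $F_\alpha$ the countable set of functions $n:I\to\NNN_0$ vanishing for almost all $k$, and $\sum_{n(\cdot)}|a_\alpha(n(\cdot))|^2<\infty$. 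Each basis product $\prod^\otimes_k e^{(\alpha)}_{k,n(k)}$ is (the image under $\iota$ of) a normalized $C_0$-sequence, and two such products are orthogonal whenever $n(\cdot)\neq n'(\cdot)$ by \eqref{eq:infinitescalarproduct}. Collecting all these basis products over all $\alpha$ into one set $Z=\{\Psi^{(a)}\}_{a\in A}$ of mutually orthogonal normalized $C_0$-sequences, and setting $d_{(\alpha,n(\cdot))}=a_\alpha(n(\cdot))$, reindexed over a subset $\cM\subseteq\NNN$ of the nonzero coefficients, gives exactly the claimed form \eqref{eq:psilinearcombination} with $\sum_m|d_m|^2=\sum_\alpha\sum_{n(\cdot)}|a_\alpha(n(\cdot))|^2=\|\Psi\|^2<\infty$. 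I would stress that $Z$ is fixed once the classes $C_\alpha$ and the bases $(e^{(\alpha)}_{k,n})$ are chosen, and does not depend on $\Psi$.

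For uniqueness, I would argue that since $Z$ is an orthonormal family in the pre-Hilbert space $\prod'^\otimes_{k\in I}\sH_k$ (orthonormality being immediate from \eqref{eq:infinitescalarproduct}) and the coordinate representation \cite[Thm. V]{vonneumann} shows that the closed span of the $\{\Psi^{(a)}:a\in A,\ \iota((\Psi^{(a)}))\in\prod^{\otimes C_\alpha}\}$ is all of $\prod^{\otimes C_\alpha}_{k\in I}\sH_k$, the set $Z$ is in fact an orthonormal basis of $\widehat{\sH}$. Hence the coefficients $d_m=\langle\Psi^{(m)},\Psi\rangle$ are determined by $\Psi$, and the only freedom left is the replacement of a representing $C$-sequence $(\Psi^{(m)}_k)_{k\in I}$ by $(c_k\Psi^{(m)}_k)_{k\in I}$ with $\prod_k c_k=1$, which by Proposition \ref{prop:cksequence} represents the same functional $\Psi^{(m)}$. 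The main obstacle I anticipate is purely bookkeeping: making the passage from ``sum over orthogonal classes $C$'' plus ``sum over the countable coordinate index $F_\alpha$ inside each class'' into a single honest countable index set $\cM$, and checking that the double sum of squared coefficients really telescopes to $\|\Psi\|^2$ — but all the analytic content (density of the coordinate expansion, orthogonality of basis products, the $c_k$-freedom) is already supplied by \cite[Thms. V, Lemma 4.1.1]{vonneumann} and Proposition \ref{prop:cksequence}, so no new estimates are needed.
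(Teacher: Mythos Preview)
Your proposal is correct and follows essentially the same route as the paper: decompose $\widehat{\sH}$ into the mutually orthogonal pieces $\prod^{\otimes C}_{k\in I}\sH_k$, apply the coordinate representation \eqref{eq:Phiincoordinates} inside each class, and collect the resulting basis products into the set $Z$. The only difference is cosmetic: the paper re-derives the splitting $\Psi=\sum_C\Psi_C$ explicitly by taking an approximating Cauchy sequence $(\Psi^{(r)})\subset\prod'^\otimes_{k\in I}\sH_k$, projecting each $\Psi^{(r)}$ onto the classes $C$, and passing to the limit, whereas you invoke the direct-sum decomposition as a known fact from von Neumann's construction; the remaining steps (coordinate expansion, orthonormality of $Z$, uniqueness via Proposition~\ref{prop:cksequence}) are identical.
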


So $ Z $ is an orthonormal basis of $ \widehat{\sH} $ that might be uncountable, but the elements $ \Psi \in \widehat{\sH} $ are all countable linear combinations with coefficient sequences in $ \ell^2 $.\\

\begin{proof}
	By definition, any $ \Psi \in \widehat{\sH} $ can be approximated by a Cauchy sequence $ (\Psi^{(r)})_{r \in \NNN} \subseteq \widetilde{\prod}^{\otimes}_{k \in I} \sH_k, \Psi^{(r)} \to \Psi $. So each $ \Psi^{(r)} $ can be written as a finite linear combination of $ C $-sequences. All $ C $-sequences, that are no $ C_0 $-sequences, must have norm 0, so we drop them and simply write
\begin{equation}
	\Psi^{(r)} = \sum_{\ell = 1}^{L_r} \Psi^{(r)}_{\ell} \;,
\end{equation}
with $ (\Psi^{(r)}_{\ell}) $ being $ C_0 $-sequences. Now, the $ C_0 $-sequences decay into mutually orthogonal equivalence classes $ C $, out of which countably many are occupied by any $ \Psi^{(r)} $. So
\begin{equation}
	\Psi^{(r)} = \sum_C \sum_{\ell :(\Psi^{(r)}_{\ell}) \in C} \Psi^{(r)}_{\ell} =: \sum_{C} \Psi^{(r)}_C \;.
\label{eq:PsirC}
\end{equation}
By orthogonality of the subspaces $ \Psi \in \prod^{\otimes C}_{k \in I} \sH_k $, we have
\begin{equation}
	\Vert \Psi^{(r)} - \Psi^{(s)} \Vert^2 = \sum_C \Vert \Psi^{(r)}_C - \Psi^{(s)}_C \Vert^2 \;.
\end{equation}
$ (\Psi^{(r)})_{r \in \NNN} $ is a Cauchy sequence, so $ (\Psi^{(r)}_C)_{r \in \NNN} $ is also a Cauchy sequence for all $ C $. That means, the limit $ \Psi_C = \lim_{r \to \infty} \Psi^{(r)}_C $ exists and by orthogonality of the $ \Psi^{(r)}_C $ for each $ r $,
\begin{equation}
	\lim_{r \to \infty} \sum_C \Psi^{(r)}_C = \sum_C \lim_{r \to \infty} \Psi^{(r)}_C \qquad \Leftrightarrow \qquad \Psi = \sum_C \Psi_C \;.
\end{equation}
We may now write $ \Psi_C $ in coordinates:
\begin{equation}
	\Psi_C
	= \lim_{r \to \infty} \Psi^{(r)}_C
	= \lim_{r \to \infty} \sum_{\ell :(\Psi^{(r)}_{\ell}) \in C} \Psi^{(r)}_{\ell}
	= \sum_{n(\cdot) \in F} a_C(n(\cdot)) \prod^\otimes_{k \in I} e_{k,n(k)} \;,
\end{equation}
so also $ \Psi $ can be written as a countable sum over mutually orthogonal, normalized $ C_0 $-sequences with coordinates $ a_C(n(\cdot)) $. We index the sequences and coordinates by $ (\Psi^{(m)}) $ and $ d_m $, which yields the desired form \eqref{eq:psilinearcombination}.\\
Square summability of the $ d_m $ can be seen by
\begin{equation}
	\sum_m |d_m|^2 = \sum_C \sum_{n(\cdot) \in F} |a_C(n(\cdot))|^2 \;.
\end{equation}

Now, the set $ Z = \{ \Psi^{(a)} \}_{a \in A} $ is exactly the union of all vectors $ \prod^\otimes_{k \in I} e_{k,n(k)} $ over all classes $ C $, which is indeed a set of mutually orthogonal, normalized vectors.\\
Uniqueness of the decomposition follows by orthogonality of the $ \Psi^{(r)} $.\\
\end{proof}

Further, by \cite[Def. 6.1.1]{vonNeumann1939}, two $ C_0 $-sequences $ (\Psi), (\Phi) $ are weakly equivalent, if and only if there exists a family $ (z_k)_{k \in I} \subseteq \CCC $ with $ (z_k \Psi_k)_{k \in I} $ being (strongly) equivalent to $ (\Phi_k)_{k \in I} $. From that, we conclude:

\begin{lemma}
Let $ C_w $ be the weak equivalence class of a $ C_0 $-sequence $ (\Phi) = (\Phi_k)_{k \in I} $, choose an orthonormal basis $ (e_{k,n})_{n \in \NNN_0} $ for each $ \sH_k $ such that $ \Phi_k = c_{k,0} $ and define for any $ C_0 $-sequence $ (\Psi) = (\Psi_k)_{k \in I} $ the coordinates $ c_{k,n} := \langle e_{k,n}, \Psi_k \rangle_k $.\\
Then, $ \prod^{\otimes C_w}_{k \in I} \sH_k $ is exactly the closure of the span of all normalized $ C_0 $-sequences, where $ |c_{k,0}| = 1 $ for all but finitely many $ k \in I $.
\label{lem:weakequivalence}
\end{lemma}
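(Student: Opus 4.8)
The plan is as follows. The statement is the weak--equivalence analogue of von Neumann's Theorem V \cite[Thm. V]{vonneumann} (quoted above for strong equivalence), so I would imitate its proof, replacing ``$c_{k,0}=1$'' by ``$|c_{k,0}|=1$'' and inserting phase factors wherever a phase would otherwise be lost. Write $V$ for the closed span of all normalized $C_0$--sequences $(\Psi)$ with $|c_{k,0}|=1$ for all but finitely many $k\in I$, and recall that by definition $\prod^{\otimes C_w}_{k\in I}\sH_k=\overline{\vecspan\{\iota((\Psi))\;\mid\;(\Psi)\in C_w\}}$. One has to show $V=\prod^{\otimes C_w}_{k\in I}\sH_k$.

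First I would do the inclusion $V\subseteq\prod^{\otimes C_w}_{k\in I}\sH_k$. Let $(\Psi)$ be a normalized $C_0$--sequence with $|c_{k,0}|=|\langle e_{k,0},\Psi_k\rangle_k|=1$ for $k$ outside a finite set $J_0$. Since $\Vert e_{k,0}\Vert_k=\Vert\Psi_k\Vert_k=1$, the equality case of Cauchy--Schwarz forces $\Psi_k=c_{k,0}e_{k,0}$ with $|c_{k,0}|=1$ for $k\notin J_0$, so $|\langle\Psi_k,\Phi_k\rangle_k|=1$ for all but finitely many $k$; hence $\sum_{k\in I}\big||\langle\Psi_k,\Phi_k\rangle_k|-1\big|<\infty$, i.e. $(\Psi)\in C_w$. (Equivalently one may invoke the characterization \cite[Def. 6.1.1]{vonneumann} recalled above: the family $z_k:=\overline{c_{k,0}}$ satisfies $z_k\Psi_k=\Phi_k$ off a finite set, hence $(z_k\Psi_k)\sim(\Phi)$.) Passing to the closed span gives $V\subseteq\prod^{\otimes C_w}_{k\in I}\sH_k$.

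For the reverse inclusion it suffices to show $\iota((\Psi))\in V$ for an arbitrary $C_0$--sequence $(\Psi)\in C_w$. If some $\Psi_k$ vanishes then $\iota((\Psi))=0\in V$; otherwise dividing each $\Psi_k$ by its norm changes $\iota((\Psi))$ only by a nonzero scalar and keeps the sequence in $C_w$, so I may take $(\Psi)$ normalized. Since $(\Psi)\in C_w$ we have $\sum_k\big||c_{k,0}|-1\big|<\infty$, which forces $|c_{k,0}|=1$ for all but countably many $k$ and $c_{k,0}\neq 0$ for all but finitely many $k$; put $\omega_k:=c_{k,0}/|c_{k,0}|$ where $c_{k,0}\neq 0$. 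Choose increasing finite sets $J_r\subseteq I$ whose union contains $\{k:c_{k,0}=0\}$ and exhausts the (countable) set $\{k:|c_{k,0}|\neq 1\}$, and set $\Psi^{(r)}_k:=\Psi_k$ for $k\in J_r$ and $\Psi^{(r)}_k:=\omega_k e_{k,0}$ for $k\notin J_r$. Then $(\Psi^{(r)})$ is a normalized $C_0$--sequence with $|c_{k,0}(\Psi^{(r)})|=1$ off $J_r$, so $\iota((\Psi^{(r)}))\in V$. The point of the factor $\omega_k$ is that $\langle\Psi_k,\omega_k e_{k,0}\rangle_k=|c_{k,0}|\ge 0$, so that
\[
\langle\iota((\Psi)),\iota((\Psi^{(r)}))\rangle=\prod_{k\in I}\langle\Psi_k,\Psi^{(r)}_k\rangle_k=\prod_{k\notin J_r}|c_{k,0}|
\]
is a genuinely convergent product of numbers in $(0,1]$; as $r\to\infty$ this tail tends to $1$, since only countably many of its factors differ from $1$ and the full product $\prod_k|c_{k,0}|$ converges. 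As $\Vert\iota((\Psi))\Vert=\Vert\iota((\Psi^{(r)}))\Vert=1$ and the inner product above is real, $\Vert\iota((\Psi))-\iota((\Psi^{(r)}))\Vert^2=2-2\langle\iota((\Psi)),\iota((\Psi^{(r)}))\rangle\to 0$, so $\iota((\Psi))\in V$; taking closed spans yields $\prod^{\otimes C_w}_{k\in I}\sH_k\subseteq V$.

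The only genuinely delicate point is this last convergence. One cannot simply truncate the tail to $e_{k,0}$ as in Theorem V, because the cross terms $\langle\Psi_k,e_{k,0}\rangle_k$ carry phases whose arguments need not be absolutely summable, so $\prod_k\langle\Psi_k,\Psi^{(r)}_k\rangle_k$ would be merely quasi--convergent; rotating the truncated factors by $\omega_k$ makes the relevant inner products nonnegative and restores honest convergence. The same place is where one uses — mildly — that $\{k:|c_{k,0}|\neq 1\}$ is countable and $\{k:c_{k,0}=0\}$ is finite, which is exactly what permits the choice of the exhausting sets $J_r$ even when $I$ is uncountable.
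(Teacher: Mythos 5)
Your proof is correct, and for the harder inclusion it takes a genuinely more direct route than the paper. The easy inclusion is the same in both arguments (Cauchy--Schwarz equality / phase rotation shows a generator of $V$ is weakly equivalent to $(\Phi)$). For $\prod^{\otimes C_w}_{k \in I}\sH_k \subseteq V$, the paper does not argue on generators: it takes a general $\Psi$ in the subspace, decomposes it via Lemma \ref{lem:psilinearcombination} into countably many orthogonal product vectors, multiplies each factor by phases $z^{(m)}_k$ to land in the strong class $C$ of $(\Phi)$, invokes von Neumann's characterization of $\prod^{\otimes C}_{k\in I}\sH_k$ (Thm.~V) to approximate there by sequences with $c_{k,0}=1$ almost everywhere, and rotates back by $(z^{(m)}_k)^{-1}$. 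You instead observe that it suffices to treat a single generator $\iota((\Psi))$, $(\Psi)\in C_w$, and approximate it explicitly by truncations whose tail is the phase--rotated vacuum $\omega_k e_{k,0}$, computing $\langle \iota((\Psi)),\iota((\Psi^{(r)}))\rangle = \prod_{k\notin J_r}|c_{k,0}|\to 1$. This avoids both the decomposition lemma and the appeal to Theorem V, and it isolates exactly where the phases matter (your $\omega_k$ plays the role of the paper's $z_k$, but inside the approximant rather than as a conjugation of the whole sequence); the paper's version, by contrast, recycles existing machinery and needs no norm computation. One cosmetic point: $\omega_k$ is undefined at the finitely many $k$ with $c_{k,0}=0$, so you should require $J_r\supseteq\{k:c_{k,0}=0\}$ for all $r$ (or set $\omega_k:=1$ there); since that set is finite and your $J_r$ increase to cover it, the limit argument is unaffected, but as written the approximants $\Psi^{(r)}$ are only defined for $r$ large enough.
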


Note that the last statement means $ c_{k,n} = 0 $ for $ n \ge 1 $ and for those $ k $. In simple words, Lemma \ref{lem:weakequivalence} asserts that replacing $ C $ by $ C_w $ in the equivalence class is done via replacing $ c_{k,0} = 1 $ by $ |c_{k,0}| = 1 $.\\

\begin{proof}
First, we prove that any normalized $ C_0 $-sequence $ (\Psi) $ with $ |c_{k,0}| = 1 $ almost everywhere is weakly equivalent to $ (\Phi) $: We can define a family of phase rotations $ |z_k| = 1 $, such that $ z_k c_{k,0} = 1 $ for all $ k $ with $ |c_{k,0}| = 1 $. So $ (z_k \Psi_k)_{k \in I} $ has $ z_k c_{k,0} = 1 $ almost everywhere and is hence a $ C_0 $-sequence strongly equivalent to $ (\Phi) $. Therefore, $ (\Psi) $ is weakly equivalent to $ (\Phi) $. So $ \Psi \in \prod^{\otimes C_w}_{k \in I} \sH_k $ and the same holds for the span of these $ C_0 $-sequences and their closure with respect to the Hilbert space topology on $ \widehat{\sH} $.\\
Conversely, any $ \Psi \in \prod^{\otimes C_w}_{k \in I} \sH_k $ is within the closure of the span of normalized $ C_0 $-sequences with $ |c_{k,0}| = 1 $ almost everywhere: By Lemma \ref{lem:psilinearcombination}, we may write
\begin{equation*}
	\Psi = \sum_{m \in \cM} d_m \prod^\otimes_{k \in I} \Psi^{(m)}_k \;,
\end{equation*}
where $ \sum_m |d_m|^2 < \infty $ and the $ (\Psi^{(m)}) = (\Psi^{(m)}_k)_{k \in I} $ with $ \Vert \Psi^{(m)} \Vert = 1 $ are orthogonal. Further, we may choose $ (\Psi^{(m)}) \sim_w (\Phi) $, since all $ (\Psi^{(m)}) $ were constructed to come from a (strong) equivalence class $ C $ contained within $ C_w $. So there exist families $ (z^{(m)}_k)_{k \in I}, \; |z^{(m)}_k| = 1 $, such that $ (z^{(m)}_k \Psi^{(m)}_k)_{k \in I} \sim (\Phi) $ for all $ m \in \cM $. By strong equivalence, we may approximate each $ (z^{(m)}_k \Psi^{(m)}_k) $ up to arbitrary precision $ \varepsilon > 0 $ by a linear combination of (normalized) families $ \Psi^{(m,\varepsilon)} $, such that, when writing these families in coordinates, we have $ c^{(m,\varepsilon)}_{k,0} = 1 $ almost everywhere in $ k \in I $. Hence, the families $ ((z^{(m)}_k)^{-1} \Psi^{(m,\varepsilon)}_k)_{k \in I} $ approximate $ \Psi^{(m)} $ up to precision $ \varepsilon $. They satisfy $ |(z^{(m)}_k)^{-1} c^{(m,\varepsilon)}_{k,0}| = 1 $ almost everywhere, so $ \Psi^{(m)} $ can be approximated up to arbitrary precision by a linear combination of $ C_0 $-sequences with the above-mentioned property. And by \eqref{eq:psilinearcombination} and convergence of $ |d_m|^2 $, also an arbitrary approximation of $ \Psi $ is possible by linear combinations of normalized $ C_0 $-sequences with $ |c_{k,0}| = 1 $ almost everywhere.\\

\end{proof}

\section{Extended State Space}
\label{app:ESS}

As mentioned in the introduction, it is also possible to achieve results similar to Theorems~\ref{thm:bosoniccountable} and \ref{thm:fermioniccountable} in the ``Extended State Space'' setting from~\cite{lill,lillproc}. In this appendix, we define extensions $ \FB, \FB_{\ex} $ adapted to $ v^* v $ having discrete spectrum, so the definitions are simpler than in~\cite{lill,lillproc}.

\subsection{ESS Construction}
\label{subsec:extendedstatespace}

We start by defining
\begin{itemize}
\item The space of \textbf{generalized one-particle wave functions}
\begin{equation}
	\cE = \cE(\NNN) = \CCC^\NNN := \{ \bphi: \NNN \to \CCC \} \;,
\label{eq:cEej}
\end{equation}
that is, the space of complex sequences $ \bphi = (\phi_j)_{j \in \NNN} $. It extends $ \cD_{\bg} $ (bosonic) or $ \cD_{\boeta} $ (fermionic) and replaces the smooth function space $ \Sdot_1 $ from~\cite{lill}.

\item The space of \textbf{generalized $ N $-particle wave functions}
\begin{equation}
	\cE^{(N)}(\NNN) := \{ \Psi: \NNN^N \to \CCC \} \;.
\label{eq:cENA}
\end{equation}

\item The space of \textbf{generalized Fock space functions} (which replaces $ \Sdot_\sF $ from \cite{lill}):
\begin{equation}
	\cE_\sF(\NNN) := \bigoplus_{N = 0}^\infty \cE^{(N)}(\NNN) = \{ \Psi : \cQ(\NNN) \to \CCC \} \;.
\end{equation}
\end{itemize}

$ \cE_\sF $ is \textit{not} yet the final extended state space. Formal state vectors occurring in QFT include products of functions $ \Psi_m \in \cE_\sF $ and exponentials of divergent sums $ e^\fr $ with ``$ \fr = \pm \infty $'', see \cite{lill}. Hence, we need to construct structures accommodating such infinite quantities.\\
First, we introduce a \textbf{space of renormalization factors} $ \Ren_1(\NNN) $, whose elements are sequences that represent formal (and possibly divergent) series, thus replacing the formal (and possibly divergent) integrals from $ \Ren_1 $ in \cite{lill}.

\begin{itemize}
\item $ \Ren_1(\NNN) := \cE /_{\sim_{\Ren_1}} $. Here, for $ \br_1, \br_2 \in \cE(\NNN) $, we define $ \br_1 \sim_{\Ren_1} \br_2 $ if and only if $ (\br_1 - \br_2) \in \ell^1 = L^1(\NNN) $ and $ \sum_{j \in \NNN} (r_{1,j} - r_{2,j}) = 0 $.
\end{itemize}
We denote elements of $ \Ren_1 $ by $ \fr = [\br] $ and identify $ \fr = \sum_{j \in \NNN} r_j \in \CCC $, if $ \br \in \ell^1 $. All $ \Ren_1 $-elements not identifiable with a $ \CCC $-number can be thought of as ``controlled infinitely large numbers''.\\
Multiplication of $ \Ren_1 $-elements is allowed by introducing the free vector spaces $ \Pol_P(\NNN) $, $ P \in \NNN $, called space of \textbf{renormalization polynomials of degree $ P $}, and defined as being spanned by all commutative products $ \fr_1 \cdot \ldots \cdot \fr_p, \; p \le P, \; \fr_j \in \Ren_1(\NNN) $. Again, we identify equivalent terms by modding out an equivalence relation:
\begin{itemize}
\item $ \Ren_P(\NNN) := \Pol_P(\NNN) /_{\sim_{\Ren_P}} $ with $ \sim_{\Ren_P} $ generated by both $ \fr_1 \fr_2 \ldots \fr_p \sim_{\Ren_P} c_1 \fr_2 \ldots \fr_p $ for $ \fr_1 = \sum_{j \in \NNN} r_{1,j} = c_1 \in \CCC $ and $ (c_1 c_2) \fr_1\ldots \fr_p \sim_{\Ren_P} c_1 (c_2 \fr_1) \ldots \fr_p $.
\end{itemize}
The space of \textbf{renormalization polynomials} is then given by
\begin{equation}
	\Ren(\NNN) := \bigcup_{P \in \NNN} \Ren_P(\NNN) \;.
\end{equation}

Wave function renormalization factors usually take the form $ e^\fr $, where $ \fr \in \Ren_1 $ may or may not correspond to a finite number, as well as linear combinations of such terms $ e^\fr $. We will interpret them as elements of a suitably defined field $ \eRen $. For this, consider the group algebra $ \CCC[\Ren_1] $, where the group is $ (\Ren_1, +) $ with addition as described above. We identify elements $ \fr \in \Ren_1 $ with the symbolic expression $ e^{\fr} $, both to distinguish the two additions in $ (\Ren_1, +) $ and $ \CCC[\Ren_1] $, as well as to be congruent with the intuition that $ e^{\fr} $ is an exponential of a finite or infinite number. Elements in $ \CCC[\Ren_1] $ are then denoted as $ c_1 e^{\fr_1} + \ldots + c_M e^{\fr_M}, \; c_j \in \CCC, \; \fr_j \in \Ren_1 $. Some elements of $ \CCC[\Ren_1] $ are intuitively equal to zero. We set them equivalent to zero by modding out an ideal $ \cI \subset \CCC[\Ren_1] $ generated by all elements $ e^c e^\fr - e^{c + \fr} $ with $ c \in \CCC, \; \fr \in \Ren_1 $.\\

As in \cite[Proposition~3.2]{lill} one proves that the quotient ring $ \CCC[\Ren_1]/\cI $ has no proper zero divisors. So the following quotient field exists, which is a field extension of $ \CCC $:
\begin{itemize}
\item $ \eRen(\NNN) := \{ \fc = a_1 / a_2 \; \mid \; a_1, a_2 \in \CCC[\Ren_1(\NNN)]/\cI \} $ is the \textbf{field of wave function renormalizations}.
\end{itemize}

The formal state vectors appearing in QFT are of the form $ \Psi = \sum_m \fc_m \Psi_m $ with $ \fc_m \in \eRen $ and $ \Psi_m \in \cE_\sF $. Those can be described as elements of the following space:
\begin{itemize}
\item $ \FB(\NNN) := \FB_0(\NNN) /_{\sim_{\rm F}} $ is the \textbf{first extended state space}, where $ \FB_0(\NNN) $ is the free $ \eRen $-vector space over $ \cE_\sF $ (all finite $ \eRen $-linear combinations) and $ \sim_{\rm F} $ is generated by $ (c \fc) \Psi_m \sim_{\rm F} \fc (c \Psi_m) $ for $ c \in \CCC $.
\end{itemize}

For intermediate calculations, we will need an even larger vector space $ \FB_{\ex} $, which allows for multiplication of $ \Psi $ by elements of $ \Ren $. We define $ \Ren^\cQ(\NNN) $ to consist of all functions $ \cQ(\NNN) \to \Ren $, which replaces the similarly-defined $ \Ren^{\Qdot} $ from \cite{lill}.\\

\begin{itemize}
\item $ \FB_{\ex}(\NNN) := \FB_{\ex,0}(\NNN) /_{\sim_{\rm Fex}} $ is the \textbf{second extended state space}, where $ \FB_{\ex,0}(\NNN) $ is the set of all countable $ \eRen $-linear combinations $ \Psi = \sum_{m \in \NNN} \fc_m \Psi_m $ with $ \fc_m \in \eRen(\NNN), \Psi_m \in \Ren^\cQ(\NNN) $ and where $ \sim_{\rm Fex} $ is generated by $ (c \fc) \Psi_m \sim_{\rm Fex} \fc (c \Psi_m) $ for $ c \in \CCC $.
\end{itemize}

We may embed the complex numbers into $ \Ren $ by identifying $ c \in \CCC $ with $ c e^0 \in \eRen $. Hence, the space $ \FB $ can be embedded into $ \FB_{\ex} $.\\

\begin{remark}[Comparison with non-standard analysis] \label{rem:nonstandard_analysis}
The construction of $ \Ren_1 $ might remind about that of the extended real line $ ^*\RRR $ in non-standard analysis~\cite[Sect.~1.1]{Albeverio1986}, since both extend $ \RRR $ by modding out an equivalence relation on sequences. One may therefore ask whether $ \Ren_1 $ can be naturally embedded into $ ^*\RRR $ or vice versa. This is not the case. First, note that $ \Ren_1 $ extends $ \CCC $ and not just $ \RRR $, while there are no elements corresponding to complex numbers in $ ^*\RRR $. Even when restricting to real sequences, an embedding would require to identify a sequence $ \br = (r_j)_{j \in \NNN} $, $ [\br] \in \Ren_1 $, with its series $ \bs(\br) := (s(\br)_j)_{j \in \NNN} $, $ s(\br)_j := \sum_{\ell = 1}^j r_\ell $ with respect to $ ^*\RRR $. This is needed to be consistent with the identification $ \br = c \in \RRR $ in $ \Ren_1 $ whenever $ \sum_j r_j = c < \infty $, as well as the identification $ \bs(\br) = c \in \RRR $ in $ ^*\RRR $, whenever $ s(\br)_j = c $ for all $ j > J $, $ J \in \NNN $.\\
Now, there always exists an infinite set $ A \subset \NNN $, such that $ \bs(\br_1) $ and $ \bs(\br_2) $ are equivalent with respect to $ ^*\RRR $ whenever they agree on $ \NNN \setminus A $. However, the difference between $ \br_1 $ and $ \br_2 $ on $ A $ may not be $ \ell^1 $-summable, rendering the two sequences inequivalent with respect to $ \Ren_1 $. So $ \Ren_1 $ cannot be naturally embedded into $ ^*\RRR $.\\
Conversely, $ ^*\RRR $ contains many inequivalent $ \bs(\br) $ converging to the same $ c \in \RRR $ (corresponding to infinitesimals), even with $ \ell^1 $-summable differences, while the corresponding $ \br $ are all equivalent with respect to $ \Ren_1 $.\\
\end{remark}

\subsection{Operator Lift and Implementation}
\label{subsec:finalESS}

Creation and annihilation operators $ a^\dagger(\bphi), a(\bphi) $ are defined on $ \Psi_m: \cQ(\NNN) \to \CCC $ in similarity to \eqref{eq:aadagger}. Formally,
\begin{equation}
	(a_\pm^\dagger(\bphi) \Psi_m)(q) = \sum_{k = 1}^N \frac{(\pm 1)^k}{\sqrt{N}} \phi_{j_k} \Psi_m(q \setminus j_k) \;, \qquad
	(a_\pm(\bphi) \Psi_m)(q) = \sqrt{N+1} \sum_j \overline{\phi_j} \Psi_m(q,j) \;.
\label{eq:aadaggerESS}
\end{equation}
Using that there is a natural distribution pairing between $ \cD $ and $ \cE $, one easily verifies the following well-definedness statement.
\begin{lemma}[Products of $ a^\dagger, a $ are well-defined on the ESS]\ \\
	Consider the ESS $ \FB $ built over $ \cE $. Then, \eqref{eq:aadaggerESS} uniquely defines operators $ a^\dagger(\bphi), a(\bphi) $ as follows:
\begin{equation}
\begin{aligned}
	&a^\dagger(\bphi): \FB \to \FB \;, \qquad
	&&a(\bphi): \FB \to \FB 
	&&\qquad \forall \bphi \in \cD \;, \\
	&a^\dagger(\bphi): \FB \to \FB \;, \qquad
	&&a(\bphi): \FB \to \FB_{\ex}
	&&\qquad \forall \bphi \in \cE \;.
\end{aligned}
\end{equation}
\label{lem:aadaggerexistESS}
\end{lemma}
Note that the CAR/CCR are a direct consequence of \eqref{eq:aadaggerESS} and hence still valid for the operator extensions.\\
The condition for $ \UUU_\cV: \cD_\sF \to \FB $ being an ``extended implementer'' is identical to the ITP case (Definition \ref{def:implementation}). Also Definition \ref{def:transformbogoliubovstate} for $ \UUU_\cV $, given a Bogoliubov vacuum $ \Omega_\cV \in \FB $, carries over from ITP to ESS. Then, Lemma \ref{lem:UUUcV} (well-definedness of $ \UUU_\cV $) and Lemma \ref{lem:implementation} are established analogously. By checking the four conditions as in the proofs of Theorems \ref{thm:bosoniccountable} and \ref{thm:fermioniccountable}, it is then straightforward to verify the following two results.

\begin{proposition}[Implementation via ESS works, bosonic]
	 Consider a bosonic Bogoliubov transformation $ \cV = \left( \begin{smallmatrix} u & v \\ \overline{v} & \overline{u} \end{smallmatrix} \right) $ with $ v^*v $ having countable spectrum. Let $ \FB $ be the ESS over $ \cE $ with respect to the basis $ (\bg_k)_{k \in \NNN} \subset \ell^2 $. Define the new vacuum vector
\begin{equation}
	\Omega_\cV = \underbrace{ \exp\left( \frac{1}{4} \sum_k \log\left(1 - \tfrac{\nu_k^2}{\mu_k^2} \right) \right) }_{=: e^\fr} \underbrace{ \exp{\left( -\sum_k \frac{\nu_k}{2 \mu_k} (a^\dagger(\bg_k))^2 \right) } }_{=: \Psi_\cV} \Omega = e^\fr \Psi_\cV \;,
\label{eq:bosonicbogoliubovvacuum2}
\end{equation}
	where $ \mu_k, \nu_k $ are the singular values of $ u, v $ as in Section \ref{subsec:implementation1}. Then, $ \cV $ is implemented in the sense of \eqref{eq:implementation} by $ \UUU_\cV: \cD_\sF \to \FB $ \eqref{eq:transformbogoliubovstate}.\\
\label{prop:bosoniccountableESS}
\end{proposition}

\begin{proposition}[Implementation via ESS works, fermionic]
	 Consider a fermionic Bogoliubov transformation $ \cV = \left( \begin{smallmatrix} u & v \\ \overline{v} & \overline{u} \end{smallmatrix} \right) $ with $ v^*v $ having countable spectrum. Let $ \FB $ be the ESS over $ \cE $ with respect to the basis $ (\boeta_j)_{j \in J} $, and let $ |J''_1| < \infty $, so \textbf{the number of modes with a full particle--hole transformation is finite}. Define the new vacuum vector
\begin{equation}
\begin{aligned}
	\Omega_\cV 
	= \underbrace{ \exp \left(\sum_{i \in I'} \log \alpha_i \right) }_{=: e^\fr} \underbrace{ \left( \prod_{j \in J''_1}^\otimes a^\dagger(\boeta_j) \right) \left( \prod_{i \in I'}^\otimes \left( 1 - \frac{\beta_i}{\alpha_i} a^\dagger(\boeta_{2i}) a^\dagger(\boeta_{2i-1}) \right) \right) }_{=: \Psi_\cV} \Omega = e^\fr \Psi_\cV \;,
\end{aligned}
\label{eq:fermionicbogoliubovvacuum2}
\end{equation}
	with $ \alpha_i, \beta_i $ being the singular values of $ u, v $ as in \eqref{eq:fermionictrafo3}. Then, $ \cV $ is implemented in the sense of \eqref{eq:implementation} by $ \UUU_\cV: \cD_\sF \to \FB $ \eqref{eq:transformbogoliubovstate}.\\
\label{prop:fermioniccountableESS}
\end{proposition}

In contrast to Theorem \ref{thm:fermioniccountable}, the additional condition of having finitely many modes with a particle--hole transformation comes from the requirement that $ \Omega_\cV $ be in $ \FB $.\\
Diagonalizability of Hamiltonians in an extended sense can then be defined for $ \FB $ exactly as for $ \widehat{\sH} $ (Definition~\ref{def:diagonalizable}). Finally, also Propositions \ref{prop:diagonalizableboson} and \ref{prop:diagonalizablefermion} for diagonalizability carry over from $ \widehat{\sH} $ to $ \FB $, with the restriction to finitely many particle--hole transformed modes in the fermionic case.\\

\end{appendices}
\bigskip

\noindent\textbf{Acknowledgments.}
I am grateful to Micha\l\; Wrochna, Roderich Tumulka, Andreas Deuchert, Jean-Bernard Bru and Niels Benedikter for helpful discussions.\\
This work was financially supported by the DAAD (Deutscher Akademischer Austauschdienst) and also by the Basque Government through the BERC 2018-2021 program and by the Ministry of Science, Innovation and Universities: BCAM Severo Ochoa accreditation SEV-2017-0718, as well as by the European Union (ERC FermiMath, grant agreement nr. 101040991). Views and opinions expressed are however those of the author(s) only and do not necessarily reflect those of the European Union or the European Research Council Executive Agency. Neither the European Union nor the granting authority can be held responsible for them. Moreover, it was partially supported by Gruppo Nazionale per la Fisica Matematica in Italy.\\

\noindent\textbf{Data Availability.}
No datasets were generated during this work.\\

\section*{Declarations}

\noindent\textbf{Conflict of Interest.}
The author has no conflicts to disclose.\\


\begin{thebibliography}{29}

\bibitem{Dedushenko2022}
M. Dedushenko:
	Snowmass White Paper: The Quest to Define QFT
	{\it int. J. Mod. Phys. A} {\bf 38(4)}: 2330002 (2023)
 	\url{https://arxiv.org/pdf/2203.08053.pdf}

\bibitem{Summers2012}
S.J. Summers:
	A Perspective on Constructive Quantum Field Theory
	{\it Preprint} (2012)
 	\url{https://arxiv.org/abs/1203.3991v2}

\bibitem{vonNeumann1939}
J.v. Neumann:
	On infinite direct products
	{\it Comp. Math.} {\bf 6}: 1–-77 (1939)

\bibitem{Chung1965}
V. Chung:
	Infrared Divergence in Quantum Electrodynamics
	{\it Phys. Rev.} {\bf 140}: B1110 (1965)

\bibitem{Kibble1986}
	T.W.B. Kibble:
	Coherent Soft-Photon States and Infrared Divergences. I. Classical Currents
	{\it J. Math. Phys.} {\bf 9}: 315 (1968)

\bibitem{KulishFaddeev1970}
	P.P. Kulish, L.D. Faddeev:
	Asymptotic conditions and infrared divergences in quantum electrodynamics
	{\it Theor. Math. Phys.} {\bf 4}: 745–-757 (1970)

\bibitem{Blanchard1969}
P. Blanchard:
	Discussion math\'ematique du mod\`e1e de Pauli et Fierz relatif \`a la catastrophe infrarouge
	{\it Commun. Math. Phys.} {\bf 15}: 156--172 (1969)

\bibitem{Froehlich1973}
J. Fröhlich:
	On the infrared problem in a model of scalar electrons and massless, scalar bosons
	{\it Ann. Henri Poincar\'e} {\bf 19}: 1--103 (1973)

\bibitem{KoenenbergMatte2014}
M. K\"onenberg, O. Matte:
	The mass shell in the semi-relativistic Pauli-Fierz model
	{\it Ann. Henri Poincar\'e} {\bf 15}: 863-–915 (2014)
	\url{https://arxiv.org/abs/1204.5123}
 
\bibitem{DerezinskiGerard2013}
J. Derezi\'nski, C. G\'erard:
	Mathematics of Quantization and Quantum Fields
	{\it Cambridge University Press} (2013)

\bibitem{Araki1968}
H. Araki:
	On the Diagonalization of a Bilinear Hamiltonian by a Bogoliubov Transformation
	{\it Publ. Res. Inst. Math. Sci.} {\bf 4(2)}: 387--412 (1968)

\bibitem{BachBru2016}
V. Bach, J.B. Bru:
	Diagonalizing Quadratic Bosonic Operators by Non-Autonomous Flow Equation
	{\it Mem Am Math Soc.} {\bf 240}: 1138 (2016)
	\url{https://arxiv.org/abs/1608.05591v1}

\bibitem{NamNapiorkowskiSolovej2016}
P.T. Nam, M.Napiórkowsi, J.P.Solovej:
	Diagonalization of bosonic quadratic Hamiltonians by Bogoliubov transformations
	{\it J. Funct. Anal.} {\bf 270(11)}: 4340--4368 (2016)		
	\url{https://arxiv.org/abs/1508.07321v2}

\bibitem{MatsuzawaSasakiUsami2021}
Y. Matsuzawa, I. Sasaki, K. Usami:
	Explicit diagonalization of pair interaction models
	{\it Anal. Math. Phys.} {\bf 11}: 48 (2021)
	\url{https://arxiv.org/abs/1910.13487}

\bibitem{Schlein2019}
B. Schlein:
	Bogoliubov excitation spectrum for Bose-Einstein condensates
	{\it Proceedings of the ICM 2018} 2669--2686 (2019)
	\url{https://arxiv.org/abs/1802.06662v1}

\bibitem{Giacomelli2022short}
E.L. Giacomelli:
	Bogoliubov theory for the dilute Fermi gas in three dimensions
	{\it in: Proceedings of the Intensive Period ``INdAM Quantum Meetings (IQM22)'' at Politecnico di Milano} (2023)
	\url{https://arxiv.org/abs/2207.13618}

\bibitem{Giacomelli2022}
E.L. Giacomelli:
	An optimal upper bound for the dilute Fermi gas in three dimensions
	{\it J. Funct. Anal.} {\bf 285(8)}: 110073 (2023)
	\url{https://arxiv.org/abs/2212.11832}

\bibitem{FalconiGiacomelliHainzlPorta2021}
M. Falconi, E.L. Giacomelli, C. Hainzl, M. Porta:
	The Dilute Fermi Gas via Bogoliubov Theory
	{\it Ann. Henri Poincar\'e} {\bf 22}: 2283–-2353 (2021)
	\url{https://arxiv.org/abs/2006.00491v1}

\bibitem{Shale1962}
D. Shale:
	Linear Symmetries of Free Boson Fields
	{\it Trans. Amer. Math. Soc.} {\bf 103}: 149--167 (1962)

\bibitem{Ruijsenaars1977}
S.N.M. Ruijsenaars:
	On Bogoliubov transformations for systems of relativistic charged particles
	{\it J. Math. Phys.} {\bf 18}: 517 (1977)

\bibitem{TorreVaradarajan1999}
C.G. Torre, M. Varadarajan:
	Functional Evolution of Free Quantum Fields
	{\it Classical and Quantum Gravity} {\bf 16}: 2651--2668 (1999)
	\url{https://arxiv.org/abs/hep-th/9811222}

\bibitem{Marecki2003}
P. Marecki:
	Quantum Electrodynamics on Background External Fields
	{\it PhD Thesis}, University of Hamburg (2003)
	\url{https://arxiv.org/pdf/hep-th/0312304.pdf}

\bibitem{BardeenCooperSchrieffer1957}
J. Bardeen, L.N. Cooper, J.R. Schrieffer:
	Theory of Superconductivity
	{\it Phys. Rev.} {\bf 108}: 1175 (1957)
	
\bibitem{Haag1962}
R. Haag:
	The Mathematical Structure of the Bardeen-Cooper-Schrieffer Model
	{\it Nuovo Cimento} {\bf 25(2)}: 287--299 (1962)

\bibitem{Derezinski2017}
J. Derezi\'nki:					
	Bosonic quadratic Hamiltonians
	{\it J. Math. Phys.} {\bf 58}: 121101 (2017)
	\url{https://arxiv.org/abs/1608.03289v2}

\bibitem{lill}
S. Lill:
	Extended State Space for describing renormalized Fock spaces in QFT
	{\it Rev. Math. Phys.} {\bf 36(8)}: 2450029 (2024)
 	\url{https://arxiv.org/abs/2012.12608}

\bibitem{lillproc}
S. Lill:
	Bogoliubov Transformations Beyond Shale--Stinespring: Generic $ v^* v $ for bosons, In: Proceedings of the Intensive Period ``INdAM Quantum Meetings (IQM22)'' (2023)
 	\url{https://arxiv.org/abs/2208.03487}

\bibitem{Nelson1964}
E. Nelson:
	Interaction of Nonrelativistic Particles with a Quantized Scalar Field
	{\it J. Math. Phys.} {\bf 5}: 1190 (1964)
	
\bibitem{Glimm1968}
J. Glimm:
	Boson fields with the :$\Phi^4$: Interaction in three dimensions
	{\it Comm. Math. Phys.} {\bf 10}: 1--47 (1968)
	
\bibitem{Derezinski2003}
J. Derezi\'nski:
	Van Hove Hamiltonians – Exactly Solvable Models of the Infrared and Ultraviolet Problem
 	{\it Ann. Henri Poincar\'e} {\bf 4}: 713–-738 (2003)

\bibitem{GlimmJaffeI}
J. Glimm, A. Jaffe:
	The $ \lambda (\phi)^4_2 $ quantum field theory without cutoffs. I
	{\it Phys. Rev.} {\bf 176}: 1945 (1968)

\bibitem{GlimmJaffeY2I}
J. Glimm, A. Jaffe:
	Self-Adjointness of the Yukawa$_2$, Hamiltonian
	{\it Annals of Physics} {\bf 60}: 321--383 (1970)
	
\bibitem{Gross1973}
L. Gross:
	The Relativistic Polaron without Cutoffs
	{\it Comm. Math. Phys.} {\bf 31}: 25--73 (1973)

\bibitem{I00b}
S. Teufel, R. Tumulka:
	Avoiding Ultraviolet Divergence by Means of Interior--Boundary Conditions
	\textit{in F. Finster et al. (editors): Quantum Mathematical Physics, Springer} 293--311 (2016)
	\url{https://arxiv.org/abs/1506.00497}

\bibitem{I02}
J. Lampart, J. Schmidt, S. Teufel, R. Tumulka:
	Particle Creation at a Point Source by Means of Interior--Boundary Conditions
	\textit{Math. Phys. Anal. Geom.} \textbf{21}: 12 (2018)
	\url{https://arxiv.org/abs/1703.04476}

\bibitem{I03}
J. Lampart, J. Schmidt:
	On Nelson--type Hamiltonians and abstract boundary conditions
	\textit{Commun. Math. Phys.} \textbf{367}: 629–-663 (2019)
	\url{https://arxiv.org/abs/1803.00872}

\bibitem{I04}
J. Lampart:
	A nonrelativistic quantum field theory with point interactions in three dimensions
	\textit{Ann. Henri Poincar\'e} \textbf{20}: 3509–-3541 (2019)
	\url{https://arxiv.org/abs/1804.08295}

\bibitem{I06}
J. Schmidt:
	On a Direct Description of Pseudorelativistic Nelson Hamiltonians
	\textit{J. Math. Phys.} \textbf{60}: 102303 (2019)
	\url{https://arxiv.org/abs/1810.03313}

\bibitem{I05}
M. Lienert, L. Nickel:
	Multi--time formulation of particle creation and annihilation via interior--boundary conditions
	\textit{Rev. Math. Phys.} \textbf{32(2)}: 2050004 (2020)
	\url{https://arxiv.org/abs/1808.04192}

\bibitem{I11}
J. Henheik, R. Tumulka:
	Interior--Boundary Conditions for the Dirac Equation at Point Sources in 3 Dimensions
	\textit{J. Math. Phys.} \textbf{63}: 122302 (2020)
	\url{https://arxiv.org/abs/2006.16755}

\bibitem{I10}
J. Lampart:
	The Renormalised Bogoliubov--Fröhlich Hamiltonian
	\textit{J. Math. Phys.} \textbf{61}: 101902 (2020)
	\url{https://arxiv.org/abs/1909.02430}

\bibitem{I07}
J. Schmidt:
	The Massless Nelson Hamiltonian and its Domain
	\textit{in A. Michelangeli (editor): Mathematical Challenges of Zero-Range Physics, Springer} 57--80 (2021)
	\url{https://arxiv.org/abs/1901.05751}

\bibitem{Haag1996}
R. Haag:
	Local Quantum Physics
	{\it Springer} (1996)\\


\bibitem{BerezanskyKondratiev}
Yu.M. Berezansky, Y.G. Kondratiev:
	Spectral Methods in Infinite-Dimensional Analysis
	{\it Springer} (1995)

\bibitem{Solovej2014}
J.P. Solovej:
	Many Body Quantum Mechanics
	{\it lecture notes} \url{http://web.math.ku.dk/~solovej/MANYBODY/mbnotes-ptn-5-3-14.pdf} (2014), visited on Jul 1, 2021
	
\bibitem{BachLiebSolovej1994}
V. Bach, E.H. Lieb, J.P. Solovej:
	Generalized Hartree-Fock Theory and the Hubbard Model
	{\it J. Stat. Phys.} {\bf 76}: 3--89 (1994)
	\url{https://arxiv.org/abs/cond-mat/9312044}

\bibitem{Hall2013}
B.C. Hall:
	Quantum Theory for Mathematicians
	{\it Springer} (2013)

\bibitem{reedsimon1}
M. Reed and B. Simon:
	Methods of Modern Mathematical Physics I: Functional Analysis
	{\it Academic Press} (1980)

\bibitem{Sanders2012}
K. Sanders:
	Essential self-adjointness of Wick squares in quasi-free Hadamard representations on curved spacetimes
	{\it J. Math. Phys.} {\bf 53}: 042502 (2012)
	\url{https://arxiv.org/abs/1010.3978}

\bibitem{HS16}
C. Hainzl, R. Seiringer:
	The Bardeen–Cooper–Schrieffer functional of superconductivity and its mathematical properties.
	{\it J. Math. Phys.} {\bf 57}: 021101 (2016)

\bibitem{HLR24}
J. Henheik, A.B. Lauritsen, B. Roos:
	Universality in low-dimensional BCS theory.
	{\it Rev. Math. Phys.} {\bf 36}(9): 2360005 (2024)

\bibitem{Albeverio1986}
S. Albeverio, J.E. Fenstad , R. H\o egh-Krohn, T. Lindstr\o m:
	Nonstandard Methods in Stochastic Analysis and Mathematical Physics
	{\it Academic Press} (1986)

%
%


%



\end{thebibliography}
\end{document}